\documentclass{amsproc}
\usepackage[utf8]{inputenc}
\pdfminorversion=5
\usepackage{amsmath,amsthm,amsfonts,amscd,amssymb,amstext}
\usepackage{mathtools}

\numberwithin{equation}{section}
\usepackage{booktabs}
\usepackage{esint}
\usepackage{relsize}
\usepackage{graphicx,color}
\usepackage{tabularx}
\usepackage{titletoc}
\usepackage{microtype}
\usepackage{caption}
\usepackage{subcaption}
\usepackage[colorlinks=true,linkcolor=black,citecolor=black,urlcolor=blue]{hyperref}
\usepackage{listings}
\usepackage{cite}
\usepackage{url}
\usepackage{alltt}
\usepackage{tikz}
\usepackage{bbm}
\usetikzlibrary{arrows,cd,decorations.markings,positioning}
\usepackage{tkz-graph}
\tikzstyle{vecArrow} = [thick, decoration={markings,mark=at position
   1 with {\arrow[semithick]{open triangle 60}}},
   double distance=1.4pt, shorten >= 5.5pt,
   preaction = {decorate},
   postaction = {draw,line width=1.4pt, white,shorten >= 4.5pt}]
\tikzstyle{innerWhite} = [semithick, white,line width=1.4pt, shorten >= 4.5pt]

\newtheorem{theorem}{Theorem}
\newenvironment{thmbis}[1]
  {%
   \addtocounter{theorem}{-1}%
   \begin{theorem}}
  {\end{theorem}}
\numberwithin{theorem}{section}

\newtheorem{lem}[theorem]{Lemma}
\newtheorem{prop}[theorem]{Proposition}
\newtheorem{cor}[theorem]{Corollary}

\newtheorem*{conj}{Conjecture}

\newtheorem*{rmk}{Remark}
\theoremstyle{remark}

\newcommand{\barr}{\begin{eqnarray}}
\newcommand{\earr}{\end{eqnarray}}
\newcommand{\be}{\begin{equation}}
\newcommand{\ee}{\end{equation}}

\newcommand{\de}{\operatorname{d}}

\newcommand{\numberset}{\mathbb}
\newcommand{\N}{\numberset{N}}
\newcommand{\Z}{\numberset{Z}}
\newcommand{\R}{\numberset{R}}
\newcommand{\C}{\numberset{C}}
\newcommand{\Q}{\numberset{Q}}
\newcommand{\Hh}{\numberset{H}}


\newcommand{\tr}{\operatorname{Tr}}
\newcommand{\Tr}{\operatorname{Tr}}

\newcommand{\E}{\mathbb{E}}
\newcommand{\MM}{\mathcal{M}}

\newcommand{\GUE}{\mathrm{GUE}}
\newcommand{\LUE}{\mathrm{LUE}}
\newcommand{\JUE}{\mathrm{JUE}}
\newcommand{\GSE}{\mathrm{GSE}}
\newcommand{\LSE}{\mathrm{LSE}}
\newcommand{\JSE}{\mathrm{JSE}}
\newcommand{\GOE}{\mathrm{GOE}}
\newcommand{\LOE}{\mathrm{LOE}}
\newcommand{\JOE}{\mathrm{JOE}}

\begin{document}

\title[Moments of random matrices and hypergeometric OP's]
{Moments of random matrices and\\ hypergeometric orthogonal polynomials}

\author[Cunden]{Fabio Deelan Cunden}
\address[F. D. Cunden]{\newline
School of Mathematics and Statistics, University College Dublin, Belfield, Dublin 4, Ireland}

\author[Mezzadri]{Francesco Mezzadri}
\address[F. Mezzadri]{\newline School of Mathematics, University of Bristol, University Walk, Bristol BS8 1TW, United Kingdom}

\author[O'Connell]{Neil O'Connell}
\address[N. O'Connell]{\newline
School of Mathematics and Statistics, University College Dublin, Belfield, Dublin 4, Ireland}

\author[Simm]{Nick Simm}
\address[N. Simm]{\newline Mathematics Department, University of Sussex, Brighton, BN1 9RH, United Kingdom}

\date{\today}
\begin{abstract}
We establish a new connection between moments of $n \times n$ random matrices $X_n$ and hypergeometric orthogonal polynomials. Specifically, we consider moments $\mathbb{E}\Tr X_n^{-s}$ as a function of the complex variable $s \in \mathbb{C}$, whose analytic structure we describe completely. We discover several remarkable features, including a reflection symmetry (or functional equation),  zeros on a critical line in the complex plane, and orthogonality relations. An application of the theory resolves part of an integrality conjecture of Cunden \textit{et al.}~[F. D. Cunden, F. Mezzadri, N. J. Simm and P. Vivo, J. Math. Phys. 57 (2016)] on the time-delay matrix of chaotic cavities. In each of the classical ensembles of random matrix theory (Gaussian, Laguerre, Jacobi) we characterise the moments in terms of the Askey scheme of hypergeometric orthogonal polynomials. We also calculate the leading order $n\to\infty$ asymptotics of the moments and discuss their symmetries and zeroes.  We discuss aspects of these phenomena beyond the random matrix setting, including the Mellin transform of products and Wronskians of pairs of classical orthogonal polynomials. When the random matrix model has orthogonal or symplectic symmetry, we obtain a new duality formula relating their moments to hypergeometric orthogonal polynomials.
\end{abstract}

\maketitle
\tableofcontents

\section{Introduction}
\label{sec:intro}
In this paper we present a novel approach to the moments of the classical ensembles of random matrices.  Much of random matrix theory is devoted to moments $\E\Tr X_n^k$ ($k\in\N$) of random matrices of finite or asymptotically large size $n$. The Gaussian, Laguerre and Jacobi unitary ensembles have been extensively studied and virtually everything is known about the moments as functions of the matrix size $n$. In particular, for the GUE,  $\E\Tr X_n^k$ is a polynomial in $n$. This fact is a consequence of Wick's theorem, it is usually called `genus expansion', and it is at the heart of several successful theories such as the topological recursion~\cite{Ambjorn90,Eynard15}. For example, the $4$-th moment of GUE matrices of size $n$ is
\[
\frac{1}{n}\E\Tr X_n^8=14n^4+70n^2+21.
\]
\par
In contrast to the wealth of results on moments as functions of the size $n$, less attention has been devoted to them as functions of the order $k$.
One of the consequences is that some remarkable properties have been somehow missed.
The theory described in this paper is intended to fill this gap.
By looking at the moments as functions of $k$, we gain access to additional structure. 
Several results contained in this paper are in fact facets of the same phenomenon, which appears to be a new observation: \emph{moments $\E\Tr X_n^k$ of classical matrix ensembles, if properly normalized, are hypergeometric orthogonal polynomials as functions of $k$.}  For example, for a GUE random matrix of size $n=4$
\[
\frac{1}{(2k-1)!!}\E\Tr X_4^{2k}=\frac{4}{3}k^3+4k^2+\frac{20}{3}k+4,
\]
and this polynomial is actually a Meixner polynomial. In fact, the moments are essentially Meixner polynomials as functions of $(n-1)$, too.
\par
During our investigation it became natural to consider \emph{complex} moments $\E\Tr X_n^k$ ($k\in\C$) or, equivalently, averages of spectral zeta functions of random matrices.

\subsection{Spectral zeta functions of random matrices}
\label{sub:zeta}
There exist various generalizations of the Riemann  $\zeta$-function, associated with operator spectra and which are generically called \emph{spectral zeta functions}. 
\par
Consider a compact operator $A$ on a separable Hilbert space. Then $AA^*$ is a nonnegative operator, so that $|A|= AA^*$ makes sense. The singular values of $A$ are defined as the (nonzero) eigenvalues of $|A|$. If $A$ is self-adjoint with discrete spectrum $\lambda_1,\lambda_2,\dots$, the singular values are $|\lambda_1|,|\lambda_2|,\dots$. The  Dirichlet series representation of the Riemann zeta function $\zeta(s)$ suggests to define the spectral zeta function $\zeta_{A}(s)$ of the operator $A$ as the maximal analytic continuation of the series
\be
\sum_{j\geq 1}|\lambda_j|^{-s}\nonumber
\ee
(this is also called Minakshisundaram–Pleijel~\cite{MP49} zeta function of $A$.)
In this sense, the Riemann $\zeta(s)$ is the spectral zeta of the integer spectrum $\lambda_j=j$. Several authors have posed the question of how the `spectral' properties of Riemann's zeta function carry over (or not) to various spectral zeta functions~\cite{Voros87}; classical properties of the Riemann zeta function are:
\begin{enumerate}
\item Functional equation: the function $\xi(s)=\pi^{-s/2}\Gamma(s/2)\zeta(s)$ satisfies $\xi(1-s)=\xi(s)$;
\item Meromorphic structure: $\zeta(s)$ is analytic in $\C\setminus\{1\}$;
\item Special values: trivial zeros $\zeta(-2j)=0$ for $j=1,2,\dots$;
\item Complex zeros of $\zeta(s)$ and the Riemann hypothesis (RH): the complex Riemann zeros are in the critical strip $0<\operatorname{Re}(s)<1$ and enjoy the reflection symmetries along the real axis and the line $\operatorname{Re}(s)=1/2$. It is  conjectured (RH) that the nontrivial zeros all lie on the critical line $\operatorname{Re}(s)=1/2$.
\end{enumerate}
\par
Let $X_n$ be a $n\small\times n$ random Hermitian matrix, and denote by $\lambda_1,\dots,\lambda_n$ its eigenvalues. Assume that, with full probability, $0$ is not in the spectrum of $X_n$ (this is certainly true for the classical ensembles of random matrices). We proceed to define the averaged spectral zeta function $\zeta_{X_n}(s)$  as the maximal analytic continuation of 
\be
\E\,\zeta_{X_n}(s)=\E\Tr |X_n|^{-s}=\E\sum_{j=1}^n|\lambda_j|^{-s}.\nonumber
\ee
Note that $\E\,\zeta_{X_n}(s)$ is not a random function. Much of the paper is devoted to pointing out the analytic structure of $\E\,\zeta_{X_n}(s)$  when $X_n$ comes from the Gaussian, Laguerre or Jacobi ensembles.

\subsection{Time-delay matrix of chaotic cavities}
\label{sub:RMT_QT}
Random matrix theory provides a mathematical framework to develop a statistical theory of quantum transport. This theory is believed to apply in particular to mesoscopic conductors confined in space, often referred as \emph{quantum dots}, connected to the environment through ideal leads. For these systems, Brouwer, Frahm and  Beenakker~\cite{BroFraBee97_99}, showed that the proper delay times are distributed as the inverse of the eigenvalues of matrices $X_n$ in the Laguerre ensemble (the size $n$ being the number of scattering channels) with Dyson index $\beta\in\{1,2,4\}$ labelling the classical symmetry classes, and parameter $\alpha=n$. 
\par
The moments of the proper delay times have been studied using both random matrix theory~\cite{Cunden14,Cunden16,Cunden16b,Grabsch15,Livan11,Simm11,Savin01,Savin01b} and semiclassical scattering orbit theory~\cite{Novaes15I,BerkKuip11,Sieber14}. Of course, the subject of moments on rather general matrix ensembles has been extensively studied. There is, however, one important complication here: the moments of the time-delay matrix are \emph{singular} spectral linear statistic on the Laguerre ensemble. (Invariant random matrix ensembles with singular potentials received considerable interest in recent years in mathematical physics, see, e.g.\ Refs.~\cite{Akemann14,Atkin16,Berry08,Chen10,Mezzadri09,Brightmore15}.)
\par
In~\cite{Cunden14,Cunden16,Cunden16b}, it was conjectured that the $1/n$-expansion of the cumulants of power traces for the time-delay matrix of quantum dots has positive integer coefficients. In this paper we prove that the conjecture is true for the first order cumulants, i.e. the moments, when $\beta=2$ (systems without broken time reversal symmetry).

\subsection{Mellin transform of orthogonal polynomials} 
\label{sub:Mellin} 
The averaged zeta function is related to the Mellin transform of the one-point correlation function\footnote{See Section \ref{sec:def} for the definition of the one-point function and the important identity \eqref{tracedens}.}. In the classical unitary invariant ensembles, by using the well-known determinantal formulae and Christoffel-Darboux formula, the one-point correlation function can be written as a Wronskian 
\be
\rho^{(2)}_n(x)=\sum_{j=0}^{n-1}\psi_j^2(x)=\frac{k_{n-1}}{k_n}\operatorname{Wr}(\psi_{n-1}(x),\psi_{n}(x)),
\nonumber
\ee
where $\psi_j(x)$ are the \emph{normalized} Hermite, Laguerre or Jacobi wavefunctions with leading coefficient $k_j$. (The superscript stands for $\beta=2$.)
The Mellin transform of a function $f(x)$ is defined by the integral
\begin{equation}
\MM\left[f(x);s\right] = \int_0^\infty f(x)x^{s-1}dx,
\nonumber
\end{equation}
when it exists. We set $f^*(s) = \MM\left[f(x);s\right]$. Of course, $\E\,\zeta_{X_n}(s)=\rho^{(2), *}_n(1-s)$. In all instances in this paper, the Mellin transforms have meromorphic extensions to all of $\C$ with simple poles (see Appendix~\ref{eq:app_Mellin}).
\par
Bump and Ng~\cite{Bump86} and Bump, Choi, Kurlberg and Vaaler~\cite{Bump00} made the remarkable observation that the Mellin transforms of Hermite and Laguerre functions $\psi_j(x)$ form families of orthogonal polynomials (OP's) and have zeros on the critical line $\operatorname{Re}(s)=1/2$. (Jacobi functions were not considered by them.) 
A few years later, Coffey~\cite{Coffey07} and Coffey and Lettington~\cite{Coffey15} pointed out that the polynomials described by Bump \textit{et al.} were hypergeometric OP's and investigated other families.
\par
Indeed, we show that for the classical matrix ensembles, the Mellin transform $\rho^{(2), *}_n(s)$ of a Wronskian of two adjacent wavefunctions is a hypergeometric OP (up to a factor containing ratios of Gamma functions). We stress that the proof does not go along the lines of the method of Bump \textit{et al.}. They started from the orthogonality of the classical wavefunctions which is preserved by the Mellin transform (a unitary operator in $L^2$). In our case, by explicit computations, we identify a discrete Sturm-Liouville (S-L) problem satisfied by $\rho^{(2), *}_n(s)$ (as a function of $s$) and this turns out to be the same S-L of the classical hypergeometric OP's.
We remark that the Mellin transforms $\rho^{(2), *}_n(s)$ of $\rho^{(2)}_n(x)$ do have a probabilistic meaning (moments of random matrices). The Mellin transforms $\psi_j^*(s)$ studied in~\cite{Bump86,Bump00}, while not unmotivated, do not have an obvious probabilistic interpretation.
\par
Once the analytic structure of the Mellin transform of $\operatorname{Wr}(\psi_n(x),\psi_{n+1}(x))$ was established, it became natural for us to look for similar polynomial properties for Wronskians of nonadjacent wavefunctions $\operatorname{Wr}(\psi_n(x),\psi_{n+k}(x))$, $k>1$. Such Wronskians do not have a random matrix interpretation. Nevertheless, they have a certain interest in mathematical physics as they appear when applying Darboux-Crum~\cite{Crum55} transformations on a Schr\"odinger operator to generate families of \emph{exceptional orthogonal polynomials}~\cite{Garcia-Ferrero15,Kuijlaars15,Odake11,Gomez-Ullate14}.

\subsection{Orthogonal and symplectic ensembles}
\label{sub:OS}
The theory developed for the classical ensembles of complex random matrices suggested to look for similar polynomial properties in the real and quaternionic cases (orthogonal and symplectic ensembles). 
\par
Now a fundamental insight came from recursion formulae satisfied by orthogonal / symplectic moments coupled with moments of the corresponding unitary ensembles (see~\cite{Ledoux09} in the Gaussian case and~\cite{Cunden16b} in the Laguerre ensemble).
It turns out that for the classical ensembles of random matrices with orthogonal and symplectic symmetries, certain combinations of moments satisfy three term recursion formulae which, again, correspond to the S-L equations defining families of hypergeometric OP's. Therefore, this combination of moments plays the role of the single moments in the unitary case: they satisfy three term recursions, have hypergeometric OP factors, reflection symmetries, zeros on a vertical line, etc. 
\par
The (single) moments of the symplectic ensembles do have polynomial factors, but these do not belong to the Askey scheme. In the orthogonal cases, we use a novel duality formula (based on the results by Adler \emph{et al.}~\cite{AFNvM}) to write the moments of real random matrices of odd dimension as quaternionic moments plus a remainder containing an orthogonal polynomial factor. 
\par
Coupling this result with a classical duality between orthogonal and symplectic ensembles, we discover a functional equation for moments of real random matrices.

\subsection{Outline} The paper has the following structure:
\begin{itemize}
\item In Section~\ref{sec:app} the physics motivations and application to quantum transport in chaotic cavities are presented;
\item In Section~\ref{sec:def} we set some notation and we recall the definition of the classical ensembles of random matrices and hypergeometric OP's;
\item In Section~\ref{sec:results} we present the main results along with their proofs for the Gaussian, Laguerre and Jacobi unitary ensembles;
\item In Section~\ref{sec:asymptotics} we discuss the large-$n$ asymptotics of the spectral zeta functions; 
\item In Section~\ref{sec:extension} we discuss the relation of our findings with earlier works on the Mellin transform of classical orthogonal polynomials; then, we extend our results beyond random matrix theory by considering Mellin transforms of products and Wronskians of generic pairs of orthogonal polynomials;
\item In Section~\ref{sec:further} we discuss the extension of duality formulae between moments of random matrices to higher order cumulants; 
\item In Section~\ref{sec:Orth_Sympl} we consider the classical orthogonal and symplectic ensembles and, in particular, present
a new duality formula relating their moments to hypergeometric orthogonal polynomials.
\end{itemize}
\section{Motivation and applications}
\label{sec:app}
One of the original motivations for this work was to make some progress on an integrality conjecture for the $1/n$-expansion of the Laguerre ensemble put forward in~\cite{Cunden14,Cunden16,Cunden16b}. The problem originated from a random matrix approach to quantum transport in chaotic cavities. 
\par
In this section, we will first present some of our findings on the Laguerre ensemble. Then we will briefly review the connection between the Laguerre ensemble and the time-delay matrix in chaotic cavities, and explain the applicability of our results to the integrality conjecture.

\subsection{The Laguerre ensemble: reciprocity law and spectral zeta function}
\label{sub:zeta2}
Let $X_n$ be a random matrix from the Laguerre Unitary Ensemble (LUE) with parameter $m\geq n$. That is, $X_n$ is distributed according to 
\be
d\mathbb{P}(X)=\frac{1}{Z}(\det X^{\alpha})\exp(-\Tr X)dX
\label{eq:LUEmeasure1}
\ee
on the space $\mathcal{P}_n$ of nonnegative Hermitian matrices, where $dX$ is Lebesgue measure on $\mathcal{P}_n\simeq\R^{n^2}$, $Z$ the
normalizing constant, and $\alpha=m-n$. 
\par
Consider for integer $k\in\N$, the (inverse) moments
\be
\E\Tr X_n^{-k}=\E\sum_{j=1}^n\lambda_j^{-k},
\ee
where $\lambda_1,\dots,\lambda_n$ are the eigenvalues of $X_n$. The above moments are finite if and only if  $k\leq\alpha$~\cite{Kumari17}. We will prove the following remarkable property of these moments.
\begin{prop}[Reciprocity law for LUE]
\be
\E\Tr X_n^{-(k+1)}=\left(\prod_{j=-k}^k\frac{1}{\alpha+j}\right)\E\Tr X_n^{k}.
\label{eq:dualityLUE}
\ee
\end{prop}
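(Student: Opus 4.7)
The identity can be recast by noting that $\prod_{j=-k}^{k}(\alpha+j)^{-1} = \Gamma(\alpha-k)/\Gamma(\alpha+k+1)$, so the proposition is equivalent to the statement that the quantity $\Gamma(\alpha-s+1)^{-1}\,\E\Tr X_n^{-s}$ takes the same value at $s=k+1$ and at $s=-k=1-(k+1)$. My plan is to establish this as a manifestation of the reflection symmetry $s\leftrightarrow 1-s$ of the meromorphic function $s\mapsto\E\Tr X_n^{-s}$, which is one of the central themes of the paper.

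First I would write, for $\operatorname{Re}(s)<\alpha+1$,
\[
\E\Tr X_n^{-s} \;=\; \sum_{j=0}^{n-1}\frac{j!}{\Gamma(j+\alpha+1)}\int_{0}^{\infty}x^{\alpha-s}e^{-x}\bigl[L_j^{(\alpha)}(x)\bigr]^{2}\,dx,
\]
by decomposing $\rho_n^{(2)}(x)$ into the sum of squared normalised Laguerre wavefunctions. The reciprocity law then reduces to the per-level identity
\[
\frac{1}{\Gamma(\alpha-k)}\int_{0}^{\infty} x^{\alpha-k-1}e^{-x}\bigl[L_j^{(\alpha)}(x)\bigr]^{2}\,dx
\;=\; \frac{1}{\Gamma(\alpha+k+1)}\int_{0}^{\infty} x^{\alpha+k}e^{-x}\bigl[L_j^{(\alpha)}(x)\bigr]^{2}\,dx
\]
for each $j=0,1,\ldots,n-1$; summing against $j!/\Gamma(j+\alpha+1)$ and multiplying through by $\Gamma(\alpha-k)$ delivers the proposition. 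Crucially, the Gamma-ratio prefactor is independent of $j$, which is what makes the collapse to a single closed-form work.

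To establish the per-level identity, I would define $F_j(c) = \Gamma(\alpha+c+1)^{-1}\int_{0}^{\infty} x^{\alpha+c}e^{-x}[L_j^{(\alpha)}(x)]^{2}\,dx$ and prove $F_j(c)=F_j(-c-1)$. Expanding $[L_j^{(\alpha)}(x)]^{2}$ via the explicit finite series for $L_j^{(\alpha)}$ and integrating term-by-term expresses $F_j(c)$ as a terminating ${}_3F_2$ at unit argument; the symmetry $c\leftrightarrow -c-1$ then follows from a Whipple/Sheppard-type transformation of balanced ${}_3F_2$'s. Alternatively, using the Rodrigues formula $L_j^{(\alpha)}(x)\,x^{\alpha}e^{-x} = (j!)^{-1}\tfrac{d^{j}}{dx^{j}}[x^{j+\alpha}e^{-x}]$ and integrating by parts $j$ times recasts both integrals as polynomial moments against the same Laguerre weight, from which the reflection symmetry can be extracted more directly.

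The main obstacle is precisely this per-level symmetry $F_j(c)=F_j(-c-1)$: it is easily verified in low cases (e.g.\ $F_0(c)=1$ and $F_1(c)=(\alpha+1)^{2}-(\alpha-c)(\alpha+c+1)$), but its general proof hinges on a nontrivial hypergeometric transformation and on the $j$-independence of the resulting Gamma-ratio prefactor. The pay-off is conceptual: this symmetry is the Laguerre-polynomial avatar of the functional equation for the continuous Hahn (or dual Hahn) polynomial that, as shown in Section~\ref{sec:results}, arises as the Mellin transform of $\rho_n^{(2)}(x)$. In that framework the reciprocity law is nothing other than the equality of two special values of a single hypergeometric orthogonal polynomial related by its intrinsic reflection symmetry.
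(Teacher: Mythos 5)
Your reduction is correct: since $\prod_{j=-k}^{k}(\alpha+j)=\Gamma(\alpha+k+1)/\Gamma(\alpha-k)$, the claim is exactly the functional equation $\xi_n(s)=\xi_n(1-s)$ evaluated at $s=k+1$, and your decomposition of $\rho_n^{(2)}$ into squared Laguerre wavefunctions correctly reduces this to the per-level symmetry $F_j(c)=F_j(-c-1)$, with the $j$-independence of the Gamma-ratio doing the work in the resummation. This is a genuinely different route from the paper's. The paper verifies the reciprocity law at the level of the full trace, using the Haagerup--Thorbj{\o}rnsen three-term recursion \eqref{eq:HT} extended to negative moments (equivalently, it reads the law off from the continuous dual Hahn representation of Theorem~\ref{thm:LUE}, proved from the hypergeometric formula \eqref{eq:LUE_hyper} plus Carlson's theorem). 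Your per-level statement is strictly stronger than what the Proposition needs, and it is in fact the $\ell=0$ case of Theorem~\ref{thm:Wronsk_LUE}(i): $F_j(c)=\frac{1}{(j!)^2}S_j\bigl(-(c+\tfrac12)^2;\tfrac12,\tfrac12,\alpha+\tfrac12\bigr)$, a polynomial in $(c+\tfrac12)^2$ and hence manifestly invariant under $c\to-c-1$. What your approach buys is a local (level-by-level) explanation of the reflection symmetry; what it costs is that you must prove a statement about individual wavefunctions rather than a single recursion for the trace.

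The one soft spot is the proof of the per-level identity itself, which you rightly flag as the main obstacle but do not close. Term-by-term integration gives $F_j(c)=\frac{[(-c)_j]^2}{(j!)^2}\,{}_3F_2\bigl(-j,-j,\alpha+c+1;\,-c-j,-c-j;\,1\bigr)$, and this series is \emph{not} balanced (Saalsch\"utzian) for generic $c$, so a ``transformation of balanced ${}_3F_2$'s'' does not apply as stated; you need one of the Thomae-type two-term relations for general terminating ${}_3F_2(1)$'s to pass to the manifestly symmetric form $\frac{(\alpha+1)_j}{j!}\,{}_3F_2\bigl(-j,-c,c+1;\,1,\alpha+1;\,1\bigr)$. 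Such a relation exists, so the route is viable, but it must be identified and verified; alternatively the cleanest repair is the paper's own argument for Theorem~\ref{thm:Wronsk_LUE}, which derives a three-term recurrence in $j$ for $F_j$ from the Laguerre recurrences and identifies the continuous dual Hahn Sturm--Liouville problem, making the evenness in $c+\tfrac12$ automatic. Your low-order checks ($F_0\equiv1$ and $F_1(c)=(\alpha+1)^2-(\alpha-c)(\alpha+c+1)$) are correct.
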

The above identity can be verified using the recurrence relation for  moments proved by Haagerup and Thorbj\o rnsen~\cite{Haagerup03} and extended to inverse moments in~\cite{Cunden16b}. This gives an explicit formula for the inverse moments given known identities for positive moments.
For instance,
\begin{align*}
\E\Tr X_n^{0}&=n &\E\Tr X_n^{-1}&=\frac{n}{\alpha} \\
\E\Tr X_n^{1}&=n^2+\alpha n &\E\Tr X_n^{-2}&=\frac{n^2+\alpha n}{(\alpha -1) \alpha  (\alpha +1)} \\
\E\Tr X_n^{2}&=2 n^3+3 \alpha  n^2+\alpha ^2 n &\E\Tr X_n^{-3}&=\frac{2 n^3+3 \alpha  n^2+\alpha ^2 n }{(\alpha -2) (\alpha -1) \alpha  (\alpha +1) (\alpha +2)} .
\end{align*}
\par
It is natural to consider complex moments or, equivalently, the averaged LUE spectral zeta function defined as  
\be
\E\zeta_{X_n}(s)=\E\Tr X_n^{-s}=\E\sum_{j}\lambda_j^{-s},\quad \text{for $\operatorname{Re}(s)\leq\alpha$},
\ee
and by analytic continuation for other values of $s$. We list below a few remarkable properties of the averaged LUE spectral zeta function.
\begin{enumerate}
\item Functional equation: the reciprocity law~\eqref{eq:dualityLUE} suggests to consider the function
\be
\xi_{n}(s)=\frac{1}{\Gamma(1+\alpha-s)}\E\,\zeta_{X_n}(s),
\ee
so that~\eqref{eq:dualityLUE} becomes the functional equation
\be
\xi_{n}(1-s)=\xi_{n}(s).
\ee
\item Analytic structure: it turns out that $\E\,\zeta_{X_n}(s)$ can be analytically extended to the whole complex plane; In particular, $\xi_{n}(s)$  is a polynomial of degree  $2(n-1)$. 
\item Special values: trivial zeros $\E\,\zeta_{X_n}(1+\alpha+j)=0$ for $j=1,2,\dots$;
\item Complex zeros and Riemann hypothesis: as for the Riemann zeta function, the set of complex zeros is symmetric with respect to reflections along the real axis and the critical line $\operatorname{Re}(s)=1/2$. It is tempting to ask whether a RH holds true for the averaged LUE zeta function. Amusingly, the answer is `Yes': \emph{the nontrivial zeros of $\E\,\zeta_{X_n}(s)$ all lie on the critical line $\operatorname{Re}(s)=1/2$}.
\end{enumerate}
These facts are an immediate consequences of the main results presented in Section~\ref{sec:results}. See Fig.~\ref{fig:LUEzeta} for a plot of the averaged LUE zeta function and Fig.~\ref{fig:LUEzeta2} for an illustration of the zeros.
\begin{rmk}
For any fixed $n$, the function $\zeta_{X_n}(s)$ is a finite sum of exponentials. Therefore, without taking the average, $\zeta_{X_n}(s)$ is a random analytic function in $\C$ and never vanishes.
\end{rmk}
\begin{figure}
\centering
\includegraphics[width=.8\columnwidth]{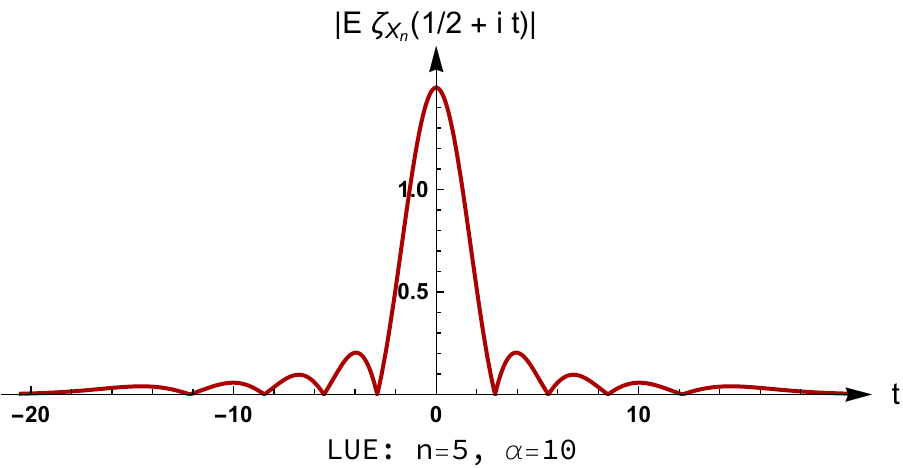}
\caption{Expected LUE spectral zeta function (modulus) on the critical line $s=1/2+it$. }
\label{fig:LUEzeta}
\end{figure}

\subsection{Application to quantum transport in chaotic cavities}
\label{sub:QT}
In~\cite{Cunden14,Cunden16}, it was proposed that, for $\beta\in\{1,2\}$, the cumulants of the time-delay matrix of a ballistic chaotic cavity have a $1/n$-expansion \emph{with positive integer coefficients} (similar to the genus expansion of Gaussian matrices). The precise conjectural statement is as follows.
\par
Consider the measure~\eqref{eq:LUEmeasure1} with $\alpha=n$, and the rescaled inverse power traces 
\begin{equation*}
\tau_k(n)=n^{k-1}\Tr X_n^{-k}\quad (k\geq0). 
\end{equation*}
It is known that the expectation of  $\tau_k(n)$ has a $1/n$-expansion
\begin{equation*}
\E\tau_{k}(n)=\sum_{g=0}^{\infty}\kappa_{g}(k)n^{-g}.
\end{equation*}
\begin{conj}[\cite{Cunden16b}]  
$\kappa_{g}(k)\in\N$.
\end{conj}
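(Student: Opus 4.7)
\emph{Proof proposal.} The strategy is to combine the reciprocity law of the preceding proposition with the classical ``positive'' genus expansion of the Laguerre moments: once both ingredients are expanded in $1/n$, the non-negativity of their coefficients is preserved by formal-series multiplication and transfers directly to the $\kappa_g(k)$.

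First, I apply \eqref{eq:dualityLUE} with $\alpha = n$ and shift the index by one. Using the factorisation $\prod_{j=-(k-1)}^{k-1}(n+j) = n\prod_{j=1}^{k-1}(n^2-j^2)$, one finds
$$\tau_k(n) \;=\; n^{k-1}\,\E\Tr X_n^{-k} \;=\; \frac{M_{k-1}(n)/n^{k}}{\prod_{j=1}^{k-1}\!\left(1 - j^2/n^2\right)}, \qquad M_{k-1}(n) := \E\Tr X_n^{k-1}.$$
This step is purely algebraic.

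Next I invoke a classical fact (from the Wick/Feynman-diagram expansion of Wishart moments, or equivalently from the Haagerup--Thorbj\o rnsen recursion~\cite{Haagerup03} together with its positive seed values): for every integer $k \geq 1$, $\E\Tr X_n^{k-1}$ is a polynomial in $n$ and $\alpha$ of degree $k$ with \emph{non-negative integer} coefficients, because each term enumerates a combinatorial object (a bipartite ribbon graph with $k-1$ edges). Specialising $\alpha = n$ preserves non-negativity, so $M_{k-1}(n)/n^{k}$ is a polynomial in $u := 1/n$ with non-negative integer coefficients.

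Now I expand the denominator as a product of geometric series in $1/n^2$:
$$\frac{1}{\prod_{j=1}^{k-1}(1 - j^2/n^2)} \;=\; \sum_{\ell \geq 0} h_\ell\!\left(1^2, 2^2, \ldots, (k-1)^2\right) \frac{1}{n^{2\ell}},$$
where $h_\ell$ is the complete homogeneous symmetric polynomial; evaluated at positive integers every coefficient is a positive integer. Multiplying the two formal series in $u = 1/n$, whose coefficients are both non-negative integers, produces
$$\tau_k(n) \;=\; \sum_{g \geq 0} \kappa_g(k)\,n^{-g}, \qquad \kappa_g(k) \in \N,$$
which is exactly the conjecture for first-order cumulants (the moments) at $\beta = 2$.

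The main obstacle is step two: the statement that $\E\Tr X_n^{k-1}$ has \emph{non-negative integer} coefficients in $n$ and $\alpha$, rather than merely integer or rational ones. Once that combinatorial input is in hand, the rest of the argument is elementary. Note that this approach does not extend to higher cumulants (addressed by the full conjecture), because there is no analogous two-moment reciprocity for cumulants; this explains why only the first-order case is resolved here.
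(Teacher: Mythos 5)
Your proposal is correct and follows essentially the same route as the paper: both apply the reciprocity law with $\alpha=n$, expand $\prod_{j}\left(1-j^2/n^2\right)^{-1}$ as geometric series with positive integer coefficients, and rely on the positive-integrality of the $1/n$-expansion of the positive moments, which the paper supplies via the Hanlon--Stanley--Stembridge permutation-sum formula \eqref{eq:moments_+ve} (exactly the combinatorial ``Wick/ribbon-graph'' input you flag as the key step). The only cosmetic difference is that the paper writes out the resulting double sum over $\sigma\in S_k$ and geometric-series indices explicitly rather than phrasing it as a product of two non-negative integer formal series.
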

\par
The conjecture was supported by a systematic computation of certain generating functions, and it is in agreement with the diagrammatic expansions of scattering orbit theory. The integrality of the  coefficients in the large-$n$ expansion has been also conjectured in the real case (LOE)~\cite{Cunden16b}, and for higher order cumulants~\cite{Cunden14,Cunden16}.
\par
The results reported in this paper resolve the conjecture in the complex case.  
\begin{theorem} The above conjecture is true. 
\end{theorem}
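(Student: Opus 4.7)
My plan is to reduce the integrality conjecture for inverse moments to a corresponding integrality property of positive moments, via the reciprocity law \eqref{eq:dualityLUE}. Specialising that identity to $\alpha = n$ and using $\prod_{j=-k}^{k}(n+j) = n\prod_{j=1}^{k}(n^2-j^2)$, one has
\begin{equation*}
\E\tau_{k+1}(n) = n^k\,\E\Tr X_n^{-(k+1)} = \frac{n^{k-1}\,M_k(n)}{\prod_{j=1}^{k}\bigl(n^2 - j^2\bigr)}, \qquad M_k(n) := \E\Tr X_n^k\big|_{\alpha = n}.
\end{equation*}
Thus, proving integrality of the $1/n$-expansion of $\E\tau_{k+1}(n)$ amounts to analysing this rational function of $n$ as $n\to\infty$.

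Two ingredients are needed. First, I would invoke the classical result that $\E\Tr X_n^k$ is a polynomial in $n$ and $m = n+\alpha$ with nonnegative integer coefficients, of total degree $k+1$ (the genus expansion of LUE moments, following Haagerup--Thorbj\o rnsen, or equivalently the interpretation as a weighted enumeration of bipartite maps on surfaces). Specialising $m = 2n$ gives $M_k(n) = \sum_{j=0}^{k+1} a_{k,j}\, n^j$ with $a_{k,j} \in \N$. Second, the denominator expands geometrically as
\begin{equation*}
\frac{1}{\prod_{j=1}^{k}(n^2-j^2)} = \frac{1}{n^{2k}}\sum_{g \geq 0} h_g(1^2, 2^2, \ldots, k^2)\,n^{-2g},
\end{equation*}
where $h_g$ is the complete homogeneous symmetric polynomial evaluated at positive integers, so $h_g(1^2,\ldots,k^2) \in \N$.

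Combining the two ingredients, $\E\tau_{k+1}(n)$ equals the Cauchy product of $M_k(n)/n^{k+1} = \sum_{\ell = 0}^{k+1} a_{k, k+1-\ell}\, n^{-\ell}$ and $\sum_{g \geq 0} h_g(1^2,\ldots,k^2)\, n^{-2g}$, two formal power series in $n^{-1}$ with coefficients in $\N$. Hence the coefficient $\kappa_g(k+1)$ of $n^{-g}$ is a finite sum of products of nonnegative integers and belongs to $\N$, as claimed. The substantive step is the first ingredient: one needs the \emph{joint} positive-integer structure of $\E\Tr X_n^k$ in $(n,m)$, not merely polynomiality in $n$ after fixing $\alpha = n$. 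Once this is granted, via either the Haagerup--Thorbj\o rnsen three-term recursion (whose coefficients are manifestly nonnegative integers) or the combinatorial map-enumeration model, the remainder of the argument is elementary bookkeeping on the two nonnegative series above.
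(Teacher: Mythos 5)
Your proposal is correct and follows essentially the same route as the paper: reciprocity law with $\alpha=n$, the positive-integer polynomial structure of the LUE positive moments in $(m,n)$ (the paper uses the Hanlon–Stanley–Stembridge permutation-sum formula, which is the same fact you invoke), and a geometric-series expansion of $\prod_{j=1}^k(1-j^2/n^2)^{-1}$ — your complete homogeneous symmetric polynomials $h_g(1^2,\dots,k^2)$ are just a tidier packaging of the paper's multi-index sum over $(i_1,\dots,i_k)$. No gaps.
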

\begin{proof} To prove the Theorem we take advantage of the reciprocity law to use known results for positive moments of the Laguerre ensemble. 
\par
 Let $X_n$ be in the LUE with parameter $\alpha=m-n$. For $k\geq0$, from~\cite[Corollary 2.4]{Hanlon92} (see also~\cite[Exercise 12]{Mingo17}) we read the formula
\be
\frac{1}{n^{k+1}}\E\Tr X_n^k= \sum_{\sigma\in S_k}n^{\#(\sigma)+\#(\gamma_k\sigma^{-1})-(k+1)}\left(\frac{m}{n}\right)^{\#(\sigma)},
\label{eq:moments_+ve}
\ee 
where $S_{k}$ is the symmetric group, and for a permutation $\sigma \in S_{k}$, $\#(\sigma)$ denotes the number of cycles in $\sigma$. By $\gamma_{k}$ we denote the $k$-cycle $(1\; 2\; 3\; \ldots\; k)$. 
If $m=cn$ with $c>0$, the above formula shows that $\frac{1}{n^{k+1}}\E\Tr X_n^k$ is a polynomial in $n^{-2}$ with positive coefficients (see Lemma~\ref{lem:Wishart} below). 

By the reciprocity law~\eqref{eq:dualityLUE} with $\alpha=n$ (or, equivalently, $c=2$)
\be
\E\tau_{k+1}(n)=\left(\prod_{j=1}^k\frac{1}{1-\frac{j^2}{n^2}}\right)\frac{1}{n^{k+1}}\E\Tr X_n^k.
\label{eq:tau}
\ee
The factor $\prod_{j=1}^k\left(1-\frac{j^2}{n^2}\right)^{-1}$ in~\eqref{eq:tau} is a product of geometric series. Therefore, we have
\be
\E\tau_{k+1}(n)=\sum_{\sigma\in S_k}\sum_{i_1,\dots,i_k=0}^{\infty}\left(2^{\#(\sigma)}\prod_{j=1}^kj^{2i_j}\right)n^{\#(\sigma)+\#(\gamma_k\sigma^{-1})-(k+1)-2(i_1+\cdots+i_k)},
\ee 
and this readily prove that $\E\tau_k$ has an expansion in $n^{-2}$ with positive integer coefficients.
\end{proof}
\begin{rmk} From~\eqref{eq:moments_+ve} we see that, if $c\in\N$, then $\frac{1}{n^{k+1}}\E\Tr X_n^k$ ($k\geq0$ integer) has a $1/n$-expansion with positive integer coefficients. The computation above shows that the integrality of the coefficients for the LUE negative moments also holds whenever $c/(c-1)\in\N$. Therefore, $c=2$ is the only case when all moments $\frac{1}{n^{k+1}}\E\Tr X_n^k$ ($k\in\Z$) have integer coefficients in their large-$n$ expansion. 
\end{rmk}

\section{Notation and definitions}
\label{sec:def}
\subsection{Classical ensembles of random matrices}
We will consider expectations with respect to the measures 
\begin{equation}
\label{eq:integrals}
\frac{1}{C_{n,\beta}}
\prod_{j=1}^{n}
w_{\beta}(x_{j})\chi_I(x_{j})\prod_{1 \leq j < k \leq n}|x_{k}-x_{j}|^{\beta}dx_{1}
\dotsm dx_{n}
\end{equation}
for finite $n$ and for any value of $\beta \in \{1,2,4\}$.
The value of $\beta$ corresponds to ensembles of real symmetric
($\beta=1$), complex hermitian ($\beta=2$) or quaternion self-dual
matrices ($\beta=4$).  The function $w_\beta(x)$ is the weight of the ensemble:
\begin{equation}
\label{eq:weights}
w_{\beta}(x) = \begin{cases}
 e^{-(\beta/2) x^{2}/2}, & I=\R,  \\
 x^{(\beta/2) (m-n+1)-1}\;e^{-(\beta/2) x},& I=\R_+,
   \\
(1-x)^{(\beta/2) (m_1-n+1)-1}\; x^{(\beta/2) (m_2-n+1)-1}, &
 I=[0,1], 
   \end{cases}
\end{equation}
for Gaussian, Laguerre, and Jacobi, respectively. $C_{n,\beta}$ is a normalization constant which depends on the ensemble and is known explicitly~\cite{Forrester_book}.
For convenience we set $\alpha = m - n>0$ in the Laguerre ensemble and $\alpha_1=m_1 -n>0,\; \alpha_2 =m_2 -n >0$ in the Jacobi
ensemble. 

We define the \textit{one-point eigenvalue density} $\rho^{(\beta)}_{n}(x)$ corresponding to \eqref{eq:integrals} by
\begin{equation}
\rho^{(\beta)}_{n}(x)= \mathbb{E}\left(\sum_{j=1}^{n}\delta(x-x_{j})\right) \label{dens}
\end{equation}
We will call \eqref{dens} the eigenvalue density corresponding to the weight $w_{\beta}(x)$ defining the expectation over \eqref{eq:integrals}. The following identity easily follows from the definitions \eqref{dens} and \eqref{eq:integrals}:
\begin{equation}
\mathbb{E}\tr X_{n}^{k} = \int_{I}\,x^{k} \rho^{(\beta)}_{n}(x)\,dx. \label{tracedens}
\end{equation}

\subsection{Hypergeometric orthogonal polynomials}
We use the standard notation for hypergeometric functions
\be
{}_p F_q\left(\begin{matrix}a_1,\dots, a_p\\b_1,\dots, b_q\end{matrix}; z\right)=
\sum_{j=0}^{\infty}\frac{(a_1)_j\cdots (a_p)_j}{(b_1)_j\cdots (b_q)_j}\frac{z^j}{j!},
\nonumber
\ee
where $(q)_n=\Gamma(q+n)/\Gamma(q)$.
We need to introduce some families of hypergeometric OP's~\cite{Koekoek10}. Recall that there are three types of hypergeometric OP's:
\begin{enumerate}
\item Polynomials of the first type are solutions of usual S-L problems for second order differential operators: Jacobi $P^{(\alpha_1,\alpha_2)}_n(x)$ and its degenerations, Laguerre $L^{(\alpha)}_n(x)$ and Hermite $H_n(x)$. They have a hypergeometric representation, but they are perhaps better known by the Rodrigues-type formulae
\begin{align}
H_n(x)&=(-1)^n\,e^{x^2}\frac{d^n}{d x^n}e^{-x^2},
\label{eq:Hermite}\\
L_n^{(\alpha)}(x)&=\frac{1}{n!}\,x^{-\alpha}e^{x}\frac{d^n}{d x^n}x^{n+\alpha}e^{-x},
\label{eq:Laguerre}\\
P_n^{(\alpha_1,\alpha_2)}(x)&=\frac{1}{n!}\frac{(-1)^n}{(1-x)^{\alpha_1}x^{\alpha_2}}\frac{d^n}{d x^n}(1-x)^{\alpha_1+n} x^{\alpha_2+n}.
\label{eq:Jacobi}
\end{align}
\par
Note that the Jacobi polynomials considered in this paper are orthogonal with respect to the measure $(1-x)^{\alpha_1}x^{\alpha_2}dx$ on the unit interval $[0,1]$;
\item Polynomials of the second type are solutions of discrete S-L problems (three-terms recurrence relations) with real coefficients:   Racah $R_n(\lambda(x);\alpha,\beta,\gamma,\delta)$, including its degenerations Hahn $Q_n(x;\alpha,\beta,N)$, dual Hahn $R_n(\lambda(x);\gamma,\delta,N)$, Meixner $M_n(x;\beta,c)$, etc. They can be represented as finite hypergeometric  series
\begin{align}
R_n(\lambda(x);\alpha,\beta,\gamma,\delta)&=
\setlength\arraycolsep{1pt}
\; {}_4 F_3\left(\begin{matrix}-n,n+\alpha+\beta+1,-x,x+\gamma+\delta+1\\\alpha+1,\beta+\delta+1,\gamma+1 \end{matrix}; 1\right)\\
Q_n(x;\alpha,\beta,N)&=
{}_3 F_2\left(\begin{matrix}-n,n+\alpha+\beta+1,-x \\\alpha+1,-N \end{matrix}; 1\right)
\label{eq:Hahn}\\
R_n(\lambda(x);\gamma,\delta,N)&=
{}_3 F_2\left(\begin{matrix}-n,-x,x+\gamma+\delta+1 \\ \gamma+1,-N \end{matrix}; 1\right)
\label{eq:DualHahn}\\
M_n(x;\beta,c)&=
{}_2 F_1\left(\begin{matrix}-n,-x \\ \beta \end{matrix}; 1-\frac{1}{c}\right),
\label{eq:Meixner}
\end{align}
where $\lambda(x)=x(x+\gamma+\delta+1)$. Note that some authors define the Meixner polynomials as $m_n(x;\beta,c)=(\beta)_nM_n(x;\beta,c)$;
\item Polynomials of the third type are solutions of discrete S-L problems with complex coefficients:  Wilson $W_n(x^2;a,b,c,d)$ including its degenerations, continuous dual Hahn $S_n(x^2;\;a,b,c)$, continuous Hahn $p_n(x;a,b,c,d)$, Meixner-Pollaczek $P_n^{(\lambda)}(x;\phi)$, etc.
They have the following hypergeometric representations:
\begin{align}
W_n(x^2;a,b,c,d)&=(a+b)_n(a+c)_n(a+d)_n\nonumber\\
&\small\times\setlength\arraycolsep{1pt}
\; {}_4 F_3\left(\begin{matrix}-n,n+a+b+c+d-1,a+ix,a-ix\\a+b,a+c,a+d \end{matrix}; 1\right)
\label{eq:Wilson}\\
S_n(x^2;a,b,c)&=(a+b)_n(a+c)_n\;
{}_3 F_2\left(\begin{matrix}-n,a+ix,a-ix \\a+b,a+c \end{matrix}; 1\right)
\label{eq:ContinuousDualHahn}\\
p_n(x;a,b,c,d)&= \frac{i^n}{n!}(a+c)_n(a+d)_n\nonumber\\
&\small\times{}_3 F_2\left(\begin{matrix}-n,n+a+b+c+d-1,a+ix \\a+c,a+d \end{matrix}; 1\right)
\label{eq:ContinuousHahn}\\
P_n^{(\lambda)}(x;\phi)&=(2\lambda)_n\frac{e^{in\phi}}{n!}
{}_2 F_1\left(\begin{matrix}-n,\lambda+i x \\2\lambda \end{matrix}; 1-e^{2i\phi}\right).
\label{eq:Meixner-Pollaczek}
\end{align}
\end{enumerate}
At the top of the hierarchy of hypergeometric orthogonal polynomials are the Wilson and Racah polynomials.
In this paper, the parameter ranges are such that it is most natural to consider the polynomials which appear 
as Wilson polynomials and their degenerations, specifically continuous dual Hahn and Meixner-Pollaczek polynomials.
The reader can find the common notation and the most important properties of hypergeometric OP's in~\cite[Section 9]{Koekoek10}.

\section{Unitary ensembles}
\label{sec:results}

It is known that the $k$-th moments of the classical unitary invariant ensembles of random matrices $X_n$ of dimension $n$ are polynomials in $n$ (or in $1/n$ after rescaling). Here we show that the (completed) moments can also be seen as polynomials in the parameter $k$.  These polynomials are hypergeometric orthogonal polynomials OP's belonging to the Askey scheme~\cite{Askey85}. The polynomial property suggests to consider complex moments or, equivalently, averages of spectral zeta functions. 
For the three classical ensembles, define
\begin{equation*}
\zeta_{X_n}(s)=\Tr |X_n|^{-s},\quad X_n\sim\{\GUE,\LUE,\JUE\},
\end{equation*}
and
\begin{equation*}
\xi_{n}(s)=
\left\{  \begin{array}{l@{\quad}cr} 
\displaystyle\frac{2^{2s}}{ \Gamma\left(1/2-2s\right)}\;\E\,\zeta_{X_n}(4s)&&\text{if $X_n\sim \GUE$}\ ,  \\ \\
\displaystyle\frac{1}{\Gamma(1+\alpha-s)}\;\E\,\zeta_{X_n}(s)&&\text{if $X_n\sim \LUE$}\ ,  \\ \\
\displaystyle\frac{\Gamma(1+\alpha_1+\alpha_2+2n-s)}{\Gamma(1+\alpha_2-s)}\;
\E\left(\zeta_{X_n}(s)-\zeta_{X_n}(s-1)\right)&&\text{if $X_n\sim \JUE$}\ ,
\end{array}\right.
\label{eq:spectralzets_ensembles}
\end{equation*}
when the expectations exist ($s<1/4$, $s<\alpha+1$, and $s<\alpha_2+1$ for $\GUE$, $\LUE$, and $\JUE$, respectively) and by analytic continuation otherwise (see Appendix~\ref{eq:app_Mellin}).
\par
We can now state the first result.
\begin{theorem}
\label{thm:zetafinite}
 For all $n$, $\xi_{n}(s)$ is a hypergeometric orthogonal polynomial:
\begin{equation*}
\xi_{n}(s)=
\left\{  \begin{array}{l@{\quad}cr} 
\displaystyle \frac{i^{1-n}}{\sqrt{\pi}}\;P^{(1)}_{n-1}\left(2x;\;\pi/2\right)
&\text{if $X_n\sim\GUE$}  \\ \\
\displaystyle\frac{1}{\Gamma(n)\Gamma(\alpha+n)}\;S_{n-1}\left(x^2;\; \frac{3}{2}, \frac{1}{2}, \alpha+\frac{1}{2}\right)&\text{if $X_n\sim\LUE$}  \\ \\
\displaystyle\frac{\Gamma(\alpha_1+\alpha_2+n+1)}{\Gamma(n)\Gamma(\alpha_2+n)}(-1)^{n-1}(\alpha_1+n)\\
\small\times\;\displaystyle W_{n-1}\left(x^2;\; \frac{3}{2}, \frac{1}{2}, \alpha_2+\frac{1}{2}, \frac{1}{2}-\alpha_1-\alpha_2-2n\right)&\text{if $X_n\sim\JUE$}\ ,
\end{array}\right.
\label{eq:spectralzets_ensembles2}
\end{equation*}
where $x=i(1/2-s)$. In particular, $\xi_n(s)$ satisfies the functional equation $\xi_{n}(s)=\xi_{n}(1-s)$ in the LUE and JUE cases, and $\xi_{n}(s) = (-1)^{n-1}\xi_{n}(1-s)$ for the GUE. Moreover, all its zeros lie on the critical line $\operatorname{Re}(s)=1/2$.
\end{theorem}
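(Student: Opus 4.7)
The plan is to exploit the determinantal structure of the unitary ensembles via the Christoffel-Darboux identity and compute the Mellin transform of the one-point correlation function $\rho_n^{(2)}(x)$ in closed form for each ensemble, then match the result against the Askey scheme representations recalled in Section~\ref{sec:def}. Recall that
\[
\E\,\zeta_{X_n}(s)=\int_I x^{-s}\rho_n^{(2)}(x)\,dx=\rho_n^{(2),*}(1-s),
\]
so it suffices to analyse $\rho_n^{(2),*}(s)$. Using Christoffel-Darboux one writes $\rho_n^{(2)}(x)=\tfrac{k_{n-1}}{k_n}\,w_2(x)\operatorname{Wr}(p_{n-1}(x),p_n(x))$ in terms of the monic Hermite, Laguerre, or Jacobi polynomials.

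The key computational step is then to evaluate the Mellin integral of this Wronskian. For the LUE I would insert the Rodrigues representation \eqref{eq:Laguerre} for one of the Laguerre factors and integrate by parts $n$ times, producing a single hypergeometric series. An analogous calculation, using \eqref{eq:Hermite} (respectively \eqref{eq:Jacobi}), handles GUE and JUE. After simplification this yields $\xi_n(s)$ as an explicit ${}_2F_1$, ${}_3F_2$, or ${}_4F_3$ in the variable $s$, times the stated ratio of Gamma functions. Comparing term-by-term against \eqref{eq:Meixner-Pollaczek}, \eqref{eq:ContinuousDualHahn}, \eqref{eq:Wilson} under the substitution $x=i(1/2-s)$ identifies these as the asserted Meixner-Pollaczek, continuous dual Hahn, and Wilson polynomials.

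Once this identification is in hand, the functional equation is automatic: for LUE and JUE the polynomials are functions of $x^2$ only, so they are invariant under $x\mapsto -x$, i.e.\ under $s\mapsto 1-s$. For the GUE, the Meixner-Pollaczek polynomial $P^{(\lambda)}_{n-1}(x;\pi/2)$ has parity $(-1)^{n-1}$ in $x$, producing the sign factor. Similarly the location of the zeros follows from the classical fact, recorded in \cite{Koekoek10}, that for the admissible parameter ranges here the zeros of Wilson, continuous dual Hahn, and Meixner-Pollaczek polynomials are all real in their argument $x$; via $x=i(1/2-s)$ this translates exactly to $\operatorname{Re}(s)=1/2$.

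The main obstacle I anticipate is the explicit evaluation of the Wronskian Mellin transform and the matching of parameters. In principle an alternative route would be to derive a three-term recurrence in $n$ satisfied by $\xi_n(s)$ directly from the orthogonality of $p_n$, and to observe that this coincides with the discrete Sturm-Liouville equation of the target Askey-scheme family, which is the perspective emphasised in Section~\ref{sub:Mellin}; this gives the polynomial character inductively and avoids having to recognise the exact hypergeometric representation. Either way, careful bookkeeping of the Gamma prefactors is required so that the normalisation in the definition of $\xi_n(s)$ produces precisely the prefactors stated in the theorem.
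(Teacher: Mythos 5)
Your route is genuinely different from the paper's. The paper does not evaluate the Mellin transform of the Christoffel--Darboux Wronskian directly: for the GUE and LUE it starts from the known closed formulas for the \emph{integer} moments (Harer--Zagier's \eqref{eq:HarerGUE} and Hanlon--Stanley--Stembridge's \eqref{eq:moments_LUE}), rewrites them as terminating ${}_2F_1$ / ${}_3F_2$ series, matches parameters with \eqref{eq:Meixner-Pollaczek} and \eqref{eq:ContinuousDualHahn}, and then passes to complex $s$ by Carlson's theorem (with an explicit growth estimate in a half-plane); for the JUE it instead derives the three-term recurrence \eqref{eq:rec_JUE} in $k$ \`a la Ledoux and recognises it as the Sturm--Liouville equation of Wilson polynomials. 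Your primary plan (Rodrigues formula plus integration by parts applied to $w_2\operatorname{Wr}(p_{n-1},p_n)$) would, if carried through, give the identity for all complex $s$ in one stroke and so dispenses with both Carlson's theorem and the combinatorial input; the price is that the integral produces a \emph{double} sum whose reduction to a single balanced ${}_3F_2$/${}_4F_3$ is exactly the nontrivial step you defer, and you should also note that in the JUE it is the combination $\zeta_{X_n}(s)-\zeta_{X_n}(s-1)$, i.e.\ the Mellin transform of $(1-x)\rho_n^{(2)}(x)$, that must be transformed, not $\rho_n^{(2)}$ itself. Your alternative route (a recurrence in $n$ from the three-term recurrence of the wavefunctions) is essentially what the paper does in Section~\ref{sub:extension} for general Wronskians, so that fallback is sound.

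The one genuine gap is your justification of the zeros in the JUE case. You appeal to ``the classical fact, recorded in \cite{Koekoek10}, that for the admissible parameter ranges here the zeros \dots are all real.'' That works for Meixner--Pollaczek and continuous dual Hahn, whose parameters here are positive, but the Wilson polynomial appearing in the JUE has fourth parameter $\tfrac12-\alpha_1-\alpha_2-2n<0$, which lies outside the standard parameter range of \cite{Koekoek10}; the usual positivity-of-the-weight argument does not apply, and moreover the parameter depends on $n$, so the $\xi_n$ do not even form a single orthogonal family. The paper has to invoke Neretin's finite orthogonality relation \eqref{neretin} for Wilson polynomials with a negative parameter (valid for degrees $n,m<1-a-b-c-d$) to conclude that the zeros are real in $x$, hence on $\operatorname{Re}(s)=1/2$. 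Without this (or some substitute, e.g.\ a Sturm-sequence argument on the recurrence coefficients), your proof of the critical-line statement is incomplete for the JUE.
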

\begin{proof}
See Theorems~\ref{thm:GUE}, \ref{thm:LUE} and \ref{thm:JUE} below. 
\end{proof}
\par
The weights of the OP's in Theorem~\ref{thm:zetafinite} are
\be
w(s)=
\left\{  \begin{array}{l@{\quad}cr} 
\displaystyle\left|2\sqrt{\pi}\Gamma(2s)\right|^2
&\text{if $X_n\sim$ GUE}  \\ \\
\displaystyle\left|\frac{\Gamma(s)\Gamma(s+1)\Gamma(s+\alpha)}{\Gamma(2s-1)}\right|^2&\text{if $X_n\sim$ LUE}  \\ \\
\displaystyle\left|\frac{\Gamma(s)\Gamma(s+1)\Gamma(s+\alpha_2)}{\Gamma(s+\alpha_1+\alpha_2+2n)\Gamma(2s-1)}\right|^2&\text{if $X_n\sim$ JUE}\ ,
\end{array}\right.
\label{eq:spectralzets_ortho}\nonumber
\ee
For the GUE and LUE, the following orthogonality conditions hold
\be
\displaystyle\frac{1}{2\pi i}\int\limits_{\frac{1}{2}+i\R_+}\xi_{m}(s)\overline{\xi_{n}}(s)w(s)d s=h_m\;\delta_{mn},
\label{orth}
\ee
where $h_m$ is an explicit constant depending on the ensemble (see Appendix~\ref{app:neoOP}).

For an illustration of the zeros on the critical line, see Fig~\ref{fig:LUEzeta2} and~\ref{fig:JUEzeta}. For the reader's convenience, the relation between moments of the unitary ensembles and hypergeometric OP's is summarised in Table~\ref{tab:sum}.
\begin{rmk}
The orthogonality in the JUE is slightly different to the GUE and LUE cases. First of all, the fourth parameter of the Wilson polynomial is negative. An orthogonality relation in this case is far from obvious and was established by Neretin \cite{Neretin02}, see also Appendix \ref{app:neoOP}. This fourth parameter also depends on $n$, therefore each $\xi_n(s)$ belongs to a distinct family of orthogonal polynomials obtained by fixing the fourth parameter. As before, this orthogonality implies that the zeros lie on the line $\mathrm{Re}(s)=1/2$. 
\end{rmk}
\begin{figure}
\centering
\includegraphics[width=.49\columnwidth]{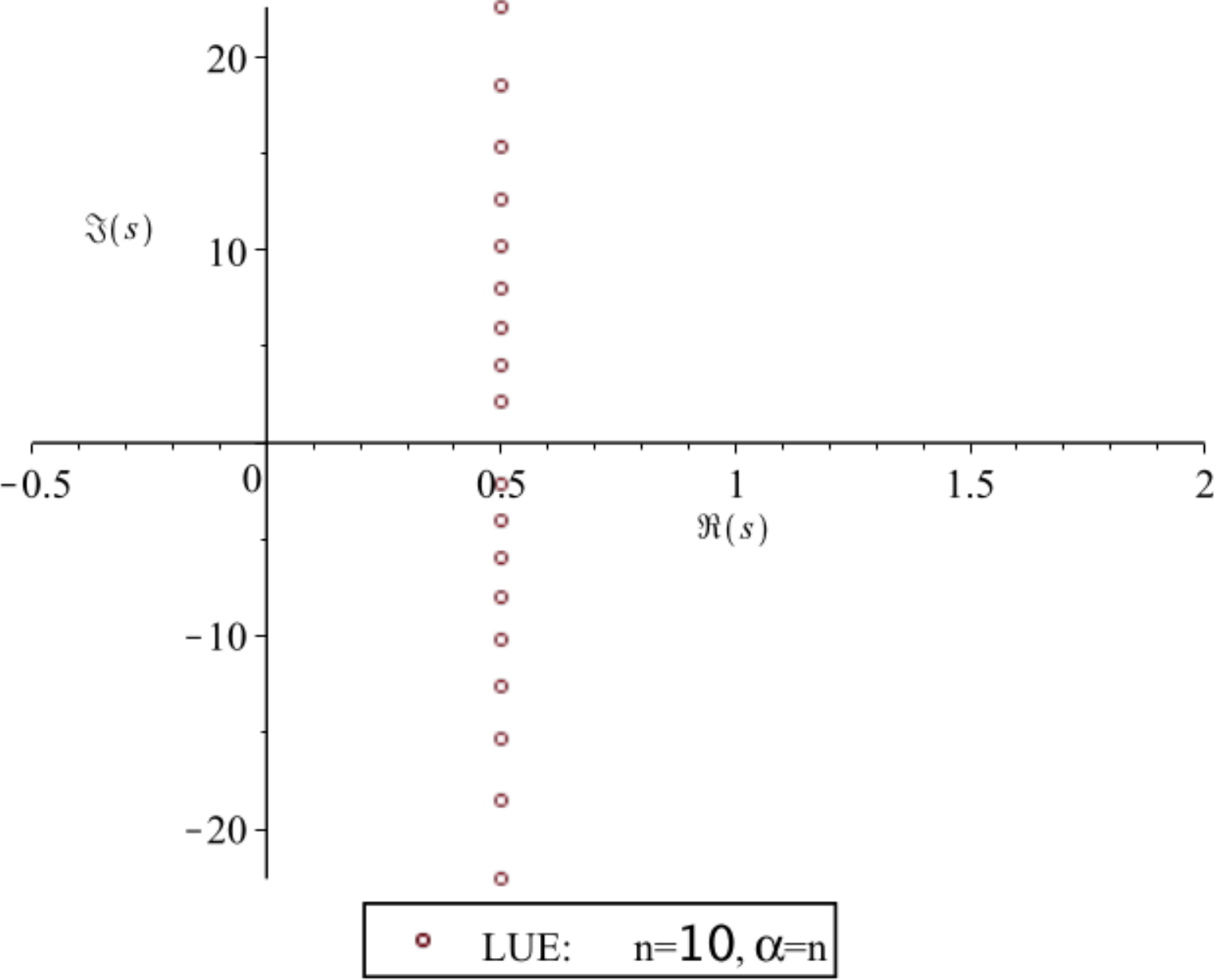}\,
\includegraphics[width=.49\columnwidth]{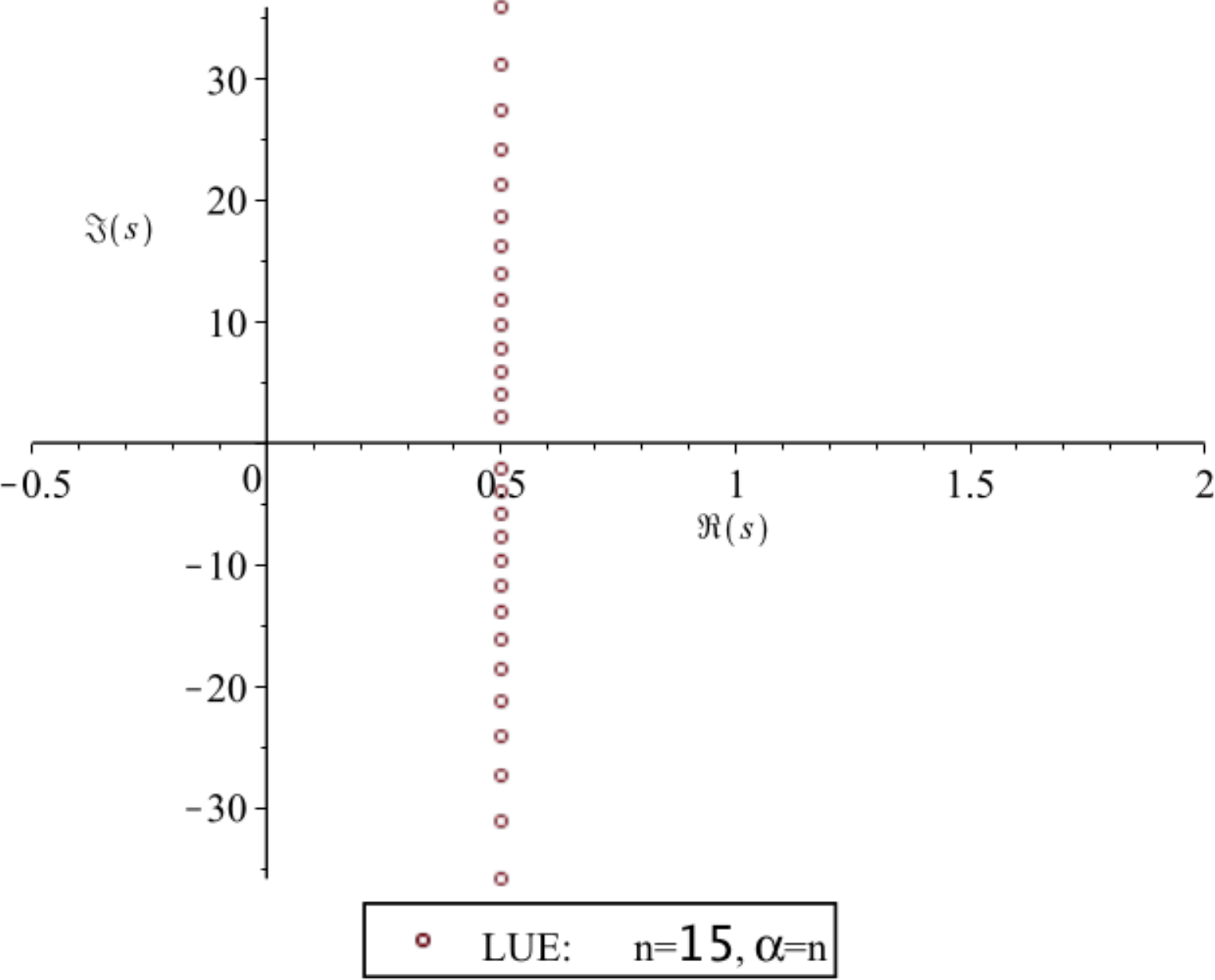}
\caption{Zeros of $\xi_{n}(s)$ for the LUE.}
\label{fig:LUEzeta2}
\end{figure}
\begin{figure}
\centering
\includegraphics[width=.49\columnwidth]{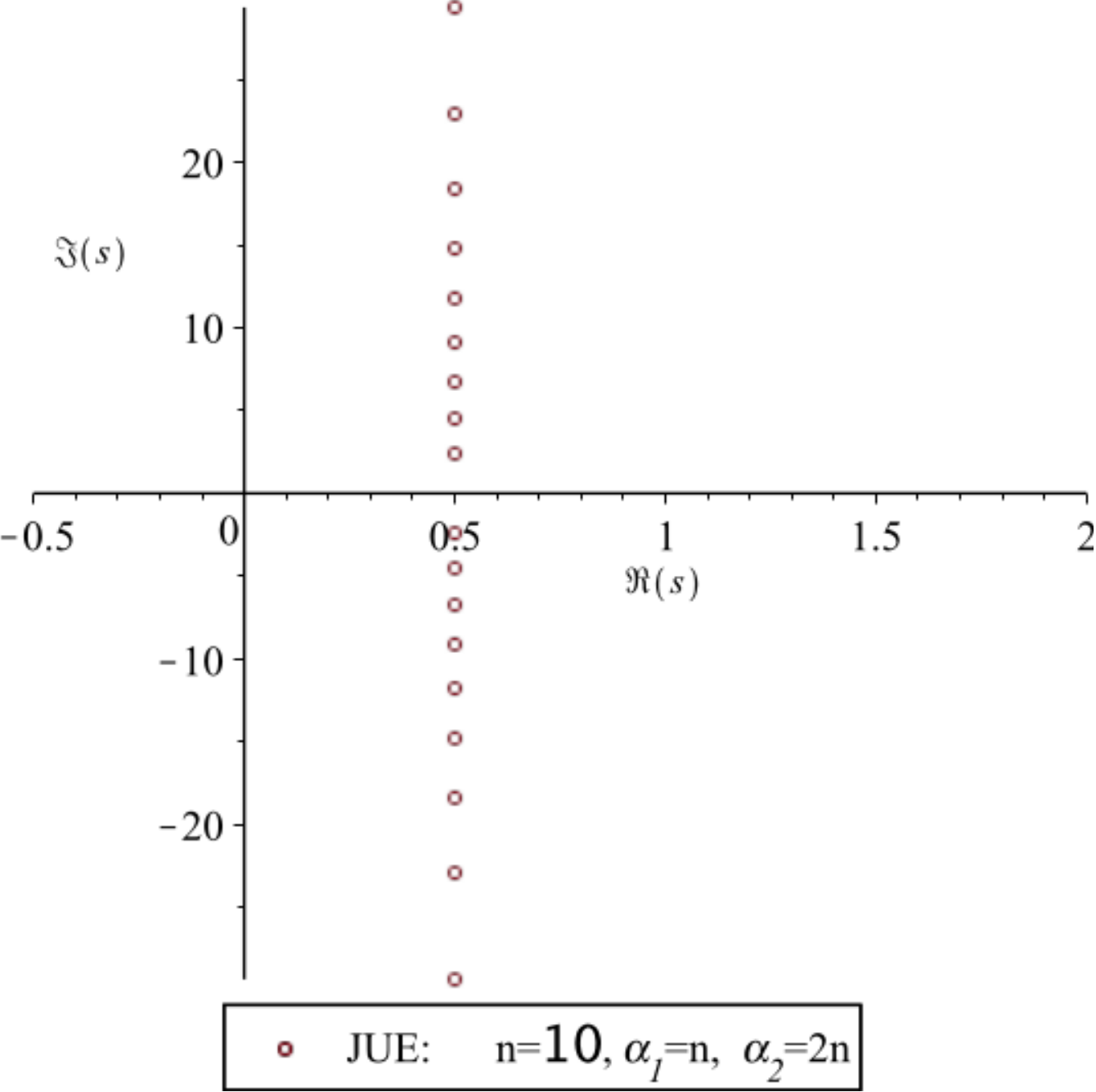}\,
\includegraphics[width=.49\columnwidth]{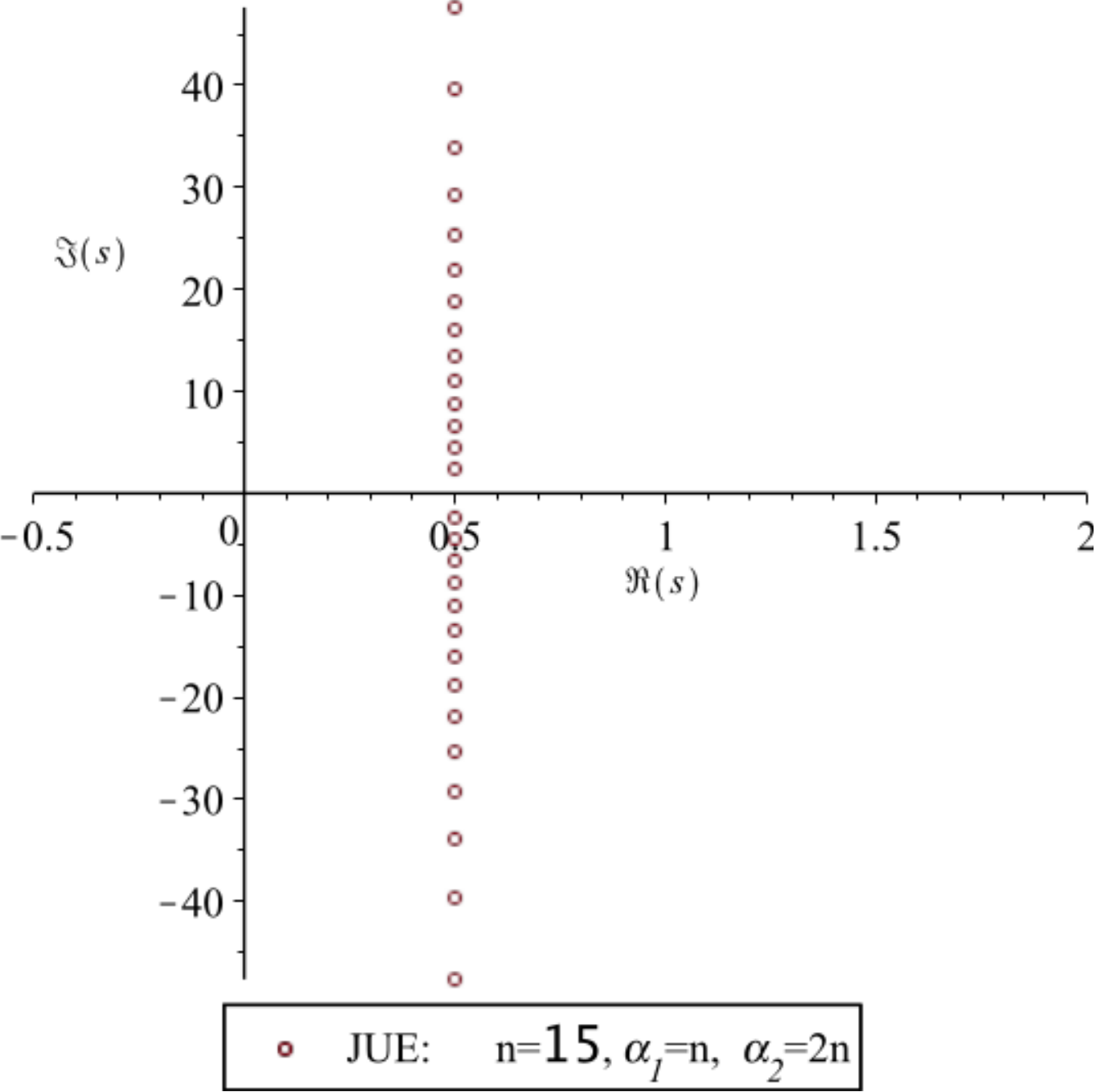}
\caption{Zeros of $\xi_{n}(s)$ for the JUE.}
\label{fig:JUEzeta}
\end{figure}
\par

\subsection{Gaussian Unitary Ensemble}
\label{sec:gaussian}
The GUE is a classical orthogonal polynomial ensemble. In particular, the correlation functions can be compactly and conveniently written in terms of Hermite polynomials. It turns out that the complex moments are essentially a Meixner-Pollaczek polynomial. The moments for GOE and GSE can be expressed using known formulae relating the one-point correlation functions of the three Gaussian ensembles. 
\par
Let $X_n$ be a GUE random matrix of dimension $n$. Define $Q_{k}^{\C}(n)=\E\Tr X_n^{2k}$ for all $k\in\C$ for which the expectation exists. It is known~\cite{Harer86} that, for $k\in\N$,
\be
Q_{k}^{\C}(n)
=(2k-1)!! \sum_{i=1}^{n} 2^{i-1} \binom{n}{i} \binom{k}{i-1} ,
\label{eq:HarerGUE}
\ee
where 
$(2k-1)!!=2^k \Gamma(1/2 + k)/\sqrt{\pi}$ (this is equal to $(2k-1)(2k-3)\cdots 1 $ for $k\geq1$ integer).
For each $k$, the moment $Q_{k}^{\C}(n)$ is a polynomial in $n$ with positive integer coefficients:
\begin{align*}
Q_0^{\C}(n) &=n \\
Q_1^{\C}(n) &=n^2 \\
Q_2^{\C}(n) &=2n^3+n\\
Q_3^{\C}(n) &=5n^4+10n^2. \\
Q_4^{\C}(n) &= 14n^5+70n^3+21n .
\end{align*}
This fact is the well-known genus expansion for Gaussian complex matrices.
\par
As observed in~\cite[Theorem 8]{Witte14}, ~\eqref{eq:HarerGUE} can be written in terms of a (terminating) hypergeometric series:
\be
\frac{Q_{k}^{\C}(n)}{n\; (2k-1)!!} =
\setlength\arraycolsep{1pt}
\; {}_2 F_1\left(\begin{matrix}-k,1-n\\2\end{matrix};2\right).
\label{eq:GUE_2F1}
\ee
From this hypergeometric representation, we see that the moment $Q_{k}^{\C}(n) $, if properly normalised, is a Meixner-Pollaczek polynomial in $i(k+1)$ of degree $n-1$. 
\begin{theorem}
\label{thm:GUE} If we write $x=i(k+1)$, then for $\operatorname{Re}(k)>-1/2$
\be
Q_{k}^{\C}(n) =  i^{1-n}\;(2k-1)!! \; P^{(1)}_{n-1}(x;\;\pi/2)
\label{eq:GUE_Meixner-Pollaczek}
\ee
In particular, $i^{n-1}Q_{k}^{\C}(n)/(2k-1)!!$ can be extended to an analytic function in $\C$ (a polynomial), invariant  up to a change of sign under reflection $k\to-2-k$, with complex zeros on the vertical line $\operatorname{Re}(k)=-1$.
\end{theorem}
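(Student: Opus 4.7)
The plan is to treat this as an identification problem: the hypergeometric representation \eqref{eq:GUE_2F1} (due to Witte--Forrester) is already in hand, so the main task is to recognise the terminating ${}_2F_1$ on the right-hand side as a rewriting of the Meixner--Pollaczek polynomial $P^{(1)}_{n-1}(x;\pi/2)$ at $x=i(k+1)$, and then read off the stated symmetry and zero locus.

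First, I would substitute $\lambda=1$, $\phi=\pi/2$ into the definition \eqref{eq:Meixner-Pollaczek}. This gives $1-e^{2i\phi}=2$, $(2\lambda)_{n-1}=n!$ and $e^{i(n-1)\phi}=i^{n-1}$, so
\[
P^{(1)}_{n-1}(x;\pi/2)=n\,i^{n-1}\;{}_2F_1\!\left(\begin{matrix}-(n-1),\,1+ix\\ 2\end{matrix};2\right).
\]
Setting $x=i(k+1)$ turns $1+ix$ into $-k$, and the ${}_2F_1$ collapses to the series appearing in \eqref{eq:GUE_2F1}. Multiplying by $i^{1-n}(2k-1)!!$ then reproduces $Q_k^{\mathbb{C}}(n)$ for every integer $k\ge 0$. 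Since both sides divided by $(2k-1)!!$ are polynomials in $k$ of degree $n-1$ (hence entire), and they agree at infinitely many points, the identity \eqref{eq:GUE_Meixner-Pollaczek} extends to all $k\in\mathbb{C}$, which in particular justifies $\operatorname{Re}(k)>-1/2$ as the regime in which the moment is finite.

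For the functional equation, I would invoke the standard symmetry of Meixner--Pollaczek polynomials at $\phi=\pi/2$, namely $P^{(\lambda)}_m(-x;\pi/2)=(-1)^m P^{(\lambda)}_m(x;\pi/2)$ (immediate from the hypergeometric representation by swapping the two numerator parameters, or from the Rodrigues/three-term recursion with real coefficients when $\phi=\pi/2$). Since $k\mapsto-2-k$ corresponds to $x\mapsto-x$, this yields the claimed invariance up to the sign $(-1)^{n-1}$ of $i^{n-1}Q_k^{\mathbb{C}}(n)/(2k-1)!!$. Finally, for the zero locus: $P^{(1)}_{n-1}(x;\pi/2)$ is a real-coefficient orthogonal polynomial on $\mathbb{R}$ with respect to the positive weight $|\Gamma(1+ix)|^2$, so all of its $n-1$ zeros are real, and the substitution $x=i(k+1)$ maps $x\in\mathbb{R}$ bijectively onto $\operatorname{Re}(k)=-1$.

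There is no genuine obstacle here beyond bookkeeping: the only delicate point is making sure the prefactor juggling ($n!/(n-1)!$, $i^{n-1}$, the phase $e^{i(n-1)\pi/2}$) lines up with the $(2k-1)!!$ normalisation, and that the sign in the $x\mapsto-x$ symmetry is tracked correctly — everything else is supplied by standard properties of the Meixner--Pollaczek family already summarised in \eqref{eq:Meixner-Pollaczek}.
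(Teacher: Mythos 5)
Your identification of the terminating ${}_2F_1$ with the Meixner--Pollaczek polynomial is correct and is exactly the first step of the paper's proof: with $\lambda=1$, $\phi=\pi/2$ one gets $(2)_{n-1}=n!$, $e^{i(n-1)\pi/2}=i^{n-1}$, $1-e^{2i\phi}=2$, and $x=i(k+1)$ turns $1+ix$ into $-k$, matching \eqref{eq:GUE_2F1}. The symmetry and zero-locus arguments at the end are also the same as the paper's (parity of $P^{(1)}_{n-1}$ under $x\mapsto-x$, reality of zeros from orthogonality against the positive weight $|\Gamma(1+ix)|^2$).

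However, there is a genuine gap in your extension step. You write that ``both sides divided by $(2k-1)!!$ are polynomials in $k$ of degree $n-1$ \ldots and they agree at infinitely many points.'' This begs the question for the left-hand side: $Q_k^{\C}(n)$ is \emph{defined} as the actual moment $\E\Tr X_n^{2k}$, an analytic function of $k$ on $\operatorname{Re}(k)>-1/2$ given by an integral, and it is precisely the content of the theorem that $Q_k^{\C}(n)/(2k-1)!!$ coincides with a polynomial there. Agreement at the nonnegative integers alone does not suffice, since $\{0,1,2,\dots\}$ has no accumulation point in $\C$ (one could add $\sin(\pi k)$ times anything analytic without disturbing the integer values). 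The paper closes this gap with Carlson's theorem: it writes $Q_k^{\C}(n)$ as an explicit integral of $y^{2k}$ against the one-point density, bounds it term by term to show $i^{n-1}Q_k^{\C}(n)/(2k-1)!!=\operatorname{O}(k^{n-1})$ as $|k|\to\infty$ in the half-plane $\operatorname{Re}(k)>-1/2$, and only then concludes that the two functions, agreeing on $\N$ with subexponential difference, coincide on the whole half-plane. You need to supply this growth estimate (or an equivalent uniqueness argument) for the proof to be complete.
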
 
\begin{proof}
Consider the polynomials
\begin{equation*}
q_r(s)= \frac{i^{-r}}{1+r} P_r^{(1)}(is;\ \pi/2).
\end{equation*}
From  the definition of Meixner-Pollaczek polynomials~\eqref{eq:Meixner-Pollaczek}  of $P^{(1)}_{n-1}(x;\;\pi/2)$ and the hypergeometric representation~\eqref{eq:GUE_2F1} we see that $ Q_{k}^{\C}(n) = n\; (2k-1)!! \; q_{k}(n)$ when $k$ is a nonnegative integer. In order to prove the general complex case, we use a procedure of analytic continuation from integer points to a complex domain via Carlson's theorem~\cite[Theorem 2.8.1]{Andrews99}. 
A standard calculation in random matrix theory shows that, for $\operatorname{Re}(k)>-1/2$,
\begin{equation*}
Q_{k}^{\C}(n) =\int_{0}^{\infty}y^{2k}\frac{e^{-y^2/2}}{\sqrt{2\pi}}\sum_{j=0}^{n-1}\left(\frac{H_j(y/\sqrt{2})}{2^j j!}\right)^2dy,
\end{equation*}
where $H_j$ denotes the Hermite polynomial~\eqref{eq:Hermite} of degree $j$. This shows that $Q_{k}^{\C}(n)$ is analytic in the half-plane $\operatorname{Re}(k)>-1/2$. If we write  $\sum_{j=0}^{n-1}({2^j j!})^{-2}\left(H_j(y/\sqrt{2})\right)^2=c_{2n-2}y^{2n-2}+\cdots+c_1y+c_0$ for some constants $c_i$, then
\begin{align*}
\left|Q_{k}^{\C}(n)\right| &=\left|\int_{0}^{\infty}y^{2k}\frac{e^{-y^2/2}}{\sqrt{2\pi}}\sum_{j=0}^{2n-2}c_jy^jdx\right|\leq\sum_{j=0}^{2n-2}\left|c_{j}\right|\int_{0}^{\infty}y^{2k+j}\frac{e^{-y^2/2}}{\sqrt{2\pi}}dy\\
\end{align*}
We use now the elementary inequality $a+by+cy^2\leq (a+b)+(b+c)y^2$  for $a,b,c,y\geq0$. Setting $d_i=|c_{2j-1}|+2|c_{2j}|+|c_{2j+1}|$ (with $c_{ -1}=c_{2n-1}=0$) we have then
\begin{align*}
\left|Q_{k}^{\C}(n)\right| &\leq\sum_{j=0}^{n-1}d_j\int_{0}^{\infty}y^{2k+2j}\frac{e^{-y^2/2}}{\sqrt{2\pi}}dy\\
&=\sum_{j=0}^{n-1}d_j\frac{2^{k+j-1}}{\sqrt{\pi}}\Gamma(k+j+1/2)\\
&=\frac{2^k}{\sqrt{\pi}}\Gamma(k+1/2)\sum_{j=0}^{n-1}d_j2^{j-1}\prod_{i=0}^{j-1}(k+1/2+i).
\end{align*}
Therefore $i^{n-1}Q_{k}^{\C}(n)/(2k-1)!!=\operatorname{O}(k^{n-1})$ as $|k|\to\infty$ with $\operatorname{Re}(k)>-1/2$. In conclusion, $Q_{k}^{\C}(n)/(2k-1)!!$ and the polynomial $P^{(1)}_{n-1}(i(k+1);\;\pi/2)$ coincide on nonnegative integers and their difference is $\operatorname{O}(e^{c|k|})$ for any $c>0$. By Carlson's theorem the two functions coincide in the whole domain $\operatorname{Re}(k)>-1/2$.

The polynomials $q_r(s)$ enjoy the symmetries 
\begin{align*}
q_{r-1}(s)&=q_{s-1}(r)\\ 
q_r(-s)&=(-1)^r q_r(s).
\end{align*} 
Therefore $Q_{k}^{\C}(n) = n\; (2k-1)!! \; q_{n-1}(k+1)=n\; (2k-1)!! \; (-1)^{n-1}q_{n-1}(-k-1)$ thus explaining the reflection symmetry $k\to-2-k$. Recall that the Meixner-Pollaczek polynomial form an orthogonal family with respect to a positive weight on the real line. Therefore, the zeros of $q_r(s)$ are purely imaginary;
they occur in conjugate pairs, with zero included if $r$ is odd. 
 This proves that all the zeros of $i^{n-1}Q_{k}^{\C}(n)/(2k-1)!!$ lie on the line $\operatorname{Re}(k)=-1$.
\end{proof}
\begin{rmk}
The polynomials $q_r(s)$ 
satisfy the difference equation
\begin{equation*} 
(s+1)q_r(s+1)=2(r+1)q_r(s)+(s-1) q_r(s-1),
\end{equation*}
and the three-term recurrence
\begin{equation*} 
(r+2) q_{r+1}(s)=2s q_r(s)+r q_{r-1}(s).
\end{equation*}
Since $q_{r-1}(s)=q_{s-1}(r)$, these are in fact equivalent.

Recalling that 
$Q_{k}^{\C}(n) = n\; (2k-1)!! \; q_{k}(n),$ 
this yields the Harer-Zagier recursion~\cite{Harer86}
\be 
(k+2) Q_{k+1}^{\C}(n) = 2n(2k+1)Q_k^{\C}(n) +k(2k+1)(2k-1)Q_{k-1}^{\C}(n),
\label{eq:HZ_GUE}
\ee
and the recursion in $n$
\be 
n Q_k^{\C}(n+1)=2(k+1)Q_k^{\C}(n)+n Q_k^{\C}(n-1).
\label{eq:rec_GUE_n}
\ee
\end{rmk}
\par
The Meixner-Pollaczek polynomials can be thought as continuous version of the Meixer polynomials~\cite{Atakishiyeva99}:
\be
P_n^{(\lambda)}(x;\;\phi)=e^{-in\phi}\frac{(2\lambda)_n}{n!}M_n(-\lambda+ix;\;2\lambda,e^{-2i\phi}).
\nonumber
\ee
In fact, the normalized moment~\eqref{eq:GUE_2F1} is a Meixner polynomial (see~\eqref{eq:Meixner}) in $n-1$ of degree $k$ or, by symmetry, a Meixner polynomial in $k$ of degree $n-1$.
The alternative form of Theorem~\ref{thm:GUE} using Meixner polynomials is the following.
\par
\begin{thmbis}{thm:GUE}
\begin{align}
Q_{k}^{\C}(n) &=  n\; (2k-1)!! \; M_{k}(n-1;2,-1)=n\; (2k-1)!! \; M_{n-1}(k;2,-1).
\label{eq:GUE_Meixner}
\end{align}
\end{thmbis}
The first polynomials are
\begin{align*}
Q_k^{\C}(1) &=\;(2k-1)!! \\
Q_k^{\C}(2) &=2(2k-1)!!\cdot\left(k+1\right) \\
Q_k^{\C}(3) &=3(2k-1)!!\cdot \frac{1}{3}\left(2k^2+4k+3\right) \\
Q_k^{\C}(4) &=4(2k-1)!!\cdot \frac{1}{3}\left(k^3+3k^2+5k+3\right)
\\
Q_k^{\C}(5) &=5(2k-1)!!\cdot \frac{1}{15}\left(2k^4+8k^3+22k^2+28k+15\right).
\end{align*}

\begin{rmk} The normalised GUE moments can be written as products of moments $(2k-1)!!$ of a standard Gaussian, times a Meixner polynomial
\be
\frac{1}{n}Q_{k}^{\C}(n) =   (2k-1)!! \; M_{k}(n-1;\;2,-1).\nonumber
\ee
It is natural to ask whether the Meixner polynomials form moment sequences of some random variables, so that one can `decompose' the GUE one-point function as multiplicative convolution of a standard Gaussian and another probability distribution (product of two independent random variables). In fact, Ismail and Stanton~\cite{Ismail97} considered the problem of \emph{orthogonal polynomials as moments}. It turns out that the Meixner polynomials are moments of translated Beta random variables
\be
(1-c)^{1-\beta}\frac{\Gamma(\beta)}{\Gamma(x+\beta)\Gamma(-x)}\int_c^1t^k\;(1-t)^{x+\beta-1}(t-c)^{-x-1}dt=c^nM_k(x;\beta,c),
\label{eq:Ismail-Stanton}\nonumber
\ee 
for $\operatorname{Re}(x)<0$ and $\operatorname{Re}(x+\beta)>0$.
Note however that, in our setting, the Meixner polynomials have nonnegative argument $x = n - 1$, 
so that this representation of the one-point function as a `convolution' is purely formal.
\end{rmk}
As a corollary of Theorem~\ref{thm:GUE} we have the following two identities.
\begin{cor} Reflection formula:
\be
\frac{1}{n}\frac{1}{(2k-1)!!}Q_k^{\C}(n)=\frac{1}{k+1}\frac{1}{(2(n-1)-1)!!}Q_{n-1}^{\C}(k+1).
\label{eq:GUE_reflection}
\ee
Convolution formula:
\begin{multline}
\sum_{j=0}^k\frac{(j+1)}{(2j-1)!!}\frac{(k-j+1)}{(2(k-j)-1)!!}\frac{Q_j^{\C}(n)}{n}\frac{Q_{k-j}^{\C}(n)}{n}\\
=\frac{1}{4}\frac{1}{(2(k+2)-1)!!}\left(\frac{Q_{k+2}^{\C}(2n+1)}{2n+1}(k+2-2n)+\frac{Q_{k+2}^{\C}(2n-1)}{2n-1}(k+2+2n)\right).
\label{eq:GUE_convolution}
\end{multline}
\begin{proof}
The reflection formula follows from the hypergeometric representation in~\eqref{eq:GUE_2F1}. To prove~\eqref{eq:GUE_convolution} we start from the convolution property of the Meixner-Pollaczek polynomials~\cite{Al-Salam76}
\be
P^{(\lambda+\mu)}_k(x+y;\;\phi)=\sum_{j=0}^kP^{(\lambda)}_j(x;\;\phi)P^{(\mu)}_{k-j}(y;\;\phi).
\nonumber
\ee
It follows that
\be
\sum_{j=0}^k\frac{(i/2)^j\sqrt{\pi}(j+1)}{\Gamma(j+1/2)}Q_j^{\C}(n)\frac{(i/2)^{k-j}\sqrt{\pi}(k-j+1)}{\Gamma(k-j+1/2)}Q_{k-j}^{\C}(n)=P^{(2)}_k(2in;\;\pi/2).
\nonumber
\ee
To complete the proof we use the Forward Shift Operator for the Meixner-Pollaczek polynomials\cite[Sec. 9, Eq. (9.7.6)]{Koekoek10}, the reflection formula~\eqref{eq:GUE_reflection}, and the recursion~\eqref{eq:rec_GUE_n}.
\end{proof}
\end{cor}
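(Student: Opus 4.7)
\emph{Reflection formula.} The reflection identity drops out immediately from the hypergeometric representation~\eqref{eq:GUE_2F1}. Because the Gauss series ${}_2F_1(a,b;c;z)$ is symmetric under the interchange $a\leftrightarrow b$ of its upper parameters, swapping $-k$ and $1-n$ in that representation and then reading the resulting series as $Q_{n-1}^{\C}(k+1)/[(k+1)(2(n-1)-1)!!]$ via the same formula yields the claim. This is the same reflection symmetry $q_{r-1}(s)=q_{s-1}(r)$ already recorded in the proof of Theorem~\ref{thm:GUE}.

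\emph{Convolution formula, setup.} For the convolution formula I will transport the classical convolution identity of Meixner--Pollaczek polynomials,
\[
P_k^{(\lambda+\mu)}(x+y;\phi)=\sum_{j=0}^{k}P_j^{(\lambda)}(x;\phi)\,P_{k-j}^{(\mu)}(y;\phi),
\]
through Theorem~\ref{thm:GUE}. Since that theorem writes $Q_j^{\C}(n)$ as a polynomial of \emph{degree} $n-1$ evaluated at $i(j+1)$, whereas the convolution identity needs $j$ to be the polynomial degree, the first step is to apply the reflection $q_{r-1}(s)=q_{s-1}(r)$ to rewrite $Q_j^{\C}(n)$ as an explicit constant multiple of $P_j^{(1)}(in;\pi/2)$. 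Specialising the convolution identity to $\lambda=\mu=1$, $\phi=\pi/2$, and $x=y=in$ then collapses the LHS of the desired identity (up to Gamma-function factors coming from the double factorials) into the single quantity $P_k^{(2)}(2in;\pi/2)$.

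\emph{Convolution formula, reduction.} To convert $P_k^{(2)}(2in;\pi/2)$ back into moment language, I will invoke the forward shift operator for the Meixner--Pollaczek family, which raises $\lambda$ by one at the cost of passing to a finite difference of $P_{k+1}^{(\lambda)}$ at arguments shifted by $\pm i$. Running this identity in reverse (lowering $\lambda$ from $2$ back to $1$) expresses $P_k^{(2)}(2in;\pi/2)$ as a linear combination of $P_{k+1}^{(1)}\!\bigl(i(2n\pm 1);\pi/2\bigr)$. Applying the reflection once more converts each of these into a moment $Q_{k+2}^{\C}(2n\pm 1)$; the recursion~\eqref{eq:rec_GUE_n} can then be used to consolidate the coefficients into the asymmetric linear combination $(k+2\mp 2n)$ on the right-hand side.

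\emph{Main obstacle.} The principal difficulty is not conceptual but algebraic: tracking the multiplicative factors $(j+1)$, $i^j$, $(2j-1)!!$, $\Gamma(j+1/2)$, and inverse powers of $n$ that accumulate each time one passes between $Q_j^{\C}(n)$ and $P_j^{(1)}(in;\pi/2)$. The delicate step is verifying that the output of the forward shift operator, after translation back into moment language, is \emph{exactly} $(k+2-2n)Q_{k+2}^{\C}(2n+1)/(2n+1)+(k+2+2n)Q_{k+2}^{\C}(2n-1)/(2n-1)$ with prefactor $\tfrac14(2(k+2)-1)!!^{-1}$; any miscounted factor of $i$, $2$, or $n$ along the way will destroy the identity, so the bookkeeping must be done with care.
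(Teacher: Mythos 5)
Your proposal follows the paper's proof essentially verbatim: the reflection formula is read off from the symmetry of the ${}_2F_1$ in \eqref{eq:GUE_2F1} under interchange of its upper parameters, and the convolution formula is obtained from the Al-Salam--Chihara convolution of Meixner--Pollaczek polynomials specialised to $\lambda=\mu=1$, $\phi=\pi/2$, $x=y=in$, followed by the forward shift operator, the reflection, and the recursion \eqref{eq:rec_GUE_n}. One small correction to your bookkeeping: the forward shift \cite[Eq.~(9.7.6)]{Koekoek10} raises $\lambda$ only by $1/2$, so recovering the $\lambda=1$ family from $P_k^{(2)}(2in;\pi/2)$ takes two applications and produces the second difference $P_{k+2}^{(1)}(i(2n+1);\pi/2)-2P_{k+2}^{(1)}(2in;\pi/2)+P_{k+2}^{(1)}(i(2n-1);\pi/2)$ (degree $k+2$, not $k+1$, consistent with the $Q_{k+2}^{\C}$ on the right-hand side); the middle term at argument $2in$ is then eliminated by \eqref{eq:rec_GUE_n} applied at size $2n$, and it is exactly this elimination that generates the asymmetric weights $k+2\mp 2n$.
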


\begin{rmk} The reflection formula~\eqref{eq:GUE_reflection} relates expectations of power of traces when the role of $k$ and $n$ is interchanged. 
We remark that these are not the only quantities invariant under this type of reflection. The other main examples are moments of characteristic polynomials. See the work of Mehta and Normand~\cite[Eq.~(3.15)]{Mehta01} and Forrester and Witte~\cite[Eq.~(4.43)]{Forrester01}. A generalization of such a duality to all $\beta$ was obtained in the work of Desrosiers~\cite{Des}.
\end{rmk}
\subsection{Laguerre unitary ensemble}
The moments of the Laguerre polynomial ensemble (LUE) enjoy a polynomial property, too. They are  (dual) Hahn polynomials~\eqref{eq:Hahn}-\eqref{eq:DualHahn}, or their continuous versions~\eqref{eq:ContinuousDualHahn}-\eqref{eq:ContinuousHahn}.
\label{sec:laguerre}

Let $X_n$ be a LUE random matrix with parameter $m$. Denote $\alpha=m-n\geq0$.
Set $Q_{k}^{\C}(m,n)=\mathbb{E} \operatorname{Tr} X_n^{k}$ for all $k\in\C$ for which the expectation exists.
Then $Q_0^{\C}(m,n)=n$ and, for $k\in\N$, it is known that~\cite{Hanlon92}
\be 
Q_{k}^{\C}(m,n)=\frac1{k} \sum_{i=1}^{k} (-1)^{i-1} \frac{(m-i+1)_k (n-i+1)_k}{(k-i)! (i-1)!} .
\label{eq:moments_LUE}
\ee
For any $k\in\N$, $Q_k^{\C}(m,n)$ is a symmetric polynomial in $m,n$ of degree $k+1$,
with positive integer coefficients:
\begin{align*}
Q_1^{\C}(m,n) &=mn \\
Q_2^{\C}(m,n) &=m^2n+mn^2 \\
Q_3^{\C}(m,n) &= m^3n+3m^2n^2+mn^3+mn\\
Q_4^{\C}(m,n) &= m^4n+6m^3n^2+6m^2n^3+mn^4+5m^2n+5mn^2.
\end{align*}
In fact, for each positive integer $n$, $Q_k^{\C}(m,n)$ is a polynomial in $k$ of degree $2(n-1)$. After some manipulations, the moments~\eqref{eq:moments_LUE} can be expressed in terms of a hypergeometric function:
\be 
\frac{Q_k^{\C}(m,n)}{(k+\alpha)!} = \frac{mn}{(1+\alpha)!} 
\setlength\arraycolsep{1pt}
\; {}_3 F_2\left(\begin{matrix}1-k,2+k,1-n \\2,2+\alpha \end{matrix}; 1\right) .
\label{eq:LUE_hyper}
\ee
This formula can be extended for $k\in\mathbb{C}$ and satisfies $Q_0^{\C}(m,n)=n$. 
\begin{theorem}
\label{thm:LUE} If we write $k=ix-1/2$, then for $\operatorname{Re}(k)>-\alpha-1$,
\be 
Q_k^{\C}(m,n) =  \frac{(k+\alpha)!}{(n-1)!\; (m-1)!} S_{n-1}\left(x^2;\; \frac{3}{2}, \frac{1}{2}, \alpha+\frac{1}{2}\right),
\label{eq:LUE_continuousDualHahn}
\ee
where $S_{n-1}$ denotes the continuous dual Hahn polynomial of degree $n-1$.
In particular this shows that $Q_k^{\C}(m,n) /(k+\alpha)!$ can be extended to a polynomial invariant under the reflection 
$k \to -1-k$ (reciprocity law) and, moreover, its complex zeros lie on the critical line $\operatorname{Re} (k)=-1/2$.
\end{theorem}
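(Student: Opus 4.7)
The plan is to mirror the strategy used for Theorem~\ref{thm:GUE}: verify the identity on nonnegative integers by matching hypergeometric representations, extend to complex $k$ via Carlson's theorem, and then read off the reflection symmetry and the location of the zeros from standard properties of the continuous dual Hahn polynomials.

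First I would unfold $S_{n-1}(x^2;3/2,1/2,\alpha+1/2)$ using the definition~\eqref{eq:ContinuousDualHahn}. With $a=3/2$, $b=1/2$, $c=\alpha+1/2$, one has $a+b=2$, $a+c=\alpha+2$, and under $k=ix-1/2$ the shifted numerator parameters become $a+ix=2+k$ and $a-ix=1-k$. Computing the prefactors $(2)_{n-1}(\alpha+2)_{n-1}=n!(\alpha+n)!/(\alpha+1)!$ and using $m=n+\alpha$ gives
\begin{equation*}
\frac{S_{n-1}(x^2;3/2,1/2,\alpha+1/2)}{(n-1)!(m-1)!}=\frac{n(n+\alpha)}{(\alpha+1)!}\;{}_3F_2\left(\begin{matrix}1-n,\;2+k,\;1-k\\ 2,\;\alpha+2\end{matrix};\;1\right),
\end{equation*}
which matches the right-hand side of~\eqref{eq:LUE_hyper} by symmetry of the ${}_3F_2$ in its numerator parameters. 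Hence the claimed identity holds for every $k\in\N$.

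To promote it to an identity of analytic functions in the half-plane $\operatorname{Re}(k)>-\alpha-1$, I would use~\eqref{tracedens} together with the fact that $\rho_n^{(2)}(x)=P(x)\,x^{\alpha}e^{-x}$ for an explicit polynomial $P$ of degree $2(n-1)$ (available either from the Christoffel-Darboux representation recalled in Section~\ref{sub:Mellin} or directly from the sum of squared normalised Laguerre wavefunctions). Term-by-term integration yields $Q_k^{\C}(m,n)=\sum_{j} p_j\,\Gamma(k+\alpha+j+1)$, so dividing through by $(k+\alpha)!$ gives the polynomial bound $|Q_k^{\C}(m,n)/(k+\alpha)!|=O(|k|^{2(n-1)})$ as $|k|\to\infty$ in the half-plane. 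Since the right-hand side of~\eqref{eq:LUE_continuousDualHahn} is also polynomial in $k$ of degree $2(n-1)$, Carlson's theorem (as invoked in the GUE proof) forces equality on $\operatorname{Re}(k)>-\alpha-1$, and analyticity extends it to all of $\C$.

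Reciprocity $k\mapsto -1-k$ is immediate from the displayed ${}_3F_2$, since the numerator pair $(2+k,1-k)$ is swapped. For the zeros, the parameters $3/2$, $1/2$, $\alpha+1/2$ are all positive, so the continuous dual Hahn polynomial belongs to the standard orthogonal family on $x\in[0,\infty)$ with a positive weight; its zeros in $x^2$ are therefore real and positive, which forces $x\in\R$ and, via $k=ix-1/2$, $\operatorname{Re}(k)=-1/2$. I expect the only genuinely non-routine step to be the Mellin-type bound supporting Carlson's theorem, but it reduces to the same Gamma-function estimate already used for the GUE.
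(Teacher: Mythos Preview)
Your proposal is correct and follows essentially the same approach as the paper: match the hypergeometric representation~\eqref{eq:LUE_hyper} with~\eqref{eq:ContinuousDualHahn} to get the identity on integers, then invoke Carlson's theorem exactly as in the GUE case. You have simply spelled out the parameter bookkeeping, the Gamma-function bound for Carlson, and the orthogonality argument for the zeros that the paper leaves implicit.
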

\begin{proof} Comparing~\eqref{eq:LUE_hyper} with the hypergeometric representation of continuous dual Hahn polynomials~\eqref{eq:ContinuousDualHahn} we get the result for $k$ integer. 
The extension to complex $k$ is again an application of Carlson's theorem.
\end{proof}
An alternative formulation in terms of Hahn and dual Hahn polynomials~\eqref{eq:Hahn}-\eqref{eq:DualHahn} is as follows.
\par
\begin{thmbis}{thm:LUE}
\begin{align}
Q_k^{\C}(m,n)  &=  mn (2+\alpha)_{k-1} R_{n-1}((k-1)(k+2);\; 1,1,-2-\alpha)\\
&= mn (2+\alpha)_{k-1} Q_{k-1}(n-1;\; 1,1,-2-\alpha) .
\end{align}
\end{thmbis}

\begin{rmk}
The difference equations / three-term recurrence relations for these polynomials (see~\cite[Sections 9.5 and 9.6]{Koekoek10})
yield the Haagerup-Thorbj\o rnsen recursion~\cite{Haagerup03,Cunden16b}
\be 
(k+2) Q_{k+1}^{\C}(m,n)  = (2k+1)(2n+\alpha) Q_k^{\C}(m,n)  + (k-1)(k^2-\alpha^2) Q_{k-1}^{\C}(m,n) .
\label{eq:HT}
\ee
\par
If $k$ is a positive integer, and we treat $\alpha$ as a parameter,
then $Q_k^{\C}(m,n) $ is a polynomial in $n$ of degree $k+1$.  Moreover, we can write
\be 
Q_k^{\C}(m,n)  = \frac{i^{1-k}}{k} n(n+\alpha) p_{k-1}(in;1,1-\alpha,1,1+\alpha),
\label{eq:LUE_continuousHahn}
\ee
where $p_{k-1}$ denotes the continuous Hahn polynomial~\eqref{eq:ContinuousHahn} of degree $k-1$.
\par
Note the alternative formula:
\begin{equation*}
Q_k^{\C}(m,n) = k! \binom{n+k-1}{n-1} \binom{m+k-1}{m-1} 
\setlength\arraycolsep{1pt}
\; {}_3 F_2\left(\begin{matrix}1-k,1-n,1-m \\1-k-n,1-k-m \end{matrix}; 1\right) .
\end{equation*}
\par
If $\alpha$ is fixed and we write $Q_k^{\C}(m,n)=a_n R_n$, 
where $a_n=n(n+\alpha)$, then
\begin{equation*} 
(k-1)(k+2) R_n = a_{n+1} R_{n+1} - (a_{n+1}+a_{n-1}) R_n + a_{n-1} R_{n-1}.
\end{equation*}
\end{rmk}
It is very natural to consider $m$ dependent on $n$. An interesting situation (from the point of view of the large-$n$ limit) is the case $m=cn$ with $c>0$ and fixed. The next result shows that the moments, as functions of $n$, are polynomials with all zeros on a vertical line in the complex plane.
\begin{lem}
\label{lem:Wishart}
Let $m=c n$ and $k$ a positive integer. Then, $n^{-(k+1)}Q^{\C}_k(cn,n)$  is a polynomial in $n^{-2}$ of degree  $\lfloor (k-1)/2 \rfloor$ with positive coefficients.
\end{lem}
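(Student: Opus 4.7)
The plan is to start from the genus-type formula \eqref{eq:moments_+ve} and, setting $m=cn$, rewrite it as
\[
\frac{1}{n^{k+1}}Q_k^{\C}(cn,n)=\sum_{\sigma\in S_k}c^{\#(\sigma)}\,n^{\#(\sigma)+\#(\gamma_k\sigma^{-1})-(k+1)}.
\]
Everything to be proved is encoded in the behaviour of the exponent $e(\sigma)=\#(\sigma)+\#(\gamma_k\sigma^{-1})-(k+1)$ as $\sigma$ ranges over $S_k$.

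The first step is a standard two-line argument showing $e(\sigma)$ is a non-positive even integer. For the sign, I would use the subadditivity of the transposition metric $|\sigma|=k-\#(\sigma)$ applied to the factorisation $\sigma\cdot(\gamma_k\sigma^{-1})=\gamma_k$, giving $|\sigma|+|\gamma_k\sigma^{-1}|\geq |\gamma_k|=k-1$, equivalently $\#(\sigma)+\#(\gamma_k\sigma^{-1})\leq k+1$. For the parity, I would observe that $\mathrm{sgn}(\sigma)\,\mathrm{sgn}(\gamma_k\sigma^{-1})=\mathrm{sgn}(\gamma_k)=(-1)^{k-1}$, which forces $\#(\sigma)+\#(\gamma_k\sigma^{-1})\equiv k+1\pmod 2$. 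Hence $e(\sigma)=-2g(\sigma)$ for some integer $g(\sigma)\geq 0$, and we may regroup
\[
\frac{1}{n^{k+1}}Q_k^{\C}(cn,n)=\sum_{g\geq 0}a_g(c)\,n^{-2g},\qquad a_g(c)=\sum_{\sigma:\,g(\sigma)=g}c^{\#(\sigma)}.
\]
Since $c>0$, every nonempty $a_g(c)$ is strictly positive, and this already proves that the expression is a polynomial in $n^{-2}$ with positive coefficients (with the positive-integer-coefficient remark obvious when $c\in\N$).

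It remains to identify the degree with $\lfloor (k-1)/2\rfloor$. The upper bound $g(\sigma)\leq \lfloor (k-1)/2\rfloor$ follows from the previous inequality together with the parity constraint (the minimum of $\#(\sigma)+\#(\gamma_k\sigma^{-1})$ compatible with both is $2$ when $k$ is odd and $3$ when $k$ is even). For the matching lower bound I would exhibit an explicit $\sigma$ of genus $\lfloor (k-1)/2\rfloor$: for $k=2\ell+1$ odd, take $\sigma=\gamma_k^{\ell+1}$, which is a $k$-cycle because $\gcd(\ell+1,k)=1$ and satisfies $\sigma^2=\gamma_k$, so both $\sigma$ and $\gamma_k\sigma^{-1}=\sigma$ have one cycle; for $k=2\ell$ even, take $\sigma=(1,3,5,\dots,2\ell-1)(2,4,6,\dots,2\ell)$, a product of two $\ell$-cycles, and a short check shows $\gamma_k\sigma^{-1}$ is a single $k$-cycle, giving $\#(\sigma)+\#(\gamma_k\sigma^{-1})=3$.

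The main obstacle is the last step: verifying that the explicit $\sigma$'s actually attain the extremal genus, which is the only place where a small calculation rather than a general principle is required. The parity/subadditivity bound is formal, but producing a concrete maximum-genus factorisation of an arbitrary $k$-cycle is the combinatorial heart of the argument.
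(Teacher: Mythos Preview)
Your proposal is correct and, for the parity argument, takes exactly the same route as the paper: start from the Hanlon--Stanley--Stembridge formula $\frac{1}{n^{k+1}}Q^{\C}_k(cn,n)=\sum_{\sigma\in S_k}c^{\#(\sigma)}n^{\#(\sigma)+\#(\gamma_k\sigma^{-1})-(k+1)}$ and use $\mathrm{sgn}(\sigma)\,\mathrm{sgn}(\gamma_k\sigma^{-1})=\mathrm{sgn}(\gamma_k)=(-1)^{k-1}$ to conclude the exponent has fixed parity.

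In fact you go further than the paper. The paper's proof stops after the parity computation and simply asserts the conclusion; it does not spell out the non-positivity of the exponent (your subadditivity-of-the-transposition-metric argument) nor the exact degree $\lfloor (k-1)/2\rfloor$ (your explicit extremal permutations $\sigma=\gamma_k^{\ell+1}$ for $k=2\ell+1$ and $\sigma=(1,3,\dots,2\ell-1)(2,4,\dots,2\ell)$ for $k=2\ell$). Both of your additions are correct---the explicit examples do achieve the claimed genus, as one checks directly---so your argument is a strict strengthening of what the paper writes down, with identical ingredients for the overlapping part.
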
 

\begin{proof} From~\cite[Corollary 2.4]{Hanlon92} we have
\be
\frac{1}{n^{k+1}}Q^{\C}_k(cn,n)= \sum_{\sigma\in S_k}c^{\#(\sigma)}n^{\#(\sigma)+\#(\gamma_k\sigma^{-1})-(k+1)}.
\label{eq:moments_+ve2}
\ee 
For any permutation $\sigma \in S_{k}$, one has
	\begin{equation*}
	(-1)^{k-\#(\sigma)} = \mathrm{sgn}(\sigma)
	\end{equation*}
	Hence
	\begin{equation*}
	(-1)^{k-\#(\sigma)+k-\#(\gamma_k\sigma^{-1})} = \mathrm{sgn}(\sigma)\mathrm{sgn}(\gamma_k\sigma^{-1}) = \mathrm{sgn}(\gamma_{k}) = (-1)^{k+1}
	\end{equation*}
	and so
	\begin{equation*}
	(-1)^{\#(\sigma)+\#(\gamma_k\sigma^{-1})} = (-1)^{k+1}
	\end{equation*}
	Hence $\#(\sigma)+\#(\gamma_k\sigma^{-1})-(k+1)$ is even and~\eqref{eq:moments_+ve2} is a polynomial in $n^{-2}$.
\end{proof}

\begin{theorem} Fix $c>0$. The zeros of the polynomials $Q_k^{\C}(cn,n)$ as a function of $n$
are purely imaginary and satisfy the interlacing property.  
\begin{proof} Let $q_k(n)=Q_k^{\C}(cn,n)/n$.  Then for each $k$, $q_k(n)$ is a polynomial of
degree $k$, with positive coefficients, and only powers $n^k, n^{k-2},\dots$ (see Lemma~\ref{lem:Wishart}).
It follows that if we define $p_k(x)=i^k q_k(-ix)$, then $p_k(x)$ is a polynomial of degree $k$,
with alternating signs, and satisfies the (Haagerup-Thorbj\o rnsen) recursion 
\be
(k+1)p_k(x) = (c+1)(2k-1)xp_{k-1}(x)-(k-2)((k-1)^2+(c-1)^2x^2)p_{k-2}(x).
\ee  
It now follows from~\cite[Corollary 2.4]{Liu07} that $\{p_k(x)\}$ is a `Sturm sequence' of polynomials. Hence the $p_k$'s have only real zeros, and they satisfy the interlacing property.
\end{proof}
\end{theorem}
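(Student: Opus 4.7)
The plan is to convert the assertion about purely imaginary zeros of $Q_k^{\C}(cn,n)$ (viewed as a polynomial in $n$) into a classical real-rootedness and interlacing statement for a three-term recurrence of Sturm type, using the Haagerup--Thorbj\o rnsen recurrence~\eqref{eq:HT} specialised to $\alpha=(c-1)n$. First I would set $q_k(n)=Q_k^{\C}(cn,n)/n$ and invoke Lemma~\ref{lem:Wishart} to conclude that $q_k$ is a polynomial in $n$ of degree $k$ with positive coefficients supported only on monomials of the same parity as $k$. This parity property is exactly what makes the subsequent change of variable sensible.

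Next I would substitute $n=-ix$ and set $p_k(x)=i^k q_k(-ix)$. The parity of $q_k$ makes $p_k$ a degree $k$ polynomial with real coefficients (of alternating signs), and rewriting~\eqref{eq:HT} after the substitution turns $2n+\alpha$ into a positive multiple of $x$, while the factor $k^2-\alpha^2=k^2-(c-1)^2 n^2$ becomes $k^2+(c-1)^2 x^2$. One ends up with a real three-term recurrence
\begin{equation*}
(k+1)p_k(x) = A_k\,x\,p_{k-1}(x) - B_k(x)\,p_{k-2}(x),
\end{equation*}
in which $A_k>0$ and $B_k(x)\ge 0$ for all real $x$, strictly so for $k\ge 2$.

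With the recurrence in this positivity form, I would appeal to the standard Sturm-sequence / three-term orthogonal-polynomial criterion: any sequence of polynomials obeying such a positive recurrence has real simple zeros that strictly interlace from $p_{k-1}$ to $p_k$. Undoing the substitution $x=in$ then transports the real interlacing zeros of $p_k$ into purely imaginary interlacing zeros of $q_k(n)$, and hence of $Q_k^{\C}(cn,n)=n\,q_k(n)$ after accounting for the trivial zero at $n=0$.

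The main obstacle I anticipate is the verification of nonnegativity of $B_k(x)$ on all of $\R$. Before the substitution, $k^2-\alpha^2$ is indefinite in $n$, and it is only the combined effect of the rotation $n\to -ix$ together with the $i^k$ factor in the definition of $p_k$ that causes the offending minus sign to disappear. Careful sign bookkeeping, together with checking the base cases $p_0,p_1$ so that the Sturm induction actually starts, is essentially the whole content of the argument once the strategy is set up.
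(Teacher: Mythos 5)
Your proposal follows essentially the same route as the paper: divide by $n$, use Lemma~\ref{lem:Wishart} to get the parity/positivity structure, rotate via $p_k(x)=i^kq_k(-ix)$ to turn the Haagerup--Thorbj\o rnsen recursion into a real three-term recurrence with the correct signs, and conclude by the Sturm-sequence criterion (the paper cites~\cite[Corollary 2.4]{Liu07} for this last step). The only slight inaccuracy is the claim that $B_k(x)$ is strictly positive for $k\ge 2$ --- the factor $(k-2)$ makes it vanish identically at $k=2$ --- but this does not affect the argument.
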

\subsection{Jacobi  unitary ensemble}
\label{sub:JUE}
Let $X_n$ be a JUE random matrix of size $n$ with parameters $\alpha_1=m_1-n$, $\alpha_2=m_2-n$.  It turns out that the suitable statistics in this ensemble are \emph{differences of consecutive moments} $\Delta Q_{k}^{\C}(\alpha_1,\alpha_2,n)$, defined as 
\begin{align*}
Q_k^{\C}(\alpha_1,\alpha_2,n)&=\mathbb{E} \operatorname{Tr} X_n^{k}\\
\Delta Q_{k}^{\C}(\alpha_1,\alpha_2,n)&=Q_k^{\C}(\alpha_1,\alpha_2,n)-Q_{k+1}^{\C}(\alpha_1,\alpha_2,n)
\end{align*}
for all $k\in\C$ for which the expectations exist.
\par
\begin{theorem}
\label{thm:JUE}
In terms of Wilson polynomials, writing $k=ix-1/2$, for $\operatorname{Re}(k)>-\alpha_2-1$,
\begin{multline} 
\Delta Q_k^{\C}(\alpha_1,\alpha_2,n)=\frac{(k+\alpha_2)!}{(k+\alpha_1+\alpha_2+2n)!} \frac{(\alpha_1+n)\; (\alpha_1+\alpha_2+n)!}{(n-1)!\; (\alpha_2+n-1)!}
\\
(-1)^{n-1}
W_{n-1}\left(x^2;\; \frac{3}{2}, \frac{1}{2}, \alpha_2+\frac{1}{2}, \frac{1}{2}-\alpha_1-\alpha_2-2n\right).
\end{multline}
This shows that $\Delta Q_k^{\C}(\alpha_1,\alpha_2,n) ((k+\alpha_1+\alpha_2)!/(k+\alpha_1)!)$ can be extended to a  polynomial invariant under the reflection $k \to -1-k$ (reciprocity law) and, moreover, its complex zeros lie on the critical line $\operatorname{Re} (k)=-1/2$.
\end{theorem}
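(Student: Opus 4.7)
The plan is to follow the template of Theorems~\ref{thm:GUE} and~\ref{thm:LUE}: find a terminating hypergeometric representation for $\Delta Q_k^{\C}(\alpha_1,\alpha_2,n)$ valid for integer $k\geq 0$, match it against the definition~\eqref{eq:Wilson} of the Wilson polynomial with parameters $(a,b,c,d)=(3/2,\,1/2,\,\alpha_2+1/2,\,1/2-\alpha_1-\alpha_2-2n)$, and then extend to complex $k$ by Carlson's theorem.

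For the first step I would start from $Q_k^{\C}(\alpha_1,\alpha_2,n)=\int_0^1 x^k\,\rho_n^{(2)}(x)\,dx$, writing $\rho_n^{(2)}$ via the Christoffel--Darboux kernel of the shifted Jacobi polynomials on $[0,1]$ against the weight $(1-x)^{\alpha_1}x^{\alpha_2}$. Expanding the polynomials and integrating termwise with the Beta integral produces $Q_k^{\C}$ as a sum of ratios of Gamma functions. Crucially, passing to the difference $\Delta Q_k^{\C}=Q_k^{\C}-Q_{k+1}^{\C}$ inserts an extra factor $(1-x)$ in the integrand; this is what allows a telescopic cancellation that collapses the double sum into the \emph{balanced, terminating} series
$$
{}_4F_3\!\left(\begin{matrix}-(n-1),\;1-n-\alpha_1,\;k+2,\;1-k\\ 2,\;\alpha_2+2,\;2-\alpha_1-\alpha_2-2n\end{matrix};\;1\right).
$$
Once this is in hand, setting $k=ix-1/2$ (so that $3/2+ix=k+2$ and $3/2-ix=1-k$) and comparing with~\eqref{eq:Wilson} matches every upper and lower parameter of the ${}_4F_3$ with those of $W_{n-1}(x^2;\,3/2,\,1/2,\,\alpha_2+1/2,\,1/2-\alpha_1-\alpha_2-2n)$; the explicit prefactor in the theorem then absorbs both the normalisation $(a+b)_{n-1}(a+c)_{n-1}(a+d)_{n-1}$ coming from~\eqref{eq:Wilson} and the Gamma factors produced by the Beta integral.

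For the extension to complex $k$, the integral representation shows $\Delta Q_k^{\C}$ is analytic in $\operatorname{Re}(k)>-\alpha_2-1$. A crude polynomial bound on $\rho_n^{(2)}$ on $[0,1]$, followed by the same kind of Gamma-function estimate used in the proof of Theorem~\ref{thm:GUE}, yields a bound of order $\Ord(|k|^{N})$ on $\Delta Q_k^{\C}/(k+\alpha_2)!$ for some $N$, certainly of type $\Ord(e^{c|k|})$ for any $c>0$. Carlson's theorem~\cite[Thm.~2.8.1]{Andrews99} then lifts the identity from $k\in\N$ to the full half-plane and, by meromorphic continuation, to $\C$. The reflection symmetry $k\mapsto -1-k$ amounts to $x\mapsto -x$, under which $W_{n-1}(x^2;\cdots)$ is manifestly invariant. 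For the zeros, Neretin's orthogonality~\cite{Neretin02} for Wilson polynomials with a negative fourth parameter (recalled in Appendix~\ref{app:neoOP}) guarantees that the zeros of $W_{n-1}(\,\cdot\,;\,3/2,\,1/2,\,\alpha_2+1/2,\,1/2-\alpha_1-\alpha_2-2n)$ in the variable $x^2$ are real and positive, i.e.\ $x\in\R$, whence $\operatorname{Re}(k)=-1/2$.

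The main obstacle I expect is the algebraic step collapsing the JUE moment sum into a balanced ${}_4F_3$: the Pochhammers generated by the sum of squared Jacobi polynomials do not by themselves line up with the three lower parameters of a Wilson polynomial, and it is only after taking the first difference $\Delta Q_k^{\C}$, with its telescoping $(1-x)$ factor, that the sum becomes balanced. This is really the conceptual heart of the JUE case and explains why the ``right'' spectral zeta function in~\eqref{eq:spectralzets_ensembles2} is the first difference $\E(\zeta_{X_n}(s)-\zeta_{X_n}(s-1))$ rather than $\E\,\zeta_{X_n}(s)$ itself.
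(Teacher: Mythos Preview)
Your proposal is correct in outline but takes a genuinely different route from the paper's own proof. The paper does \emph{not} compute the ${}_4F_3$ directly from the integral. Instead it first establishes (as a separate Proposition, by adapting Ledoux's differential--operator method~\cite{Ledoux04}) a three--term recurrence in $k$ for $\Delta Q_k^{\C}(\alpha_1,\alpha_2,n)$, namely~\eqref{eq:rec_JUE}, and then simply observes that this recurrence \emph{is} the discrete Sturm--Liouville equation~\eqref{eq:Wilson_SL} defining the Wilson polynomials with the stated parameters; checking the two initial values $\Delta Q_0^{\C}$ and $\Delta Q_1^{\C}$ then pins down the identification for integer $k$. The extension to complex $k$ via Carlson's theorem is the same as yours. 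The explicit ${}_4F_3$ you wrote down is recorded in the paper only as a \emph{remark} after the proof, not as the engine of the argument.

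Your approach has the advantage of being self--contained and of explaining conceptually why the first difference $\Delta Q_k^{\C}$ (rather than $Q_k^{\C}$) is the right object: the extra $(1-x)$ balances the ${}_4F_3$. Its weak point is exactly the one you flag---collapsing the Christoffel--Darboux sum of squared Jacobi polynomials, integrated against $x^{k+\alpha_2}(1-x)^{\alpha_1+1}$, into a single balanced ${}_4F_3$ is a nontrivial summation identity that you have asserted rather than proved. The paper's recurrence approach sidesteps this algebra entirely and, as a bonus, produces~\eqref{eq:rec_JUE} as an independent result of interest (the JUE analogue of the Harer--Zagier and Haagerup--Thorbj{\o}rnsen recursions). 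Either route is legitimate; if you pursue yours, the cleanest way to carry out the collapse is probably not brute force but rather to verify that your candidate ${}_4F_3$ satisfies the same three--term recurrence in $k$ with the same initial data---which is of course exactly what the paper does.
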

In this case, our strategy is to look for a three-term recursion for $\Delta Q_k^{\C}(\alpha_1,\alpha_2,n)$ when $k$ is an integer. In fact, 
adapting a method due to Ledoux~\cite[Eq.~(30)-(31)]{Ledoux04}, we find the following recurrence relation for the JUE which is the analogue of the Harer-Zagier and Haagerup-Thorbj\o rnsen recursions.
\begin{prop}[Three term recurrence relation for JUE] Let $k\in\Z$. Then,
\begin{align}
R_k\Delta Q_{k+1}^{\C}(\alpha_1,\alpha_2,n)+S_k\Delta Q_{k}^{\C}(\alpha_1,\alpha_2,n)+T_k\Delta Q_{k-1}^{\C}(\alpha_1,\alpha_2,n)=0,
\label{eq:rec_JUE}
\end{align}
with `initial conditions'
\begin{align}
\Delta Q_{0}^{\C}(\alpha_1,\alpha_2,n)&=\frac{n(\alpha_1+n)}{\alpha_1+\alpha_2+2n}\\
\Delta Q_{1}^{\C}(\alpha_1,\alpha_2,n)&=\frac{n(\alpha_1+\alpha_2+n)(\alpha_1+n)(\alpha_2+n)}{(\alpha_1+\alpha_2+2n-1)(\alpha_1+\alpha_2+2n)(\alpha_1+\alpha_2+2n+1)}.
\end{align}
The coefficient $R_k$, $S_k$, and  $T_k$ are given by
\begin{align*}
R_k(\alpha_1,\alpha_2,n)&=(k+2)((\alpha_1+\alpha_2+2n)^2-(k+1)^2),\\
S_k(\alpha_1,\alpha_2,n)&=-(2k+1)(2n(\alpha_1+\alpha_2+n)+\alpha_2^2+\alpha_1\alpha_2-k(k+1)),\\
T_k(\alpha_1,\alpha_2,n)&=(k-1)(\alpha_2^2-k^2).
\end{align*}
\end{prop}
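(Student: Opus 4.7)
The plan is to adapt Ledoux's method~\cite{Ledoux04} from the Gaussian case to the Jacobi setting, exploiting the ODE structure of the one-point density. The starting observation is the reformulation
\[
\Delta Q_k^{\C}(\alpha_1,\alpha_2,n) = \int_0^1 x^k(1-x)\,\rho_n^{(2)}(x)\,dx.
\]
The factor $(1-x)$, absorbed into the test function, is precisely what forces us to work with \emph{differences} of consecutive moments rather than individual ones: it provides the extra vanishing at $x=1$ needed to kill boundary terms produced by integration by parts.

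The second ingredient is a differential equation for the density. Via Christoffel--Darboux, $\rho_n^{(2)}(x)$ is a Wronskian of the two adjacent normalised Jacobi wavefunctions $\psi_{n-1}(x)$ and $\psi_n(x)$, each of which satisfies a second-order linear ODE coming from the Jacobi differential equation for $P^{(\alpha_1,\alpha_2)}_j$. A standard manipulation yields a third-order linear ODE for $\rho_n^{(2)}$ on $(0,1)$ with polynomial coefficients of total degree at most three in $x$.

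Substituting this ODE into $\int_0^1 x^k(1-x)\rho_n^{(2)}(x)\,dx$ and integrating by parts three times, the test function $x^k(1-x)$ together with the weight factors hidden in $\rho_n^{(2)}$ causes every boundary term at $x=0$ and $x=1$ to vanish (for parameters in the admissible range). The degree-three polynomial coefficients then express the resulting bulk integrals as a linear combination of
\[
\int_0^1 x^{k-1}(1-x)\rho_n^{(2)}(x)\,dx,\qquad \int_0^1 x^{k}(1-x)\rho_n^{(2)}(x)\,dx, \qquad \int_0^1 x^{k+1}(1-x)\rho_n^{(2)}(x)\,dx,
\]
which are exactly $\Delta Q_{k-1}^{\C}$, $\Delta Q_{k}^{\C}$, and $\Delta Q_{k+1}^{\C}$. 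Collecting coefficients produces the recurrence \eqref{eq:rec_JUE}. The initial values $\Delta Q_0^{\C}$ and $\Delta Q_1^{\C}$ are obtained by direct evaluation using the Selberg integral.

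The main obstacle is algebraic bookkeeping. The third-order ODE for $\rho_n^{(2)}$ generates many intermediate terms under repeated integration by parts, and tracking which contributions end up in $R_k$, $S_k$, $T_k$ after cancellations is delicate. In particular, recovering the factored forms $(\alpha_1+\alpha_2+2n)^2-(k+1)^2$ in $R_k$ and $\alpha_2^2-k^2$ in $T_k$ requires careful grouping of terms built from the parameter combinations $\alpha_1+\alpha_2+2n$ and $\alpha_2$ that appear in the Jacobi ODE; the cleanest organisation is probably to mirror Ledoux's bilinear-form identities \cite[Eq.~(30)--(31)]{Ledoux04} adapted to the interval $[0,1]$ with weight $(1-x)^{\alpha_1}x^{\alpha_2}$.
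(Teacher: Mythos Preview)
Your proposal follows the same route as the paper, which also points to Ledoux's bilinear identities \cite[Eq.~(30)--(31)]{Ledoux04} as the method, and your reformulation $\Delta Q_k=\int_0^1 x^k(1-x)\rho_n^{(2)}(x)\,dx$ is exactly the right starting point. One caution: your intermediate assertion that ``a standard manipulation yields a third-order linear ODE for $\rho_n^{(2)}$'' is not automatic in the Jacobi case, since $\psi_{n-1}$ and $\psi_n$ satisfy second-order equations with \emph{different} eigenvalues and the leading coefficient $x(1-x)$ vanishes at both endpoints; indeed the ODE the paper later records for the JUE density (Proposition~\ref{prop:ODE}) is fourth order, and deriving a minimal third-order equation directly from the Wronskian structure is not obvious without already knowing the recurrence. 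The cleanest and non-circular organisation is precisely the one you flag at the end---work with Ledoux's bilinear-form identities for the pair $(\psi_{n-1},\psi_n)$ directly rather than passing through a single ODE for the density---and that is what the paper does.
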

\begin{proof}[Proof of Theorem~\ref{thm:JUE}] The proof is immediate when $k$ is a nonnegative integer by observing that~\eqref{eq:rec_JUE} is the discrete S-L problem for Wilson polynomials, and by checking the initial conditions. For $k$ complex we can use the same method of Theorem~\ref{thm:GUE}.
\end{proof}
\begin{rmk}
Using the hypergeometric representation of Wilson polynomials~\eqref{eq:Wilson} we have the explicit formula
\begin{multline*}
\Delta Q_k^{\C}(\alpha_1,\alpha_2,n) = \Delta Q_1^{\C}(\alpha_1,\alpha_2,n) \frac{(k+\alpha_2)!}{(1+\alpha_2)!} \frac{(1+\alpha_1+\alpha_2+2n)!}{(k+\alpha_1+\alpha_2+2n)!}\\
\times\setlength\arraycolsep{1pt}
\; {}_4 F_3\left(\begin{matrix}1-k,2+k,1-n,1-n-\alpha_1 \\2,2+\alpha_2,2-\alpha_1-\alpha_2-2n \end{matrix}; 1\right) .
\label{eq:Jacobi4F3}
\end{multline*}
To our knowledge, this hypergeometric representation is new.
\end{rmk}

\begin{rmk}
Ledoux~\cite{Ledoux04} obtained a fourth order recursion for moment differences of the Jacobi ensemble, but the ensemble he considers is shifted compared to ours. In our notation, the moments $L(k)$ considered in~\cite{Ledoux04} can be written as
$$
L(k)= \int_0^1 (2x-1)^k  \rho^{(2)}_{n}(x) dx = \sum_{j=0}^k \binom{k}{j} 2^j (-1)^{k-j}\int_{0}^{1}x^j\rho^{(2)}_{n}(x) 
$$ 
for which it is shown that $L(k)-L(k+2)$ satisfies a fourth order recursion. Using $(2x-1)^k - (2x-1)^{k+2} = 4(2x-1)^{k}(x-x^{2})$ we obtain
$$
L(k)-L(k+2) = \sum_{j=0}^{k}  \binom{k}{j}  2^{j+2} (-1)^{k-j} \Delta Q_{j+1}^{\C}(\alpha_1,\alpha_2,n).
$$
It follows that Ledoux's moment differences can be expressed as a linear combination of hypergeometric functions.
\end{rmk}

\par 
The difference of moments $\Delta Q^{\C}_k(\alpha_1,\alpha_2,n)$ can alternatively be written in terms of Racah polynomials.
\par
\begin{thmbis}{thm:JUE}
The JUE difference of moments is ($\alpha_1+\alpha_2+n\notin\Z$)
\begin{multline} 
\Delta Q_k^{\C}(\alpha_1,\alpha_2,n)=(-1)^{n-1}n(n+\alpha_1)(n+\alpha_2)(n+\alpha_1+\alpha_2)\\
\small\times\frac{\sin\left(\pi\left(\alpha_1+\alpha_2+2n-1\right)\right)}{\sin\left(\pi\left(\alpha_1+\alpha_2+n\right)\right)}
\frac{\left(2+\alpha_2\right)_{k-1}}{\left(\alpha_1+\alpha_2+2n-1\right)_{k+2}}\\
\small\times R_{n-1}\left(\left(k-1\right)\left(k+2\right);\;1,-\alpha_1-2n,1-\alpha_1-\alpha_2-2n,1+\alpha_1+\alpha_2+2n\right).
\end{multline}
\end{thmbis}
\par
Using the recurrence relation~\eqref{eq:rec_JUE}, one can verify the following identity between positive and negative moments (this is the analogue of~\eqref{eq:dualityLUE}).
\begin{prop}[Reciprocity law for JUE]
\be
\Delta Q_{-(k+1)}^{\C}(\alpha_1,\alpha_2,n)=\left(\prod_{j=-k}^k\frac{\alpha_1+\alpha_2+2n-j}{\alpha_2-j}\right)\Delta Q_{k}^{\C}(\alpha_1,\alpha_2,n).
\ee
\end{prop}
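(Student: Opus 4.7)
The plan is to deduce the reciprocity law as a direct corollary of the Wilson polynomial representation of $\Delta Q_k^{\C}$ in Theorem~\ref{thm:JUE}, rather than going through the three-term recurrence~\eqref{eq:rec_JUE} (which is the route hinted at in the preceding remark). The key observation is that under the substitution $k = ix - \tfrac{1}{2}$ one has $x^{2} = -(k+\tfrac{1}{2})^{2}$, which is manifestly invariant under the involution $k \mapsto -k-1$. Since the Wilson polynomial
\[
W_{n-1}\!\left(x^{2};\,\tfrac{3}{2},\tfrac{1}{2},\alpha_{2}+\tfrac{1}{2},\tfrac{1}{2}-\alpha_{1}-\alpha_{2}-2n\right)
\]
appearing in the formula depends on $k$ only through $x^{2}$, it is unchanged by this involution, so all of the $k$-dependence that transforms nontrivially is concentrated in the Gamma prefactor $\Gamma(k+\alpha_{2}+1)/\Gamma(k+\alpha_{1}+\alpha_{2}+2n+1)$.

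From here the argument is purely algebraic. Taking the ratio,
\[
\frac{\Delta Q_{-k-1}^{\C}(\alpha_{1},\alpha_{2},n)}{\Delta Q_{k}^{\C}(\alpha_{1},\alpha_{2},n)} \;=\; \frac{\Gamma(\alpha_{2}-k)\,\Gamma(\alpha_{1}+\alpha_{2}+2n+k+1)}{\Gamma(\alpha_{1}+\alpha_{2}+2n-k)\,\Gamma(\alpha_{2}+k+1)}.
\]
For a nonnegative integer $k$, the telescoping identity
\[
\frac{\Gamma(z+k+1)}{\Gamma(z-k)} \;=\; \prod_{j=-k}^{k}(z-j)
\]
applied once with $z = \alpha_{2}$ and once with $z = \alpha_{1}+\alpha_{2}+2n$ collapses the right-hand side to exactly $\prod_{j=-k}^{k}(\alpha_{1}+\alpha_{2}+2n-j)/(\alpha_{2}-j)$, which is the asserted identity. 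For complex $k$ the product on the right is interpreted via the corresponding Gamma-ratio, so the identity is in fact already proved in that form at the level of Gamma functions.

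I anticipate no serious obstacle: the mechanism is entirely transparent once one notes that the Wilson factor sees $k$ only through $x^{2}$, which is even under reflection about $k = -\tfrac{1}{2}$. As a sanity check (and as an independent proof along the lines suggested in the remark) one can instead substitute the claimed product into~\eqref{eq:rec_JUE}, verify that the ratio $\Delta Q_{-k-1}^{\C}/\Delta Q_{k}^{\C}$ satisfies the induced first-order relation in $k$, and match one initial value using the explicit $\Delta Q_{0}^{\C}(\alpha_{1},\alpha_{2},n) = n(\alpha_{1}+n)/(\alpha_{1}+\alpha_{2}+2n)$; this route is more computational and worth running only as an independent confirmation.
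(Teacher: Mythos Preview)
Your proof is correct. The invariance of $x^{2}=-(k+\tfrac12)^{2}$ under $k\mapsto-k-1$ is exactly the mechanism behind the ``reciprocity law'' already flagged in Theorem~\ref{thm:JUE}, and your Gamma-ratio computation cleanly extracts the explicit product.

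The paper's own route is the one you mention as a sanity check: it simply says the identity can be verified from the three-term recurrence~\eqref{eq:rec_JUE}. That approach is more hands-on (check that the product factor intertwines the recurrence at $k$ and at $-k-1$, then match an initial value), whereas your argument reads the result off directly from the Wilson representation once that theorem is in place. Both are valid; yours is the more conceptual of the two, and makes transparent that the reciprocity law is nothing more than the evenness of the Wilson factor in $x$ combined with the explicit Gamma prefactor.
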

For instance, ($\alpha_1=m_1-n$, $\alpha_2=m_2-n$):
\begin{align*}
\Delta Q_{0}^{\C}(\alpha_1,\alpha_2,n)&=\frac{nm_1}{\alpha_1+\alpha_2+2n}\\
\Delta Q_{-1}^{\C}(\alpha_1,\alpha_2,n)&=\frac{nm_1}{\alpha_2}\\
\Delta Q_{1}^{\C}(\alpha_1,\alpha_2,n)&=\frac{nm_1m_2(m_1+m_2-n)}{(\alpha_1+\alpha_2+2n)\left((\alpha_1+\alpha_2+2n)^2-1\right)}\\
\Delta Q_{-2}^{\C}(\alpha_1,\alpha_2,n)&=\frac{nm_1m_2(m_1+m_2-n)}{\alpha_2\left(\alpha_2^2-1\right)}\\
\Delta Q_{2}^{\C}(\alpha_1,\alpha_2,n)&=\frac{nm_1m_2(m_1+m_2-n)\left(m_2(m_1+m_2-n)+nm_1-2\right)}{(\alpha_1+\alpha_2+2n)\left((\alpha_1+\alpha_2+2n)^2-1\right)\left((\alpha_1+\alpha_2+2n)^2-4\right)}\\
\Delta Q_{-3}^{\C}(\alpha_1,\alpha_2,n)&=\frac{nm_1m_2(m_1+m_2-n)\left(m_2(m_1+m_2-n)+nm_1-2\right)}{\alpha_2\left(\alpha_2^2-1\right)\left(\alpha_2^2-4\right)}\\
\end{align*}

\begin{table}
\begin{tabular}{ ccc }
 \hline
{\bf Matrix ensembles} & {\bf Correlation functions } & {\bf Moments } \\
&(classical OP's) &(hypergeometric OP's) \\
\hline\hline
 GUE & Hermite& Meixner-Pollaczek\\
  LUE & Laguerre& continuous dual Hahn\\
   JUE & Jacobi& Wilson\\
 \hline
\end{tabular}
\caption{Relation between the correlation functions (in terms of classical OP's) and the moments (given by hypergeometric OP's) of the classical unitary ensembles.}
\label{tab:sum}
\end{table}

\subsection{Generating functions}
\label{sub:GF}
It is sometimes convenient to define the moments of random matrices in terms of their generating function.
The first example of such a generating function was constructed by Harer and Zagier for the GUE of fixed size $n$ (see Eq.~\eqref{eq:GF_GUEn} below). This convergent series is a rational function. As emphasised by Morozov and Shakirov~\cite{Morozov09}, from the point of view of random matrices and enumeration problems, this is a highly non-trivial result: a generating function for moments at all genera appears to be rational. The generating function of covariances of the GUE computed in~\cite{Morozov09} turns out to be again an elementary function. The generating function of higher order cumulants of the GUE have been studied recently by Dubrovin and Yang~\cite{Dubrovin16} who expressed them in terms of traces of $2\times2$ matrix-valued series. 
\par
One of the advantages of the representation of the moments in terms of hypergeometric OP's discussed in the present work, is that  we can write explicit formulae for the generating functions of the moments of GUE and LUE for fixed $n$ \emph{and/or} $k$. Remarkably, these closed expressions are elementary functions.
\begin{prop}
\label{prop:GF} Let $Q^{\C}_k(n)$ and $Q^{\C}_k(m,n)$ be the moments of the GUE and LUE, respectively. Then
\begin{itemize}
\item[GUE:]
\be
\sum_{k=0}^{\infty}\frac{Q^{\C}_k(n)}{(2k-1)!!}\,t^k=\frac{1}{2t}\left(\left(\frac{1+t}{1-t}\right)^n-1\right)
\label{eq:GF_GUEn}
\ee
\be
\sum_{n=1}^{\infty}\frac{Q^{\C}_k(n)}{(2k-1)!!}\,z^n=\frac{z}{1-z^2}\left(\frac{1+z}{1-z}\right)^{k+1}
\label{eq:GF_GUEk}
\ee
\be
\sum_{n\geq1,k\geq0}\frac{Q^{\C}_k(n)}{(2k-1)!!}\,t^k z^n=\frac{z}{1-z}\frac{1}{(1-t)-z(1+t)}.
\label{eq:GF_GUEnk}
\ee
\item[LUE:]
\be
\sum_{k=1}^{\infty}\frac{Q^{\C}_k(m,n)}{(k-1)!}\,\frac{t^k}{k!}=te^t
L_{n-1}^{(1)}(-t)L_{m-1}^{(1)}(-t).
\label{eq:GF_LUEn}
\ee
\be
\sum_{n=1}^{\infty}Q^{\C}_k(m,n)\frac{(m-1)!}{(k+\alpha)!}\,\frac{z^n}{n!}=\frac{z}{(1-z)^{\alpha+k+1}}
\setlength\arraycolsep{1pt}
\; {}_2 F_1\left(\begin{matrix}-k,1-k\\2\end{matrix};z\right)
\label{eq:GF_LUEk}
\ee
\be
\sum_{n,m,k\geq1}\frac{Q^{\C}_k(m,n)}{(k-1)!}\,\frac{z_1^nz_2^mt^k}{k!}=\frac{z_1z_2t}{(1-z_1)^2(1-z_2)^2}\,
\exp\left(\frac{3z_1z_2-2(z_1+z_2)+1}{(1-z_1)(1-z_2)}t\right).
\label{eq:GF_LUEmnk}
\ee
\end{itemize}
\begin{proof}
The sum~\eqref{eq:GF_GUEn} can be computed from~\eqref{eq:GUE_Meixner} using the formula of the generating function of Meixner polynomials~\cite[Eq. (9.10.11)]{Koekoek10}:
\be
\sum_{k=1}^{\infty}\frac{k}{n}\frac{Q^{\C}_k(n)}{(2k-1)!!}t^{k-1}=\left(\frac{1+t}{1-t}\right)^{n-1}.\nonumber
\ee
The formula for generating function~\eqref{eq:GF_GUEn} follows from the identity
\be
\frac{n}{t}\frac{d}{dt}\sum_{k=0}^{\infty}\frac{Q^{\C}_k(n)}{(2k-1)!!}\,t^k=\sum_{k=1}^{\infty}\frac{k}{n}\frac{Q^{\C}_k(n)}{(2k-1)!!}t^{k-1}.\nonumber
\ee
\par
The generating function~\eqref{eq:GF_GUEk} for fixed $k$, is a direct consequence of the representation of the moments in terms of Meixner-Pollaczek polynomials~\eqref{eq:GUE_Meixner-Pollaczek} and their generating function~\cite[Eq. (9.7.11)]{Koekoek10}.
\par 
Finally, the joint generating function~\eqref{eq:GF_GUEnk} is the resummation in $n$ of~\eqref{eq:GF_GUEn} (or the resummation in $k$ of~\eqref{eq:GF_GUEk}).
\par
For the LUE we use the generating series of continuous (dual) Hahn polynomials. 
From the representation of the LUE moments as continuous Hahn~\eqref{eq:LUE_continuousHahn}, using~\cite[Eq. (9.5.11)]{Koekoek10} we get
\begin{align}
\sum_{k=1}^{\infty}\frac{Q^{\C}_k(m,n)}{nm\,(k-1)!}\,\frac{t^{k-1}}{k!}&=
\setlength\arraycolsep{1pt}
 {}_1 F_1\left(\begin{matrix}1-n\\2\end{matrix};-t\right)
\setlength\arraycolsep{1pt}
 {}_1 F_1\left(\begin{matrix}1+m\\2\end{matrix};t\right)\nonumber\\
&=
\setlength\arraycolsep{1pt}
 {}_1 F_1\left(\begin{matrix}1-n\\2\end{matrix};-t\right)
\setlength\arraycolsep{1pt}
 {}_1 F_1\left(\begin{matrix}1-m\\2\end{matrix};-t\right)e^t.\nonumber
\end{align}
Note that the hypergeometric functions on the right-hand side are terminating series. In fact, they are  Laguerre polynomials~\cite[Eq. (9.12.1)]{Koekoek10}, thus proving~\eqref{eq:GF_LUEn}. 
\par
For~\eqref{eq:GF_LUEk} we use the representation in terms of continuous dual Hahn polynomials~\eqref{eq:LUE_continuousDualHahn} and  the formula of the generating series~\cite[Eq. (9.3.11)]{Koekoek10}. We have
\be
\sum_{n=1}^{\infty}Q^{\C}_k(m,n)\frac{(m-1)!}{(k+\alpha)!}\,\frac{z^n}{n!}=\frac{z}{(1-z)^{\alpha-k}}
\setlength\arraycolsep{1pt}
\; {}_2 F_1\left(\begin{matrix}k+2,k+1\\2\end{matrix};z\right),\nonumber
\ee
which is equal to~\eqref{eq:GF_LUEk} by Euler's tranformation. Note that the hypergeometric series is terminating. In fact, one could also write it in terms of Jacobi polynomials.  
\par
The joint generating series in~\eqref{eq:GF_LUEmnk} is a resummation of~\eqref{eq:GF_LUEn} 
over $n$ and $m$ using the known formula for the generating function of Laguerre polynomials~\cite[Eq. (9.12.10)]{Koekoek10}.
\end{proof}
\end{prop}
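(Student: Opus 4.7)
The plan is to translate each of the six generating-function identities in Proposition~\ref{prop:GF} into a statement about a known classical generating function from the Askey scheme, using the hypergeometric-OP representations of the GUE and LUE moments established in Theorems~\ref{thm:GUE} and~\ref{thm:LUE} (and their primed versions). For each fixed variable ($n$ fixed, $k$ fixed, or both summed) there is a natural choice of OP in whose variable the summation index sits as the degree.

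For the GUE identities, the strategy runs as follows. For fixed $n$, I would use Theorem~\ref{thm:GUE}$'$, which writes $Q_k^{\C}(n)/[n(2k-1)!!]$ as a Meixner polynomial $M_k(n-1;2,-1)$ of degree $k$ evaluated at $n-1$. The Koekoek--Lesky--Swarttouw generating function \cite[Eq.~(9.10.11)]{Koekoek10} for Meixner polynomials at parameter $\beta=2$, $c=-1$, evaluated at $n-1$, produces $\bigl((1+t)/(1-t)\bigr)^{n-1}$ as the generating function of $k M_k(n-1;2,-1)/n$ (up to an index shift), giving the intermediate identity stated inside the proof; one antiderivative in $t$ and a check at $t=0$ then yields \eqref{eq:GF_GUEn}. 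For fixed $k$, I would instead use the Meixner--Pollaczek form \eqref{eq:GUE_Meixner-Pollaczek}, where $Q_k^{\C}(n)/(2k-1)!!$ becomes a Meixner--Pollaczek polynomial of degree $n-1$, and apply the standard generating function \cite[Eq.~(9.7.11)]{Koekoek10} at $\lambda=1$, $\phi=\pi/2$; the specialisation $(1-ze^{i\phi})^{-\lambda-ix}(1-ze^{-i\phi})^{-\lambda+ix}$ collapses to the rational expression in \eqref{eq:GF_GUEk}. Equation~\eqref{eq:GF_GUEnk} then comes by summing \eqref{eq:GF_GUEn} as a geometric series in $z$ (or, equivalently, \eqref{eq:GF_GUEk} as a geometric series in $t$), which is routine.

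For the LUE identities, the parallel plan uses continuous Hahn polynomials for fixed $n$ and continuous dual Hahn polynomials for fixed $k$. From the continuous Hahn representation \eqref{eq:LUE_continuousHahn} and the generating function \cite[Eq.~(9.5.11)]{Koekoek10} with parameters $a=1, b=1-\alpha, c=1, d=1+\alpha$ at $x=in$, the right-hand side is a product of two $_1F_1$'s; both reduce to terminating series because the upper parameters become $1-n$ and $1+m$, and after applying Kummer's transformation $_1F_1(a;b;t)=e^{t}\,{}_1F_1(b-a;b;-t)$ to the second factor, both terminating $_1F_1$'s identify with Laguerre polynomials $L^{(1)}_{n-1}(-t)$ and $L^{(1)}_{m-1}(-t)$ via \cite[Eq.~(9.12.1)]{Koekoek10}, giving \eqref{eq:GF_LUEn}. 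For fixed $k$, I would use the continuous dual Hahn form \eqref{eq:LUE_continuousDualHahn} and the generating function \cite[Eq.~(9.3.11)]{Koekoek10} at $(a,b,c)=(3/2,1/2,\alpha+1/2)$, $x=i(k+1/2)$; the resulting hypergeometric factor is a terminating $_2F_1$, and one application of Euler's transformation brings it to the shape of \eqref{eq:GF_LUEk}.

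The joint generating function \eqref{eq:GF_LUEmnk} is then obtained by resumming \eqref{eq:GF_LUEn} over $n$ and $m$ via the bilinear generating function for Laguerre polynomials \cite[Eq.~(9.12.10)]{Koekoek10}, after identifying the correct shift in the index to match the $(k-1)!$ normalisation. The main obstacle I anticipate is purely bookkeeping: each generating-function entry in \cite{Koekoek10} is written for a canonical normalisation that differs from ours by explicit ratios of Gamma functions, so I would need to carefully track the prefactors $1/(2k-1)!!$, $n(n+\alpha)/k!(k-1)!$, etc., produced by Theorems~\ref{thm:GUE} and~\ref{thm:LUE}, and reconcile shifted summation indices (for example, the Meixner series starts at $k=0$ while the primitive that yields \eqref{eq:GF_GUEn} has its constant of integration fixed by $Q_0^{\C}(n)=n$). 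No deep input beyond the OP representations and the Askey-scheme identities is required; Carlson's theorem, already used in the earlier theorems, has implicitly taken care of extending the identities from integer $k$, and the series involved converge on small neighbourhoods of the origin in $t$ and $z$, so formal manipulations are justified.
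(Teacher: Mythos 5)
Your proposal is correct and follows essentially the same route as the paper's proof: the same hypergeometric representations (Meixner/Meixner--Pollaczek for GUE, continuous Hahn/continuous dual Hahn for LUE), the same generating-function entries from Koekoek--Lesky--Swarttouw, the same Kummer and Euler transformations, and the same resummation strategy for the joint series. The only differences are presentational (you make the bookkeeping caveats explicit), so there is nothing substantive to add.
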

\begin{rmk}
The series~\eqref{eq:GF_GUEn} was computed by Harer and Zagier~\cite{Harer86} using different methods. 
It is surprising that, although the three-term recurrence in $k$ and the generating function were known, nobody recognized the moments of the GUE as Meixner polynomials. The generating function of the GUE for fixed $k$ (Eq.~\eqref{eq:GF_GUEk}) does not seem to appear in the previous literature. The joint series~\eqref{eq:GF_GUEnk} appears in the work of Morozov and Shakirov~\cite{Morozov09} who stressed the nontrivial fact that it is a rational function in both variables.
\par
The generating functions~\eqref{eq:GF_LUEn}-\eqref{eq:GF_LUEk}-\eqref{eq:GF_LUEmnk} for the LUE seem to be new. It is remarkable, again, that these series sum to elementary functions.
\end{rmk}
\section{Large-$n$ asymptotics of the spectral zeta functions}
\label{sec:asymptotics}
It is a classical result that, after rescaling, the one-point function $\rho_n^{(\beta)}(x)$ of the random matrix ensembles considered in this paper weakly converges to a compactly supported probability measure, as $n$ goes to infinity. The limit $\rho_{\infty}(x)$ is known as \emph{equilibrium measure} of the ensemble and does not depend on the Dyson index $\beta$. In formulae, for all $k\in\N$,
\be
\lim_{n\to\infty}\frac{1}{n^{k+1}}\E\Tr X_n^{k}=\int x^{k}\rho_{\infty}(x)d x.
\nonumber
\ee
This suggests to define the limit zeta function $\zeta_{\infty}(s)$ as the analytic continuation of $\int|x|^{-s}\rho_{\infty}(x)d x$. The limit zeta functions for the classical ensembles turn out to be meromorphic functions. For the LUE and JUE, $\zeta_{\infty}(s)$ has infinitely many nontrivial zeros, and they all lie on a critical line.

We discuss the three classical ensembles separately. For notational convenience we consider the GUE, LUE and JUE, although the results hold true for any $\beta$-ensemble. 

\subsection{Gaussian ensemble} The equilibrium measure is given by the semicircular law
\be
\rho_{\infty}(x)=\frac{1}{2\pi}\sqrt{4-x^2}\;1_{x\in(-2,2)}.
\ee
After a suitable rescaling, in the large-$n$ limit, the integer moments converge to the Catalan numbers:
\be
\lim_{n\to\infty}\frac{1}{n^{k+1}}Q^{\C}_k(n)=\int x^{2k}\rho_{\infty}(x)d x=\frac{1}{k+1}\binom{2k}{k}.
\ee
This formula can be analytically continued and suggests to define the limit GUE zeta function as the analytic continuation of $\int|x|^{-s}\rho_{\infty}(x)d x$:
\be
\zeta_{\infty}(s)=\frac{2^{-s}\;\Gamma\left(\frac{1-s}{2}\right)}{\sqrt{\pi}\;\Gamma\left(2-\frac{s}{2}\right)}.
\ee
This function has alternating simple poles and zeros on the positive integers, with no other zeros in the rest of the complex plane. The large-$n$ limit is more interesting for matrices in Laguerre and Jacobi ensembles. 

\subsection{Laguerre ensemble} Let $X_n$ be in the Laguerre unitary ensemble. Set $\alpha=m-n=(c-1)n$, with $c\geq1$. Define the  equilibrium measure
\be 
\rho_{\infty}(x)=\frac{1}{2\pi x}\sqrt{(x_+-x)(x-x_-)}\;1_{x\in(x_-,x_+)}
\label{eq:equi_Laguerre}
\ee
where $x_{\pm}=(1\pm\sqrt{c})^2\geq 0$ (this is the celebrated Marchenko-Pastur distribution). 
Then, 
\be
\lim_{n\to\infty}\frac{1}{n^{k+1}}Q_k^{\C}((c-1)n,n)=\int_{\R}x^k\rho_{\infty}(x)d x,
\ee
Define the limit LUE zeta function as 
\be
\zeta_{\infty}(s)=\int_{\R}|x|^{-s}\rho_{\infty}(x)\de x
\label{eq:LUEinfty}
\ee
with $\rho_{\infty}(x)$ in~\eqref{eq:equi_Laguerre}. From the finite-$n$ functional equation $\xi_{n}(s)=\xi_{n}(1-s)$, we see that a good definition for the limit $n\to\infty$ is 
\begin{equation*}
\xi_{\infty}(s)=(x_-x_+)^{s/2}\zeta_{\infty}(s).
\end{equation*}
\begin{prop}\label{prop:LUEthermo} Assume $c>1$. Then, the functional equation $\xi_{\infty}(s)=\xi_{\infty}(1-s)$ holds, and the zeros of the $\zeta_{\infty}(s)$ all lie on the critical line $\operatorname{Re}(s)=1/2$.
\end{prop}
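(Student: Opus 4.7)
The proposition makes two assertions: the functional equation $\xi_{\infty}(s)=\xi_{\infty}(1-s)$, and the statement that every complex zero of $\zeta_{\infty}(s)$ lies on the line $\operatorname{Re}(s)=1/2$.

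For the functional equation I would perform the involutive substitution $y=x_-x_+/x$ in
\[
\zeta_{\infty}(s)=\frac{1}{2\pi}\int_{x_-}^{x_+}x^{-s-1}\sqrt{(x_+-x)(x-x_-)}\,\de x.
\]
The interval $[x_-,x_+]$ maps to itself (with reversed orientation), and a direct check gives $(x_+-x)(x-x_-)=(x_-x_+/y^2)(x_+-y)(y-x_-)$, $\de x=-(x_-x_+/y^2)\de y$, and $x^{-s-1}=(x_-x_+)^{-s-1}y^{s+1}$. Collecting powers of $x_-x_+$ yields $\zeta_{\infty}(1-s)=(x_-x_+)^{1/2-s}\zeta_{\infty}(s)$, which rearranges to $\xi_{\infty}(s)=\xi_{\infty}(1-s)$.

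For the location of the zeros, the plan is to realize $\xi_{\infty}(s)$ as a locally uniform limit of a renormalization of the finite-$n$ polynomials $\xi_n(s)$ of Theorem~\ref{thm:zetafinite} and to invoke Hurwitz's theorem. With $\alpha=(c-1)n$, the convergence of the empirical spectral measure of $X_n/n$ to $\rho_\infty$ yields $\E\Tr X_n^{-s}\sim n^{1-s}\zeta_{\infty}(s)$, while Stirling gives $\Gamma(1+\alpha)/\Gamma(1+\alpha-s)\sim\alpha^s$, so that
\[
F_n(s):=\frac{\Gamma(1+\alpha)}{n}\,\xi_n(s)\;\longrightarrow\;(c-1)^s\,\zeta_{\infty}(s).
\]
The function $F_n$ differs from $\xi_n$ by a nonzero scalar (in $s$), so by Theorem~\ref{thm:zetafinite} all its complex zeros lie on $\operatorname{Re}(s)=1/2$; in particular $F_n$ is nowhere zero on the open half-plane $\{\operatorname{Re}(s)>1/2\}$. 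Once locally uniform convergence on this half-plane is established, Hurwitz's theorem forces the limit to either vanish identically or be nowhere zero there; the former is ruled out because $\zeta_{\infty}$ is strictly positive on the real axis. The functional equation then propagates non-vanishing to $\{\operatorname{Re}(s)<1/2\}$, and positivity on the real axis rules out real zeros, giving the claim.

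The main technical step is the locally uniform convergence of $F_n$ to $(c-1)^s\zeta_{\infty}(s)$. Pointwise convergence at integer $s=-k$ is the classical convergence of LUE moments to those of the Marchenko--Pastur law, and Carlson's theorem together with the polynomial structure in Theorem~\ref{thm:LUE} extends it to all of $\C$. To upgrade to local uniformity, I would show $\{F_n\}$ is a normal family on any compact subset of $\{\operatorname{Re}(s)>1/2\}$ via a uniform bound $|F_n(s)|\le C(c-1)^{\operatorname{Re}(s)}$; this follows from $|\E\Tr X_n^{-s}|\le\E\Tr X_n^{-\operatorname{Re}(s)}\le Cn^{1-\operatorname{Re}(s)}$ (valid because $c>1$ keeps the spectrum bounded away from the origin) together with the Stirling ratio above. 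Montel's theorem then supplies a locally uniformly convergent subsequence whose limit is $(c-1)^s\zeta_{\infty}(s)$ by pointwise convergence, completing the Hurwitz argument. A more direct alternative is to exploit the integral representation
\[
\xi_{\infty}(s)=\frac{e^\mu}{2\pi}\int_{-v}^{v}e^{(1/2-s)u}\sqrt{2(\cosh v-\cosh u)}\,\de u
\]
(with $e^\mu=\sqrt{x_-x_+}$ and $v=\frac{1}{2}\log(x_+/x_-)$), obtained by substituting $x=e^{u+\mu}$, and to pair it with a corresponding Mellin integral for $\xi_n(s)$ via dominated convergence.
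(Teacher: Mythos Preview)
Your argument for the functional equation via the involution $y=x_-x_+/x$ is exactly the paper's first proof. For the location of zeros, however, you take a genuinely different route.

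The paper does \emph{not} pass to the limit from finite $n$. Instead it derives the closed form
\[
\zeta_{\infty}(s)=c(1-c)^{-s-1}\,{}_2F_1\!\left(\tfrac34+\mu,\ \tfrac34-\mu;\ 2;\ -\tfrac{4c}{(1-c)^2}\right),\qquad 2\mu=s-\tfrac12,
\]
and then argues directly: if $\mu$ is a zero, the hypergeometric function $y(z)={}_2F_1(3/4+\mu,3/4-\mu;2;-z)$ satisfies a Sturm--Liouville problem on $[4c/(1-c)^2,\infty)$ with Dirichlet boundary condition and eigenvalue $\mu^2$. Self-adjointness forces $\mu^2\in\R$, so $\mu$ is real or purely imaginary; positivity of the integral on the real axis rules out the real case. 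This is entirely self-contained and does not invoke Theorem~\ref{thm:zetafinite}.

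Your Hurwitz strategy is a natural alternative that \emph{explains} why the critical-line property survives the limit rather than re-proving it from scratch; it also avoids the hypergeometric evaluation. Two remarks on the execution. First, the appeal to Carlson's theorem is misplaced: Carlson gives uniqueness, not convergence, so it cannot by itself promote pointwise convergence on integers to pointwise convergence on $\C$. Your Montel/normal-family argument is the correct mechanism---establish local boundedness, extract locally uniform subsequential limits, identify the limit via convergence of moments---and Carlson is unnecessary. Second, the bound $\E\Tr(X_n/n)^{-\sigma}\le C$ uniformly in $n$ for $\sigma>0$ is the real work: with $\alpha=(c-1)n$ the smallest eigenvalue concentrates near $nx_->0$, but turning this into a uniform moment bound requires either a quantitative small-eigenvalue estimate or a direct manipulation of the explicit one-point density. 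This is true and provable, but it is not the one-line consequence of ``$c>1$ keeps the spectrum bounded away from the origin'' that you suggest. Once that bound is secured, the Hurwitz conclusion goes through as you describe.
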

\begin{proof} To prove the functional equation it suffices to use the change of coordinates $y=(x_+x_-)/x$ in the integral~\eqref{eq:LUEinfty}, and notice that $x_-x_+=(c-1)^2$.  Alternatively, a calculation using Euler's integral formula and Pfaff transformation, reveals that
\begin{align}
\zeta_{\infty}(s)&=\frac{1}{16}\frac{(x_+-x_-)^2}{(x_-)^{s+1}}\;  {}_2 F_1\left(\begin{matrix}3/2,s+1 \\3 \end{matrix}; \frac{x_--x_+}{x_-}\right)\nonumber\\
&=c(1-c)^{-s-1}\; {}_2 F_1\left(\begin{matrix}\frac{3}{4}+\frac{1}{2}\left(s-\frac{1}{2}\right),\frac{3}{4}-\frac{1}{2}\left(s-\frac{1}{2}\right) \\2 \end{matrix}; -\frac{4c}{(1-c)^2}\right) .
\label{eq:Hyper_MP}
\end{align}
Then apply Euler's transformation formula to show the functional equation. 

To show that the zeros of $\xi_{\infty}(s)$ are on the critical line we use an argument based on Sturm-Liouville theory (we borrowed this argument from a similar problem in a paper by Biane~\cite{Biane09}). 

First, observe that the integral~\eqref{eq:LUEinfty} can not vanish for $s$ real. We want to show that the zeros of 
\[
{}_2 F_1\left(\begin{matrix}\frac{3}{4}+\mu,\frac{3}{4}-\mu \\2 \end{matrix}; -\frac{4c}{(1-c)^2}\right),
\]
where $2\mu=s-1/2$, lie on the imaginary axis.
The function $y(z)={}_2 F_1(3/4+\mu,3/4-\mu; 2; -z)$ satisfies the hypergeometric equation
\be
z(1+z)y'' + (2+5z/2) y' + (9/16-\mu^2) y = 0.
\nonumber
\ee
Thus, if $\mu$ is such that $y(4c/(1-c)^2)=0$, then $y(x)$ is a solution to the Sturm-Liouville problem 
\be
( p(x) y'(x) )'+ q(x) y(x) = \mu^2 w(x) y(x)
\nonumber
\ee
on $[4c/(1-c)^2,\infty)$, with Dirichlet boundary conditions,
where $p(x)=x^2(1+x)^{1/2}$, $q(x) = (9/16) w(x)$ and $w(x)=x(1+x)^{-1/2}$.
It then follows from the Sturm-Liouville theory that the eigenvalues $\mu^2$ are real  which can only happen if $\mu$ is real or purely imaginary. Since we have excluded the real case, we conclude that $2\mu=s-1/2$ is purely imaginary: the zeros of $\zeta_{\infty}(s)$ all lie on the critical line $\operatorname{Re}(s)=1/2$.
\end{proof}

\subsection{Jacobi ensemble} Let $X_n$ be in the JUE.
Set $\alpha_{1,2}=(c_{1,2}-1)n$, with $c_{1,2}\geq1$. Then, the  equilibrium measure is
\be 
\rho_{\infty}(x)=\frac{c_1+c_2}{2\pi x(1-x)}\sqrt{(x_+-x)(x-x_-)}\;1_{x\in(x_-,x_+)}
\label{eq:equi_Jacobi}
\ee
where $x_{\pm}=\left(\frac{\sqrt{c_1}\pm\sqrt{c_2(c_1+c_2-1)}}{c_1+c_2}\right)^2$, and
\be
\lim_{n\to\infty}\frac{1}{n^{k+1}}Q_k^{\C}((c_1-1)n,(c_2-1)n,n)=\int_{\R}x^k(1-x)\rho_{\infty}(x)\de x
\ee
Define the limit JUE zeta function as 
\be
\zeta_{\infty}(s)=\int_{\R}|x|^{-s}\rho_{\infty}(x)\de x
\label{eq:JUEinfty}
\ee
with $\rho_{\infty}(x)$ in~\eqref{eq:equi_Jacobi}. Again, the finite-$n$ functional equation $\xi_{n}(s)=\xi_{n}(1-s)$, suggests the definition of 
\begin{equation*}
\xi_{\infty}(s)=(x_-x_+)^{s/2}\left(\zeta_{\infty}(s)-\zeta_{\infty}(s-1)\right).
\end{equation*}
\begin{prop}\label{prop:JUEthermo} Assume $c_{1,2}>1$. Then, the functional equation $\xi_{\infty}(s)=\xi_{\infty}(1-s)$ holds, and the complex zeros of $\xi_{\infty}(s)$ all lie on the critical line $\operatorname{Re}(s)=1/2$.
\end{prop}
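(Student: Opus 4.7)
The plan is to mirror the LUE argument in Proposition~\ref{prop:LUEthermo} step by step. The key observation is that
\[
\zeta_\infty(s)-\zeta_\infty(s-1) \;=\; \int_{x_-}^{x_+} x^{-s}(1-x)\rho_\infty(x)\,dx \;=\; \frac{c_1+c_2}{2\pi}\int_{x_-}^{x_+} x^{-s-1}\sqrt{(x_+-x)(x-x_-)}\,dx
\]
has the identical structure of the LUE zeta function, only with different endpoints $x_\pm$ (note $\sqrt{x_-x_+} = (c_2-1)/(c_1+c_2)<1$, so $(x_-,x_+)\subset(0,1)$) and an overall constant, so the same hypergeometric manipulations apply.

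For the functional equation I would use the involution $y=x_-x_+/x$ on $(x_-,x_+)$. Using the identity $(x_+-x)(x-x_-) = (x_-x_+/y^2)(x_+-y)(y-x_-)$, a short computation shows that the measure transforms as $(1-x)\rho_\infty(x)\,|dx|=(\sqrt{x_-x_+}/y)(1-y)\rho_\infty(y)\,dy$, whence
\[
\zeta_\infty(s)-\zeta_\infty(s-1)=(x_-x_+)^{1/2-s}\bigl(\zeta_\infty(1-s)-\zeta_\infty(-s)\bigr).
\]
Multiplying by $(x_-x_+)^{s/2}$ yields $\xi_\infty(s)=\xi_\infty(1-s)$.

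For the zeros, I would start from Euler's integral representation
\[
\zeta_\infty(s)-\zeta_\infty(s-1) \;=\; \frac{(c_1+c_2)(x_+-x_-)^2}{16\,x_-^{s+1}}\;{}_2F_1\Bigl(s+1,\tfrac{3}{2};3;\,-\tfrac{x_+-x_-}{x_-}\Bigr)
\]
and apply the quadratic transformation ${}_2F_1(a,b;2b;z)=(1-z)^{-a/2}\,{}_2F_1(a/2, b-a/2; b+1/2; z^2/(4(z-1)))$, valid since the parameters $a=s+1, b=3/2$ satisfy $2b=3$. Setting $2\mu=s-1/2$ and $z=-(x_+-x_-)/x_-$, the new first two parameters become $\bigl((s+1)/2,(2-s)/2\bigr)=(3/4+\mu,3/4-\mu)$, the third is $2$, and the new argument is
\[
\frac{z^2}{4(z-1)} \;=\; -\frac{(x_+-x_-)^2}{4x_-x_+} \;=:\; -\omega,
\]
which is $s$-independent. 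Combining the prefactor $(x_-x_+)^{s/2}/x_-^{s+1}$ with $(1-z)^{-a/2}=(x_+/x_-)^{-(s+1)/2}$ collapses to $(x_-x_+)^{-1/2}$, giving
\[
\xi_\infty(s) \;=\; \frac{(c_1+c_2)(x_+-x_-)^2}{16\sqrt{x_-x_+}}\;{}_2F_1\Bigl(\tfrac{3}{4}+\mu,\tfrac{3}{4}-\mu;\,2;\,-\omega\Bigr),
\]
with $\omega>0$ depending only on $c_1,c_2$.

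From here the LUE Sturm--Liouville argument applies verbatim: $y(z)={}_2F_1(3/4+\mu,3/4-\mu;2;-z)$ satisfies the hypergeometric ODE $z(1+z)y''+(2+5z/2)y'+(9/16-\mu^2)y=0$, which in Liouville form $(py')'+qy=\mu^2 wy$ with $p(z)=z^2(1+z)^{1/2}$, $w(z)=z(1+z)^{-1/2}$, $q(z)=(9/16)w(z)$ is a self-adjoint problem on $[\omega,\infty)$ with Dirichlet condition at $z=\omega$. Its eigenvalues $\mu^2$ are real, so $\mu$ is real or purely imaginary; the real case is excluded because for real $s$ in the domain of convergence the integrand $x^{-s}(1-x)\rho_\infty(x)$ is strictly positive, whence $\xi_\infty$ cannot vanish on the real axis (and by the functional equation, nowhere on $\mathbb{R}$). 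Thus $\mu\in i\mathbb{R}$, i.e.\ $\operatorname{Re}(s)=1/2$. The main obstacle is simply noticing that the LUE quadratic transformation applies here unchanged, triggered by the algebraic coincidence $c=2b=3$ in the hypergeometric parameters; once this is seen, the remainder of the proof is direct bookkeeping.
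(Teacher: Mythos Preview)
Your proof is correct and follows essentially the same route as the paper, which simply records the Euler integral representation (with a minor typo: the paper's formula~\eqref{eq:Hyper_JUE} should read $\zeta_\infty(s)-\zeta_\infty(s-1)$, as you correctly identify) and then says ``the proof of Proposition~\ref{prop:LUEthermo} is easily adapted.'' You have carried out that adaptation in full, including the change of variables $y=x_-x_+/x$, the explicit quadratic transformation, and the Sturm--Liouville argument.
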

\begin{proof} Using Euler's integral formula, we have
\be
\zeta_{\infty}(s)=\frac{c_1+c_2}{16}\frac{(x_+-x_-)^2}{(x_-)^{s+1}}\;  {}_2 F_1\left(\begin{matrix}3/2,s+1 \\3 \end{matrix}; \frac{x_--x_+}{x_-}\right).
\label{eq:Hyper_JUE}
\ee
The proof of Proposition~\ref{prop:LUEthermo}  is easily adapted.
\end{proof}

\section{Beyond random matrices: Wronskians of orthogonal polynomials}
\label{sec:extension}

\subsection{Mellin transform of orthogonal polynomials}
Bump and Ng~\cite{Bump86} and Bump, Choi, Kurlberg and Vaaler~\cite{Bump00} made the remarkable discovery that the Mellin transforms of Hermite and Laguerre functions have zeros on the critical line $\operatorname{Re}(s)=1/2$.  Their proof is based on the observation that the Mellin transform preserves orthogonality. Hence, Mellin transforms of orthogonal polynomials are themselves orthogonal with respect to some inner product. Later~\cite{Coffey07,Coffey15} it was noticed that those orthogonal functions are hypergeometric OP's (multiplied by some nonnegative integrable weight).
\par
Consider, for concreteness, the Hermite polynomials $H_n(x)$ and the normalised Hermite wavefunctions $\phi_n(x)=(2^nn!\sqrt{\pi})^{-1/2}e^{-x^2/2}H_n(x)$.
The following proposition follows from the result of Bump \textit{et al.}~\cite{Bump86,Bump00}.
\begin{prop}[Mellin transform of Hermite functions] 
\label{prop:Bump}
For all integers $n\geq 0$
\begin{align*}
\phi_{2n}^*(s)&=i^n2^{n-1+\frac{s}{2}}\frac{n!}{\sqrt{(2n)!\sqrt{\pi}}}\Gamma\left(\frac{s}{2}\right)P^{\left(\frac{1}{4}\right)}_n\left(-i\frac{(s-1/2)}{2};\;\frac{\pi}{2}\right)\\
\phi_{2n+1}^*(s)&=i^n2^{n+\frac{s}{2}}\frac{n!}{\sqrt{(2n+1)!\sqrt{\pi}}}\Gamma\left(\frac{s+1}{2}\right)P^{\left(\frac{3}{4}\right)}_n\left(-i\frac{(s-1/2)}{2};\;\frac{\pi}{2}\right).
\end{align*}
\end{prop}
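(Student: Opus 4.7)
The plan is to compute $\phi_n^*(s)$ directly from its integral definition and recognize the output. The starting point is the confluent-hypergeometric form of the Hermite polynomials,
\begin{align*}
H_{2n}(x) &= (-1)^n \frac{(2n)!}{n!}\, {}_1F_1\!\left(\begin{matrix}-n\\ 1/2\end{matrix}; x^2\right),\\
H_{2n+1}(x) &= (-1)^n \frac{(2n+1)!}{n!}\, 2x\, {}_1F_1\!\left(\begin{matrix}-n\\ 3/2\end{matrix}; x^2\right),
\end{align*}
which together with $e^{-x^2/2}$ package all $x$-dependence as a function of $x^2$ (apart from the explicit $2x$ in the odd case).

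Substituting $u = x^2$ reduces $\phi_{2n}^*(s)$ to a constant multiple of
\[
\int_0^\infty u^{s/2 - 1}\, e^{-u/2}\, {}_1F_1\!\left(\begin{matrix}-n\\ 1/2\end{matrix}; u\right) du,
\]
and $\phi_{2n+1}^*(s)$ to the analogous integral with the pair $(s/2,1/2)$ replaced by $((s+1)/2,3/2)$. Both integrands are polynomials in $u$ times $e^{-u/2}$, so termwise integration is legitimate and produces $\Gamma(c)\,2^{c}\,{}_2F_1(-n,c;b;2)$ at the appropriate $(c,b)$.

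It remains to match these terminating Gauss hypergeometrics to Meixner--Pollaczek polynomials. At $\phi = \pi/2$ one has $1 - e^{2i\phi} = 2$ and $e^{in\phi} = i^n$, so \eqref{eq:Meixner-Pollaczek} specializes to
\[
P_n^{(\lambda)}(x;\,\pi/2) = \frac{(2\lambda)_n\, i^n}{n!}\, {}_2F_1\!\left(\begin{matrix}-n,\ \lambda + i x\\ 2\lambda\end{matrix}; 2\right),
\]
and the substitution $x = -i(s - 1/2)/2$ makes $\lambda + i x = \lambda - 1/4 + s/2$. The choices $\lambda = 1/4$ and $\lambda = 3/4$ therefore produce exactly the parameters $s/2$ and $(s+1)/2$ that appeared above, reducing the hypergeometric factors in the two Mellin integrals to the stated polynomials.

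No step is genuinely hard. The main obstacle is keeping straight the combination of powers of $2$, powers of $i$, and factorial ratios that link the $L^2$-normalization $(2^n n!\sqrt{\pi})^{-1/2}$ of the Hermite wavefunctions to the normalizing constant of $P_n^{(\lambda)}$; the identities $(1/2)_n = (2n)!/(4^n n!)$, $(3/2)_n = (2n+1)!/(4^n n!)$, and $(-1)^n/i^n = i^n$ suffice to collapse everything to the two formulae in the statement, and the even and odd cases are handled in parallel. The integrals converge absolutely in a right half-plane in $s$, and both sides are polynomials in $s$ times an explicit $\Gamma$-factor, so analytic continuation is automatic.
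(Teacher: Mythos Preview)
Your proposal is correct. The paper does not give its own proof of this proposition; it simply attributes the result to Bump \textit{et al.}\ (with the explicit identification as Meixner--Pollaczek polynomials due to Coffey), so your direct computation via the confluent-hypergeometric form of $H_n$ and termwise Mellin integration is exactly the standard argument one would supply here.
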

\par
The averaged spectral zeta function of unitary invariant ensembles of random matrices can be interpreted as
Mellin transform of Wronskians of adjacent Hermite, Laguerre or Jacobi wavefunctions. Given our results,
it is natural to ask whether more general Wronskians have the property that their Mellin transforms can be written in terms of hypergeometric OP's.
\subsection{Mellin transforms of products and Wronskians of  classical orthogonal polynomials}
\label{sub:extension}
In this section we will use repeatedly the properties~\eqref{eq:Mellin_d}-\eqref{eq:Mellin_m}-\eqref{eq:Mellin_md} of the Mellin transform.
\par
The Wronskian of smooth functions $f_1(x)\dots, f_m(x)$ is defined as
\be
\operatorname{Wr}(f_1(x),\dots,f_m(x))=\det\left(f^{(i-1)}_j(x)\right)_{i,j=1}^m.
\nonumber
\ee
Note the homogeneity property
\be
\operatorname{Wr}(g(x)f_1(x),\dots,g(x)f_m(x))=(g(x))^m\operatorname{Wr}(f_1(x),\dots,f_m(x))
\nonumber
\ee
\par
Consider the Hermite, Laguerre and Jacobi polynomials defined in~\eqref{eq:Hermite}-\eqref{eq:Laguerre}-\eqref{eq:Jacobi}, and the associated normalised wavefunctions
\be
\phi_{n}(x)=
\left\{  \begin{array}{l@{\quad}cr} 
\displaystyle (2^nn!\sqrt{\pi})^{-1/2}e^{-x^2/2}H_n(x)
&\text{(Hermite)}  \\ \\
\displaystyle \sqrt{\frac{n!}{\Gamma(n+\alpha+1)}} x^{\alpha/2}e^{-x/2} L_n^{(\alpha)}(x)\,\chi_{\R_{+}}(x)
&\text{(Laguerre)} \\ \\
\displaystyle \sqrt{\frac{n!(\alpha_1+\alpha_2+2n+1)\Gamma(\alpha_1+\alpha_2+n+1)}{\Gamma(\alpha_1+n+1)\Gamma(\alpha_2+n+1)}}&\text{}\\ 
\,\small\times\, (1-x)^{\alpha_1/2}x^{\alpha_2/2}P_n^{(\alpha_1,\alpha_2)}(x)\,\chi_{[0,1]}(x)
&\text{(Jacobi)}.
\end{array}\right.
\label{eq:wavefunctions}
\ee
The wavefunctions are orthonormal 
\begin{equation*}
 \int_I\phi_n(x)\phi_m(x)dx=\delta_{n,m},
\end{equation*}
where $I=\R,\R_+$, and $[0,1]$ for Hermite, Laguerre, and Jacobi, respectively.
We set 
\begin{align*}
\omega_{n,\ell}(x)&=\phi_n(x)\phi_{n+\ell}(x)\\
W_{n,\ell}(x)&=\operatorname{Wr}(\phi_n(x),\phi_{n+\ell}(x)).
\end{align*}
\begin{theorem} 
\label{thm:Wronsk_GUE}
Let $\phi_n(x)$ be Hermite wavefunctions~\eqref{eq:wavefunctions}. Then,
\begin{itemize}
\item[i)] the Mellin transform of the products is
\be
\omega_{n,\ell}^*(s) = i^n2^{\frac{\ell}{2}-s}\sqrt{\frac{n!}{(n+\ell)!}}\frac{\Gamma(s)}{\Gamma\left(\frac{s-\ell+1}{2}\right)}\,P_n^{\left(\frac{\ell+1}{2}\right)}\left(-\frac{is}{2};\frac{\pi}{2}\right);
\label{eq:Mproduct_Hermite}
\ee
\item[ii)] the Mellin transform of the Wronskians is
\be
W^*_{n,\ell}(s-1) =\frac{2\ell}{s-1}\omega^*_{n,\ell}(s).
\label{eq:Mwronskian_Hermite}
\ee
\end{itemize}
\end{theorem}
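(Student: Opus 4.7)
The plan is to prove the two parts separately, with (ii) following from a short differential identity and (i) reducing to an identification with the three-term recurrence of Meixner-Pollaczek polynomials.

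For part (ii), the Hermite wavefunctions satisfy the harmonic-oscillator equation $-\phi_n''+x^2\phi_n=(2n+1)\phi_n$. Expanding $W_{n,\ell}'=\phi_n\phi_{n+\ell}''-\phi_n''\phi_{n+\ell}$ and substituting this equation, the $x^2\omega_{n,\ell}$ terms cancel and one obtains $W_{n,\ell}'(x)=-2\ell\,\omega_{n,\ell}(x)$. The Mellin transform of the derivative, $\MM[f'(x);s]=-(s-1)f^*(s-1)$, then yields $-(s-1)W_{n,\ell}^*(s-1)=-2\ell\,\omega_{n,\ell}^*(s)$, giving (ii). Boundary terms in the integration by parts vanish thanks to the Gaussian decay of $W_{n,\ell}$ at $+\infty$ and the factor $x^{s-1}$ at $0$ for $\operatorname{Re}(s)$ large enough; analytic continuation extends the identity.

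For part (i), the strategy is to derive a three-term recurrence in $n$ for $\omega_{n,\ell}^*(s)$ (with $\ell,s$ fixed) and match it to the Meixner-Pollaczek recurrence. Using the ladder operators $D_\pm=x\pm d/dx$, for which $D_+\phi_n=\sqrt{2n}\,\phi_{n-1}$ and $D_-\phi_n=\sqrt{2(n+1)}\,\phi_{n+1}$, I compute the products $(D_\pm\phi_n)(D_\pm\phi_{n+\ell})$ in two ways: by direct action on each factor, and by Leibniz-expanding $D_\pm$ and using Schr\"odinger to rewrite $\phi_n'\phi_{n+\ell}'$ in terms of $\omega_{n,\ell}$ and $\omega_{n,\ell}''$. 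This produces the pair
\begin{align*}
2\sqrt{n(n+\ell)}\,\omega_{n-1,\ell}&=(2n+\ell+1)\omega_{n,\ell}+x\omega_{n,\ell}'+\tfrac12\omega_{n,\ell}'',\\
2\sqrt{(n+1)(n+\ell+1)}\,\omega_{n+1,\ell}&=(2n+\ell+1)\omega_{n,\ell}-x\omega_{n,\ell}'+\tfrac12\omega_{n,\ell}''.
\end{align*}
Subtracting kills the $\omega''$ term; taking the Mellin transform and using $\MM[xg'(x);s]=-s\,g^*(s)$ gives the $s$-local recurrence
\[
\sqrt{(n+1)(n+\ell+1)}\,\omega_{n+1,\ell}^*(s)-\sqrt{n(n+\ell)}\,\omega_{n-1,\ell}^*(s)=s\,\omega_{n,\ell}^*(s).
\]
Writing $\omega_{n,\ell}^*(s)=A(s,\ell)\,i^n\sqrt{n!/(n+\ell)!}\,q_n$ with $A(s,\ell)=2^{\ell/2-s}\Gamma(s)/\Gamma\bigl(\tfrac{s-\ell+1}{2}\bigr)$ reduces this to $(n+1)q_{n+1}+(n+\ell)q_{n-1}=-is\,q_n$, which is precisely the recurrence satisfied by the Meixner-Pollaczek polynomial $P_n^{((\ell+1)/2)}(x;\pi/2)$ at $x=-is/2$. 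The normalisation is pinned at $n=0$: using $\phi_0=\pi^{-1/4}e^{-x^2/2}$, the Rodrigues formula for $H_\ell$, $\ell$ integrations by parts, and Legendre's duplication $\Gamma(s-\ell)=2^{s-\ell-1}\pi^{-1/2}\Gamma\bigl(\tfrac{s-\ell}{2}\bigr)\Gamma\bigl(\tfrac{s-\ell+1}{2}\bigr)$, one computes $\omega_{0,\ell}^*(s)=A(s,\ell)/\sqrt{\ell!}$, matching the claim since $P_0\equiv 1$. Induction then proves (i) for all $n\ge 0$, with analytic continuation handling complex $s$.

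The main obstacle is keeping the recurrence $s$-local. Using only the Hermite three-term recurrence on a single factor inevitably mixes shifts in $n$ and $\ell$ (because $\omega_{n+1,\ell}$ raises both indices of the constituent wavefunctions), while the \emph{symmetric} combination $(D_+\phi_n)(D_+\phi_{n+\ell})+(D_-\phi_n)(D_-\phi_{n+\ell})$ retains $\omega_{n,\ell}''$, whose Mellin transform shifts $s\mapsto s-2$. The decisive step is to \emph{subtract} the two ladder-operator identities, cancelling $\omega_{n,\ell}''$ and leaving only $x\omega_{n,\ell}'$, whose Mellin transform is a scalar multiple of $\omega_{n,\ell}^*(s)$ itself and allows identification with Meixner-Pollaczek.
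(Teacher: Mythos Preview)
Your argument is correct. For part (i) you follow essentially the same route as the paper: derive the three-term recurrence
\[
\sqrt{(n+1)(n+\ell+1)}\,\omega_{n+1,\ell}^*(s)-s\,\omega_{n,\ell}^*(s)-\sqrt{n(n+\ell)}\,\omega_{n-1,\ell}^*(s)=0,
\]
match it to the Meixner--Pollaczek recurrence, and fix the normalisation at $n=0$. The only difference is cosmetic: the paper reaches the identity $\sqrt{(n+1)(n+\ell+1)}\,\omega_{n+1,\ell}=-x\omega_{n,\ell}'+\sqrt{n(n+\ell)}\,\omega_{n-1,\ell}$ by combining the Hermite three-term recurrence with the lowering relation $\phi_n'+x\phi_n=\sqrt{2n}\,\phi_{n-1}$, whereas you obtain the same relation by subtracting the two ladder-operator identities. (Your observation that the coefficient $\sqrt{n(n+\ell)}$ vanishes at $n=0$, so that a single initial condition suffices, is valid; the paper checks $n=0$ and $n=1$, which is redundant.)

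For part (ii) your approach is genuinely different and more direct. The paper writes $W_{n,\ell}=\sqrt{2(n+\ell)}\,\omega_{n,\ell-1}-\sqrt{2n}\,\omega_{n-1,\ell+1}$, takes Mellin transforms, inserts the Meixner--Pollaczek representation from part (i), and then invokes the Forward Shift Operator identity for these polynomials. You instead differentiate the Wronskian and use the Schr\"odinger equation to obtain the elementary relation $W_{n,\ell}'=-2\ell\,\omega_{n,\ell}$, after which a single application of $\MM[f';s]=-(s-1)f^*(s-1)$ yields (ii) immediately. Your argument is shorter, does not rely on part (i) at all, and avoids any special-function identities; the paper's route, on the other hand, makes the link to the shift operators of the Askey scheme explicit, which is more in keeping with the paper's theme.
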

\begin{proof} Part i): Given the three-term recurrence of the Meixner-Pollaczek polynomials\cite[Sec. 9, Eq. (9.7.3)]{Koekoek10}, it is sufficient to show that $\omega_{n,\ell}^*(s) $ satisfies the recurrence
\be
\sqrt{(n+1)(n+\ell+1)}\omega_{n+1,\ell}^*(s)-s\omega_{n,\ell}^*(s) -\sqrt{n(n+\ell)}\omega_{n-1,\ell}^*(s)=0.
\label{eq:rec_prodH}
\ee
Using the three-term recurrence of the Hermite functions 
\be
\sqrt{n+1}\phi_{n+1}(x)-\sqrt{2}x\phi_n(x)+\sqrt{n}\phi_{n-1}(x)=0
\ee 
we have
\begin{multline*}
\sqrt{(n+1)(n+\ell+1)}\omega_{n+1,\ell}(x)=2x^2\omega_{n,\ell}(x)+\sqrt{n(n+\ell)}\omega_{n-1,\ell}(x)\\
-\sqrt{2}x\left(\sqrt{n+\ell}\omega_{n,\ell-1}(x)
+\sqrt{n}\omega_{n-1,\ell+1}(x)\right).
\end{multline*}
We now take the Mellin transform of both sides. Using the equation 
\be
\phi'_n(x)+x\phi_n(x)-\sqrt{2n}\phi_{n-1}(x)=0,
\label{eq:diff_rec_Hermite}
\ee 
we get
\begin{equation*}
\sqrt{(n+1)(n+\ell+1)}\omega_{n+1,\ell}^*(s)=-\MM\left[x \omega_{n,\ell}'(x);s\right]+\sqrt{n(n+\ell)}\omega_{n-1,\ell}^*(s).
\end{equation*}
The fact that $\MM\left[x\omega'_{n,\ell}(x);s\right] = -s\omega_{n,\ell}^*(s)$ follows from integration by parts (or property \eqref{eq:Mellin_md} of the Mellin transform, with $m=1$). We have proved~\eqref{eq:rec_prodH}. To complete the proof of~\eqref{eq:Mproduct_Hermite} we compute the initial conditions
\begin{equation*}
\omega_{0,\ell}^*(s)=\frac{2^{\frac{\ell}{2}-s}}{\sqrt{\ell!}}\frac{\Gamma(s)}{\Gamma\left(\frac{s-\ell+1}{2}\right)}, \quad\text{and}\quad \omega_{1,\ell}^*(s)=\frac{2^{\frac{\ell}{2}-s}}{\sqrt{(\ell+1)!}}\frac{\Gamma(s)}{\Gamma\left(\frac{s-\ell+1}{2}\right)}\,s.
\end{equation*}
\par
Part ii): Note that 
\begin{equation*}
\label{Wr2}
W_{n,\ell}(x)= \sqrt{2(n+\ell)} \omega_{n,\ell-1}(x) -\sqrt{2n}\omega_{n-1,\ell+1}(x) 
\end{equation*} 
where we have used the identity~\eqref{eq:diff_rec_Hermite}. 
Taking the Mellin transform of both sides, 
substituting the identity~\eqref{eq:Mproduct_Hermite},  and using  the Forward Shift Operator for the  Meixner-Pollaczek polynomials\cite[Sec. 9, Eq. (9.7.6)]{Koekoek10} we complete the proof of~\eqref{eq:Mwronskian_Hermite}.
\end{proof}
\par
The following analogue of Proposition~\ref{prop:Bump} for the Laguerre wavefunction is essentially  due to Bump \textit{et al.}~\cite{Bump00} and Coffey~\cite{Coffey07}.
\begin{prop}
\label{prop:Coffey}
Let $\phi_n(x)$ be Laguerre wavefunctions~\eqref{eq:wavefunctions}. Then
\begin{equation*}
\phi_n^*(s)=(-i)^{n}2^{s+\alpha/2}\sqrt{\frac{n!}{\Gamma(n+\alpha+1)}}\Gamma(s+\alpha/2)P_{n}^{(\frac{1+\alpha}{2})}\left(\frac{1}{i}(s-1/2);\;\frac{\pi}{2}\right).
\end{equation*}
\end{prop}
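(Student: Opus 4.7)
The plan is a direct evaluation of the Mellin integral, mirroring the computation that produced the Meixner–Pollaczek polynomial in Theorem~\ref{thm:GUE}. Writing out the definition,
\begin{equation*}
\phi_n^*(s) = \sqrt{\frac{n!}{\Gamma(n+\alpha+1)}} \int_0^{\infty} x^{s+\alpha/2-1} e^{-x/2} L_n^{(\alpha)}(x)\, dx,
\end{equation*}
I would insert the hypergeometric representation $L_n^{(\alpha)}(x) = \binom{n+\alpha}{n}\,{}_1F_1(-n;\alpha+1;x)$, exchange the terminating sum and the integral, and evaluate each term using $\int_0^\infty x^{\beta-1}e^{-x/2}dx = 2^\beta \Gamma(\beta)$. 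After using $(-n)_k/k!\,\binom{n+\alpha}{n-k}$ to rewrite the coefficients, the integral becomes
\begin{equation*}
2^{s+\alpha/2}\,\Gamma(s+\alpha/2)\,\binom{n+\alpha}{n}\; {}_2F_1\!\left(\begin{matrix}-n,\, s+\alpha/2\\ \alpha+1\end{matrix};\,2\right).
\end{equation*}
Convergence needs $\operatorname{Re}(s+\alpha/2)>0$, after which the right-hand side — a polynomial in $s$ times $\Gamma(s+\alpha/2)$ — gives the meromorphic continuation to all of $\C$ automatically.

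The second step is to recognise this ${}_2F_1$ at argument $2$ as a Meixner–Pollaczek polynomial. From the definition~\eqref{eq:Meixner-Pollaczek} one has $1-e^{i\pi}=2$, and choosing $\lambda=(1+\alpha)/2$, $x=-i(s-1/2)$ gives $\lambda+ix = s+\alpha/2$ and hence
\begin{equation*}
P_n^{((1+\alpha)/2)}\!\left(\tfrac{1}{i}(s-1/2);\,\pi/2\right) = \frac{(\alpha+1)_n}{n!}\,i^n\; {}_2F_1\!\left(\begin{matrix}-n,\, s+\alpha/2\\ \alpha+1\end{matrix};\,2\right).
\end{equation*}
Substituting back and using $\binom{n+\alpha}{n} = (\alpha+1)_n/n!$ together with $(-i)^n i^n = 1$ matches both sides, once the ratio $\sqrt{n!/\Gamma(n+\alpha+1)}\cdot(\alpha+1)_n/n!$ is simplified.

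The only obstacle is bookkeeping of the Gamma, factorial, Pochhammer, and phase factors — entirely mechanical once the two displays above are in hand. The conceptual content is simply the observation that the specialisation $\phi=\pi/2$ of the Meixner–Pollaczek hypergeometric representation coincides with the series produced by the Gamma integral. One could alternatively mimic the three-term recurrence strategy of Theorem~\ref{thm:Wronsk_GUE}, deriving a recurrence in $n$ for $\phi_n^*(s)$ from the classical Laguerre recursion and matching it with the Meixner–Pollaczek recurrence at $\phi=\pi/2$; however, since one only needs a single-function identity (rather than a product or Wronskian), the direct integral approach is considerably shorter.
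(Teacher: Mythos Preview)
Your computation is correct: the termwise integration of the Laguerre series against $x^{s+\alpha/2-1}e^{-x/2}$ produces precisely the ${}_2F_1$ at argument $2$, and the identification with the Meixner--Pollaczek polynomial at $\phi=\pi/2$ goes through with the phase and Pochhammer bookkeeping as you outline. (One small cosmetic point: the clause ``After using $(-n)_k/k!\,\binom{n+\alpha}{n-k}$ to rewrite the coefficients'' is slightly garbled and not actually needed --- the ${}_2F_1$ falls out directly from $\Gamma(s+\alpha/2+k)=\Gamma(s+\alpha/2)(s+\alpha/2)_k$.)

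The paper does not supply its own proof of this proposition; it simply attributes the result to Bump \textit{et al.}\ and Coffey and states it without argument. Your direct Mellin computation is exactly the kind of argument those references carry out, so there is nothing to compare against --- your write-up fills in what the paper leaves to citation.
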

\begin{theorem} 
\label{thm:Wronsk_LUE}
Let $\phi_n(x)$ be Laguerre wavefunctions. Then,
\begin{itemize}
\item[i)] the Mellin transform of the products is
\begin{multline}
\omega_{n,\ell}^*(s) = (-1)^{\ell}\frac{\Gamma(s)\Gamma(s+\alpha)}{\Gamma(s-\ell)\sqrt{n!(n+\ell)!\Gamma(n+\alpha+1)\Gamma(n+\ell+\alpha+1)}}\\
\small\times S_n\left(-\left(s-1/2\right)^2;\ell+1/2,1/2,\alpha+1/2\right);
\label{eq:Mproduct_Laguerre}
\end{multline}
\item[ii)] the Mellin transform of the Wronskians is
\be
W^*_{n,\ell}(s-1) =\frac{\ell}{s-1}\omega_{n,\ell}^*(s).
\label{eq:Mwronskian_Laguerre}
\ee
\end{itemize}
\end{theorem}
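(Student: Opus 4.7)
The plan mirrors the structure of the Hermite case (Theorem~\ref{thm:Wronsk_GUE}): part (i) reduces to checking that $\omega_{n,\ell}^*(s)$ satisfies the three-term recurrence in $n$ of the continuous dual Hahn polynomials together with two initial conditions, and part (ii) follows by expressing $W_{n,\ell}$ in terms of products $\omega_{n',\ell'}$ and then invoking the forward-shift operator for continuous dual Hahn. The two basic ingredients I would keep at hand are the three-term recurrence for Laguerre wavefunctions,
\[
\sqrt{(n+1)(n+\alpha+1)}\,\phi_{n+1}(x) = (2n+\alpha+1-x)\phi_n(x) - \sqrt{n(n+\alpha)}\,\phi_{n-1}(x),
\]
and the Laguerre analogue of the differential identity~\eqref{eq:diff_rec_Hermite}, namely
\[
x\phi_n'(x) = \bigl(n+\tfrac{\alpha}{2}-\tfrac{x}{2}\bigr)\phi_n(x) - \sqrt{n(n+\alpha)}\,\phi_{n-1}(x),
\]
which is immediate from $xL_n^{(\alpha)\prime}(x) = nL_n^{(\alpha)}(x) - (n+\alpha)L_{n-1}^{(\alpha)}(x)$ together with the explicit form of $\phi_n(x)$ in~\eqref{eq:wavefunctions}.

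For part (i), I would multiply the three-term recurrence by $\phi_{n+\ell}(x)$, and also by $\phi_{n+\ell+1}(x)$ (after shifting $n\to n+\ell$), combine the two, and rearrange to produce an identity expressing $\omega_{n+1,\ell}(x)$ in terms of $\omega_{n,\ell}(x)$, $\omega_{n-1,\ell}(x)$, plus auxiliary terms carrying factors $x$ and $x^2$. Using the differentiation identity above to convert every such $x$-factor into a derivative, and then taking the Mellin transform together with $\MM[xf'(x);s]=-sf^*(s)$, the $x$-dependence gets traded for a quadratic polynomial in $s-1/2$. What should drop out is precisely the three-term recurrence of the continuous dual Hahn polynomials $S_n(-(s-1/2)^2;\,\ell+1/2,\,1/2,\,\alpha+1/2)$ from~\cite[Sec. 9.3]{Koekoek10}, with coefficients $A_n=(n+\ell+1)(n+\ell+\alpha+1)$ and $C_n=n(n+\alpha)$ matching the factors that arise naturally. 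The initial values $\omega_{0,\ell}^*(s)$ and $\omega_{1,\ell}^*(s)$ can be computed by direct gamma-integral evaluation (or deduced from Proposition~\ref{prop:Coffey}), and checking these against~\eqref{eq:Mproduct_Laguerre} at $n=0,1$ fixes the overall normalisation and pins down the formula in the strip of absolute convergence, after which Carlson's theorem extends it to complex $s$ as in the proof of Theorem~\ref{thm:LUE}.

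For part (ii), I would expand $W_{n,\ell}(x) = \phi_n(x)\phi_{n+\ell}'(x) - \phi_n'(x)\phi_{n+\ell}(x)$ and apply the differentiation identity above at both indices $n$ and $n+\ell$. The common $-x/2$ contributions from the two wavefunctions cancel by antisymmetry of the determinant, leaving $xW_{n,\ell}(x)$ expressible as a linear combination of $\omega_{n,\ell-1}(x)$ and $\omega_{n-1,\ell+1}(x)$ with explicit coefficients $\sqrt{(n+\ell)(n+\ell+\alpha)}$ and $-\sqrt{n(n+\alpha)}$. Taking Mellin transforms, substituting~\eqref{eq:Mproduct_Laguerre} from part (i), and applying the forward-shift operator for the continuous dual Hahn polynomials~\cite[Sec. 9.3]{Koekoek10} then collapses the two summands into a single continuous dual Hahn factor, yielding exactly the ratio $\ell/(s-1)$ claimed in~\eqref{eq:Mwronskian_Laguerre}.

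The principal obstacle I anticipate is the algebraic bookkeeping in part (i): unlike the Hermite case, both the Laguerre recurrence and the differentiation identity are inhomogeneous in $x$, so intermediate steps produce cross-terms of the form $\omega_{n,\ell\pm 1}$ and $\omega_{n\pm 1,\ell\mp 1}$ which must conspire to cancel in order to leave a pure three-term recurrence in $n$ at fixed $\ell$. Should this direct route become too tangled, a cleaner alternative is to verify instead the second-order difference equation in $s$ characterising the continuous dual Hahn family, by starting from the Sturm-Liouville operator whose eigenfunction is $\omega_{n,\ell}(x)$ and transferring it through the Mellin transform by the standard rules~\eqref{eq:Mellin_d}--\eqref{eq:Mellin_md}.
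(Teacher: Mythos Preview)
Your proposal is correct and follows essentially the same route as the paper: prove the continuous dual Hahn three-term recurrence in $n$ for $\omega_{n,\ell}^*(s)$ using the Laguerre recurrence and differential identity, check the initial conditions at $n=0,1$, and then for part (ii) express $xW_{n,\ell}(x)$ as a combination of $\omega_{n,\ell-1}$ and $\omega_{n-1,\ell+1}$ and collapse via the forward shift operator.

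Two small remarks. First, the appeal to Carlson's theorem is misplaced here: the induction is on the integer $n$ with $s$ already complex in the Mellin strip, so once the recurrence and initial values match, the identity holds for all $n$ without any extra analytic continuation argument. Second, the bookkeeping worry you raise in part (i) is exactly what the paper sidesteps: rather than tracking all the cross-terms $\omega_{n,\ell\pm1}$ and $\omega_{n\pm1,\ell\mp1}$ separately, one verifies directly (using your two relations) the single $x$-side identity
\[
\sqrt{A_n(n+1)(n+\alpha+1)}\,\omega_{n+1,\ell}(x)
=(A_n+C_n-\ell(\ell+1))\,\omega_{n,\ell}(x)
+\bigl(x^2\omega_{n,\ell}'(x)\bigr)'
-\sqrt{C_n(n+\ell)(n+\ell+\alpha)}\,\omega_{n-1,\ell}(x),
\]
with $A_n=(n+\ell+1)(n+\ell+\alpha+1)$ and $C_n=n(n+\alpha)$; the Mellin transform of $(x^2\omega')'$ is $s(s-1)\omega^*(s)$ by~\eqref{eq:Mellin_d}--\eqref{eq:Mellin_md}, which produces the quadratic in $s-1/2$ cleanly.
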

\begin{proof} Given the three-term recurrence of the continuous dual Hahn polynomials\cite[Sec. 9, Eq. (9.3.4)]{Koekoek10}, we need to show that $\omega_{n,\ell}^*(s) $ satisfies 
\begin{multline}
\sqrt{A_n(n+1)(n+\alpha+1)}\,\omega_{n+1,\ell}^*(s)
=(A_n+C_n-\ell(\ell+1)+s(s-1))\omega_{n,\ell}^*(s)\\
 -\sqrt{C_n(n+\ell)(n+\ell+\alpha)}\omega_{n-1,\ell}^*(s),
\label{eq:rec_prodL}
\end{multline}
with $A_n=(n+\ell+1)(n+\ell+\alpha+1)$ and $C_n=n(n+\alpha)$.
Using the relations
\begin{align}
\sqrt{(n+1)(n+\alpha+1)}\phi_{n+1}(x)&=(2n+\alpha+1-x)\phi_n(x)-\sqrt{n(n+\alpha)}\phi_{n-1}(x),
\label{eq:Laguerre_rec1}\\
x\phi'_n(x)&=\frac{1}{2}(2n+\alpha-x)\phi_n(x)-\sqrt{n(n+\alpha)}\phi_{n-1}(x),
\label{eq:Laguerre_rec2}
\end{align}
we have the identity
\begin{multline*}
\sqrt{A_n(n+1)(n+\alpha+1)}\,\omega_{n+1,\ell}(x)=(A_n+C_n-\ell(\ell+1))\omega_{n,\ell}(x)+\left(x^2\omega_{n,\ell}'(x)\right)'\\
 -\sqrt{C_n(n+\ell)(n+\ell+\alpha)}\omega_{n-1,\ell}(x).
\end{multline*}
Taking the Mellin transform of both sides and using~\eqref{eq:Mellin_d}-\eqref{eq:Mellin_md} we get~\eqref{eq:rec_prodL}. The initial conditions are
\begin{align*}
\omega_{0,\ell}^*(s)&=(-1)^\ell \frac{\Gamma (s ) \Gamma (s+\alpha )}{\Gamma (s-\ell) \sqrt{\Gamma (\ell+1)\Gamma (\alpha+1)\Gamma (\ell+\alpha+1)}}, \\
 \omega_{1,\ell}^*(s)&=(-1)^\ell \frac{\Gamma (s ) \Gamma (s+\alpha )\left(s^2-s+(\alpha+1)(\ell+1)\right)}{\Gamma (s-\ell) \sqrt{\Gamma (\ell+2)\Gamma (\alpha+2)\Gamma (\ell+\alpha+2)}}.
\end{align*}
These complete the proof of~\eqref{eq:Mproduct_Laguerre}. Similarly to the Hermite case, Eq.~\eqref{eq:Mwronskian_Laguerre} can be proved by using~\eqref{eq:Laguerre_rec1}-\eqref{eq:Laguerre_rec2}, and the elementary properties of the Mellin transform~\eqref{eq:Mellin_d}-\eqref{eq:Mellin_md}.
\end{proof}

Using the very same method one can show that the Mellin transform of products and Wronskians of two Jacobi wavefunctions is essentially a Wilson polynomial. The proof follows the same lines as the Hermite and Laguerre cases and is omitted.
\begin{theorem} 
\label{thm:Wronsk_JUE}
Let $\phi_n(x)$ be Jacobi wavefunctions~\eqref{eq:wavefunctions}. Then,
\begin{itemize}
\item[i)] the Mellin transform of the products is
\begin{align}
\omega_{n,\ell}^*(s) &=\sqrt{\frac{(\alpha_1+\alpha_2+2n+1)\Gamma(\alpha_1+\alpha_2+n+1)}{n!\Gamma(\alpha_1+n+1)\Gamma(\alpha_2+n+1)}}\nonumber\\
&\small\times\sqrt{\frac{(\alpha_1+\alpha_2+2(n+\ell)+1)\Gamma(\alpha_1+\alpha_2+n+\ell+1)}{(n+\ell)!\Gamma(\alpha_1+n+\ell+1)\Gamma(\alpha_2+n+\ell+1)}}\nonumber\\
&\small\times(-1)^n\Gamma(\alpha_1+n+\ell+1)\frac{\Gamma(s)\Gamma(\alpha_2+s)}{\Gamma(s-\ell)\Gamma(\alpha_1+\alpha_2+2n+\ell+1)}\nonumber\\
 &\small\times W_n\left(-\left(s-\frac{1}{2}\right)^2;\;\ell+\frac{1}{2},\frac{1}{2},\alpha_2+\frac{1}{2},-\alpha_1-\alpha_2-2n-\ell-\frac{1}{2}\right),
 \label{eq:Mproduct_Jacobi}
\end{align}
\item[ii)] the difference of Mellin transforms of the Wronskians is
\be
\label{eq:Mwronskian_Jacobi}
W^*_{n,\ell}(s)-W^*_{n,\ell}(s-1) =\frac{\ell}{s-1}\omega_{n,\ell}^*(s).
\ee
\end{itemize}
\end{theorem}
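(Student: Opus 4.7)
\begin{Proof}[Proposal for Theorem~\ref{thm:Wronsk_JUE}]
The strategy mirrors the Hermite and Laguerre cases of Theorems~\ref{thm:Wronsk_GUE} and~\ref{thm:Wronsk_LUE}. For part (i), the plan is to show that $\omega_{n,\ell}^*(s)$, as a function of $n$, satisfies the three-term recurrence of the Wilson polynomials $W_n(-(s-1/2)^2; \ell+1/2, 1/2, \alpha_2+1/2, -\alpha_1-\alpha_2-2n-\ell-1/2)$ (see \cite[Sec. 9.1]{Koekoek10}), and then to match two initial conditions at $n=0,1$. The appearance of the fourth Wilson parameter depending on $n$ reflects the fact (already noted in the remark after Theorem~\ref{thm:zetafinite}) that here the shift of $n$ changes not only the index but also one parameter of the family.

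The input for the recurrence comes from two standard identities for the Jacobi wavefunctions~\eqref{eq:wavefunctions}, namely a three-term recurrence of the form
\[
(2x-1)\phi_n(x) = a_n \phi_{n+1}(x) + b_n \phi_n(x) + a_{n-1} \phi_{n-1}(x),
\]
and a differential-difference relation of the form
\[
x(1-x)\phi_n'(x) = \tilde a_n \phi_{n+1}(x) + \tilde b_n \phi_n(x) + \tilde c_n \phi_{n-1}(x),
\]
with coefficients that are explicit in $n, \alpha_1, \alpha_2$. Multiplying these relations by $\phi_{n+\ell}(x)$ and by its analogues (obtained by replacing $n$ by $n+\ell$) and taking suitable linear combinations, I would produce an identity of the schematic form
\[
\sqrt{A_n(n+1)(\alpha_1+n+1)(\alpha_2+n+1)(\alpha_1+\alpha_2+n+1)}\,\omega_{n+1,\ell}(x) = P_n(x,\partial_x)\omega_{n,\ell}(x) + (\cdots)\omega_{n-1,\ell}(x),
\]
where $P_n(x,\partial_x)$ is a second-order differential operator in $x$ with polynomial coefficients of degree at most two (coming from the $x(1-x)$ factor and from $2x-1$). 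The Mellin transform is then taken using~\eqref{eq:Mellin_d}, \eqref{eq:Mellin_m} and~\eqref{eq:Mellin_md}, which converts $\partial_x$ and multiplication by polynomials into shifts and multiplications in $s$; the result should be exactly the Wilson three-term recurrence in $n$, with $x^2 = -(s-1/2)^2$. The initial conditions $\omega_{0,\ell}^*(s)$ and $\omega_{1,\ell}^*(s)$ can be computed directly as Beta-type integrals involving Jacobi polynomials, yielding ratios of Gamma functions consistent with the stated formula.

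For part (ii), I would use the standard derivative identity for Jacobi wavefunctions, expressing $\phi_n'(x)$ as a linear combination of $\phi_n(x)$, $\phi_{n-1}(x)$ (and their $(1-x)^{-1}$, $x^{-1}$ analogues), to write
\[
W_{n,\ell}(x) = \phi_n(x)\phi_{n+\ell}'(x) - \phi_n'(x)\phi_{n+\ell}(x)
\]
as a combination of $\omega_{n,\ell\pm 1}$-type products divided by $x(1-x)$. Taking the Mellin transform, the factor $1/(x(1-x))$ produces precisely the discrete difference $W^*_{n,\ell}(s) - W^*_{n,\ell}(s-1)$ on the left-hand side via the identity $\mathcal M[f(x)/(x(1-x));s] = \mathcal M[f;s-1] - \mathcal M[f;s]$ applied with an extra shift, and on the right-hand side one recovers $(\ell/(s-1))\omega_{n,\ell}^*(s)$ after applying the Forward Shift Operator for Wilson polynomials \cite[Sec.~9.1]{Koekoek10} to the Wilson factors arising from part (i).

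The main obstacle is bookkeeping: the four-parameter Wilson recurrence involves factors such as $(\alpha_1+\alpha_2+2n+\ell)$ and $(\alpha_1+\alpha_2+2n+\ell+2)$ in every coefficient, and identifying them correctly from the Jacobi identities requires careful tracking of the normalization constants in~\eqref{eq:wavefunctions}. Unlike the Hermite case, where one has a single simple three-term recurrence, and unlike Laguerre, where only $(n+\alpha)$ and $n$ factors appear, the Jacobi recurrence mixes four shifted sequences, making the algebraic identification of coefficients the most delicate step. A good check at the end is to specialise $\ell=1$ and recover Theorem~\ref{thm:JUE}, where $\omega_{n,1}^*$ (up to prefactors) is the completed moment $\xi_n(s)$ of the JUE.
\end{Proof}
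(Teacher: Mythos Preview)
Your proposal is correct and follows exactly the approach the paper indicates: the paper explicitly omits the proof of Theorem~\ref{thm:Wronsk_JUE}, stating only that it ``follows the same lines as the Hermite and Laguerre cases'' (Theorems~\ref{thm:Wronsk_GUE} and~\ref{thm:Wronsk_LUE}), and your outline---deriving a three-term recurrence in $n$ for $\omega_{n,\ell}^*(s)$ from the Jacobi recurrence and differential-difference relations, verifying initial conditions, then handling the Wronskian via the Forward Shift Operator---is precisely that template. Your explicit acknowledgement of the $n$-dependence of the fourth Wilson parameter (so that the recurrence is not the standard fixed-parameter Wilson recurrence but a contiguous relation shifting both index and one parameter) is the key subtlety distinguishing the Jacobi case, and you have identified it correctly.
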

\par
We can also calculate the Mellin transform of a single Jacobi wavefunction in terms of a continuous Hahn polynomial. The interesting case turns out to be for weights with slightly shifted parameters.
\begin{theorem}
Consider the functions
\be
\tilde{\phi}_{n}(x) = x^{\frac{\alpha_2-1}{2}}(1-x)^{\frac{\alpha_1-1}{2}}P^{(\alpha_1,\alpha_2)}_{n}(x)\chi_{[0,1]}(x) \label{Jacobiwavemodweight}
\ee
Then the Mellin transform can be written in terms of continuous Hahn polynomials:
\be
\begin{split}
&\tilde{\phi}^{*}_{n}(s) = \frac{\Gamma\left(s+\frac{\alpha_2-1}{2}+n\right)}{\Gamma\left(s+\frac{\alpha_1+\alpha_2}{2}+n\right)}\Gamma\left(\frac{\alpha_1+1}{2}\right)(-i)^{n}\\
& \times p_{n}(-i(s-1),(\alpha_2+1)/2,-(\alpha_1+\alpha_2)/2-n,(\alpha_2+1)/2,-(\alpha_1+\alpha_2)/2-n) \label{conthahnjacobi}.
\end{split}
\ee
These polynomials have zeros on the vertical line $\mathrm{Re}(s)=1$ and satisfy an orthogonality condition. They are invariant under the reflection $s \to 2-s$ (up to a change of sign if $n$ is odd).
\end{theorem}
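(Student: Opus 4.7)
The plan is to compute $\tilde{\phi}_n^*(s)$ directly from a hypergeometric representation of the Jacobi polynomial, identify the result as a continuous Hahn polynomial after a Thomae transformation of the terminating ${}_3F_2(1)$, and then extract the critical-line zeros, the reflection symmetry, and the orthogonality from the general structure of this family (with $a=c$ and $b=d$). This is in the spirit of Theorems~\ref{thm:LUE} and~\ref{thm:JUE}, but since $\tilde\phi_n$ is a single Jacobi wavefunction with slightly shifted weight exponents, the Mellin integral can be evaluated in closed form, with no Wronskian or product structure required.

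First I would use the Rodrigues formula~\eqref{eq:Jacobi} to pin down the hypergeometric normalisation
\[
P_n^{(\alpha_1,\alpha_2)}(x)=\frac{(-1)^n(\alpha_2+1)_n}{n!}\;{}_2F_1\!\left(\begin{matrix}-n,\,n+\alpha_1+\alpha_2+1\\\alpha_2+1\end{matrix};x\right),
\]
the constant being fixed by $P_n(0)=(-1)^n(\alpha_2+1)_n/n!$. Substituting into the definition of $\tilde\phi_n^*(s)$ and evaluating each monomial with a Beta integral $B(s+k+(\alpha_2-1)/2,(\alpha_1+1)/2)$ gives
\[
\tilde{\phi}_n^*(s)=(-1)^n\Gamma\!\left(\tfrac{\alpha_1+1}{2}\right)\frac{(\alpha_2+1)_n}{n!}\,\frac{\Gamma(s+\tfrac{\alpha_2-1}{2})}{\Gamma(s+\tfrac{\alpha_1+\alpha_2}{2})}\;{}_3F_2\!\left(\begin{matrix}-n,\,n+\alpha_1+\alpha_2+1,\,s+\tfrac{\alpha_2-1}{2}\\\alpha_2+1,\,s+\tfrac{\alpha_1+\alpha_2}{2}\end{matrix};1\right).
\]
Next, I would apply the Thomae transformation that swaps a numerator and denominator parameter so as to send the large upper parameter $n+\alpha_1+\alpha_2+1$ to $-\alpha_1-n$, rewriting the ${}_3F_2(1)$ in the form appearing in the definition~\eqref{eq:ContinuousHahn} of $p_n\bigl(-i(s-1);\tfrac{\alpha_2+1}{2},-\tfrac{\alpha_1+\alpha_2}{2}-n,\tfrac{\alpha_2+1}{2},-\tfrac{\alpha_1+\alpha_2}{2}-n\bigr)$. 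Collapsing the Pochhammer prefactors produced by this transformation via $\Gamma(z+n)=(z)_n\Gamma(z)$ is what yields the Gamma ratio $\Gamma(s+(\alpha_2-1)/2+n)/\Gamma(s+(\alpha_1+\alpha_2)/2+n)$ announced in the statement.

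The three qualitative assertions then follow from the special structure of continuous Hahn polynomials with coincident parameter pairs. Under the substitution $x=-i(s-1)$, the line $\operatorname{Re}(s)=1$ corresponds to the real $x$-axis; for $a=c$ and $b=d$ real, a Thomae symmetry sending $a+ix\to a-ix$ (combined with the equal-pair exchange) implies $p_n(-x;a,b,a,b)=(-1)^n p_n(x;a,b,a,b)$, and together with the invariance of the Gamma-ratio prefactor under $s\mapsto 2-s$ this gives the reflection formula up to $(-1)^n$. Because $b=-(\alpha_1+\alpha_2)/2-n<0$, the standard orthogonality of $p_n$ on $\mathbb R$ with weight $|\Gamma(a+ix)\Gamma(b+ix)|^2$ does not apply directly, but following Neretin~\cite{Neretin02} (already invoked for the JUE after Theorem~\ref{thm:zetafinite} and in Appendix~\ref{app:neoOP}) one deforms the contour and picks up the finitely many residues crossing the real line, producing a modified orthogonality relation valid in this regime; this forces the zeros of $p_n(x;a,b,a,b)$ to be real, hence all zeros of $\tilde\phi_n^*(s)$ to lie on $\operatorname{Re}(s)=1$.

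The main obstacle is the Thomae step: among the ten equivalent representations of the terminating ${}_3F_2(1)$ generated by Thomae's group, only one produces the continuous Hahn form with the $n$-shifted parameter $b=-(\alpha_1+\alpha_2)/2-n$, and one must track the induced Pochhammer factors carefully so that they collapse into the precise Gamma ratio of the statement. A secondary delicate point is the contour-deformation argument needed to establish orthogonality (and the critical-line property) in the presence of a negative parameter; this is the same technical device used for the Wilson polynomials in the JUE case.
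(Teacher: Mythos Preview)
Your identification of the continuous Hahn polynomial via term-by-term integration and a Thomae transformation is essentially the paper's approach as well; the paper is terser and simply says one ``expands the Jacobi polynomial, integrates term by term, and identifies the hypergeometric sum with the definition'', but the content is the same computation you outline.

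The genuine difference is in how the orthogonality and the location of zeros are handled. You propose to appeal directly to a Neretin-type contour deformation for the continuous Hahn family with the negative parameter $b=d=-(\alpha_1+\alpha_2)/2-n$. The paper does \emph{not} do this: Neretin's orthogonality~\eqref{neretin} in Appendix~\ref{app:neoOP} is stated only for Wilson polynomials, and the paper's key step is to reduce the continuous Hahn polynomial with coincident pairs $a=c$, $b=d$ to a Wilson polynomial of half the degree, namely
\[
p_{2n}(x;a,b,a,b)\propto W_n(x^2;0,\tfrac12,a,b),\qquad
p_{2n+1}(x;a,b,a,b)\propto x\,W_n(x^2;1,\tfrac12,a,b),
\]
proved by matching weights via the Gamma duplication formula and analytic continuation in $a,b$. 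Only after this reduction does Neretin's result for Wilson polynomials (one negative parameter) apply directly to give orthogonality and hence real zeros in $x$, i.e.\ zeros on $\operatorname{Re}(s)=1$.

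Your route would also work in principle---one could redo the contour-deformation argument for continuous Hahn polynomials with one negative real-part pair---but that is a separate computation, not the result already recorded in the appendix. The paper's Wilson reduction is precisely the device that lets one recycle the existing Neretin orthogonality without further work. So your proposal is correct in outline, but you should be aware that the ``Neretin for continuous Hahn'' step is not available off the shelf in the paper; the Wilson reduction is what closes the argument cleanly.
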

\begin{proof}
To identify the continuous Hahn polynomial, one employs the standard expansion of Jacobi polynomials and calculates the Mellin transform integrating term by term. The result is a hypergeometric sum which can be identified with definition \ref{eq:ContinuousDualHahn}, leading to \eqref{conthahnjacobi}. The difficulty in establishing the conclusion is that these polynomials are only known to be orthogonal when the parameters have positive real part. 

We proceed by expressing the right-hand side of \eqref{conthahnjacobi} in terms of Wilson polynomials and apply a result of Neretin \cite{Neretin02}. We claim that for generic parameters $a,b \in \mathbb{C}$, we have
\begin{align}
p_{2n}(x,a,b,a,b) &\propto W_{n}(x^{2},0,1/2,a,b) \label{wiltohahn1}\\
p_{2n+1}(x,a,b,a,b) &\propto xW_{n}(x^{2},1,1/2,a,b) \label{wiltohahn2}.
\end{align}
up to a constant independent of $x$. If $a$ and $b$ have positive real part, this follows by writing the orthogonality condition for the Wilson polynomials with the parameters given in \eqref{wiltohahn1} or \eqref{wiltohahn2}. Use of the duplication formula for the Gamma function shows that the weight reduces to $|\Gamma(a+ix)|^{2}\,|\Gamma(b+ix)|^{2}$ which is the weight function for continuous Hahn polynomials. If $a$ or $b$ have negative real part the identity follows by analytic continuation. Now inserting \eqref{wiltohahn1} and \eqref{wiltohahn2} into \eqref{conthahnjacobi} and applying  the orthogonality \eqref{neretin} completes the proof.
\end{proof}

Going back to random matrices, we obtain exact ordinary differential equations for the one-point functions of the classical ensembles.
Denote by $\rho_n^{(2)}(x)$ the one-point function of the ensembles GUE, LUE, or JUE. As already discussed in the Introducion, $\rho^{(2)}_n(x)$ can be represented as the Wronskian of two adjacent wavefunctions. 
\par
The following proposition is a corollary of the previous theorems on Mellin transforms of Wronskians. For GUE and LUE we recover a result obtained and used earlier by G\"otze and Tikhomirov~\cite[Lemma 2.1 and Lemma 3.1]{Gotze05}. For the JUE a similar result does not seem to have been published.

\begin{prop} 
\label{prop:ODE}
The one-point correlation function $\rho_n^{(2)}(x)$ satisfies the  differential equation $D\rho_n^{(2)}(x)=0$, where
\begin{align*}
Dy=
\left\{  \begin{array}{l@{\quad}cr} 
\displaystyle y'''+(4n-x^2)y'+xy
&(\GUE)  \\ \\
\displaystyle x^2y'''+4xy''+(x-a)(b-x)y'+\left(\frac{1+b}{2}-\frac{\alpha^2}{x}\right)y,
&(\LUE) \\ \\
\displaystyle  x^2((1-x)^3y)''''+x((4+x)(1-x)^2y)''+(1-x) A(x)y'+B(x)y,&(\JUE)
\end{array}\right.
\label{eq:ODE1point}
\end{align*}
where 
\begin{equation*}
a=m+n-\sqrt{4mn+2}, \quad b=m+n+\sqrt{4mn+2},
\end{equation*}
and 
\begin{equation*}
A(x)=-\left((\alpha_1+\alpha_2)^2-2\right)x^2 -4 n x(x-1) (\alpha_1+\alpha_2+n)+2 \alpha_2x (\alpha_1+\alpha_2)-\alpha_2^2+2,
\end{equation*}
\begin{multline*} 
B(x)=x^2 ((\alpha_1+2 n)^2-4)-6 n x (\alpha_1+n)+2 x-2\\
+\alpha_2(2 x-1) (x-1) (\alpha_1+2 n) +2 n (\alpha_1+n)-\frac{\alpha_2^2}{x} (1-x)^3.
\end{multline*}
\end{prop}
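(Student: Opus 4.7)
The strategy is to derive each ODE by translating the three-term difference equation satisfied by the Mellin transform $\rho_n^{(2),*}(s)$, in the variable $s$, into a differential equation for $\rho_n^{(2)}(x)$ via inverse Mellin transform. Using the Christoffel-Darboux formula one writes
\[
\rho_n^{(2)}(x) = c_n\, W_{n-1,1}(x)
\]
for an explicit constant $c_n$, where $W_{n-1,1}$ is the Wronskian of two adjacent wavefunctions. By Theorems~\ref{thm:Wronsk_GUE}--\ref{thm:Wronsk_JUE}, the Mellin transform $W^*_{n-1,1}(s-1) = \ell/(s-1)\cdot \omega^*_{n-1,1}(s)$ (with $\ell=1$), and $\omega_{n-1,1}^*(s)$ is a Meixner-Pollaczek / continuous dual Hahn / Wilson polynomial in $s$, up to explicit Gamma factors.

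Each of these hypergeometric OP's satisfies a three-term difference equation in its argument. Multiplied by the Gamma prefactors, which enjoy the usual functional equations $\Gamma(s+1)=s\Gamma(s)$, these become a difference equation of the form
\[
a(s)\,\rho_n^{(2),*}(s+1) + b(s)\,\rho_n^{(2),*}(s) + c(s)\,\rho_n^{(2),*}(s-1) = 0
\]
(with polynomial coefficients, of higher degree in the JUE case where the Wilson polynomial degenerates into a longer recurrence after accounting for the shifted $\Gamma(s-\ell)$ factor). The plan is then to invert the Mellin transform term by term using the standard rules
\[
\mathcal{M}[x f(x);s]=f^*(s+1),\qquad \mathcal{M}[x^{-1}f(x);s]=f^*(s-1),\qquad \mathcal{M}[x^m f^{(m)}(x);s]=(-1)^m (s)_m f^*(s),
\]
so that polynomial multipliers in $s$ become differential operators acting on shifted powers of $x$. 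After clearing denominators, the resulting ODE has polynomial-in-$x$ coefficients whose order matches the maximal degree of $a(s), b(s), c(s)$. Carrying this out for the Meixner-Pollaczek equation yields the third-order GUE operator $y'''+(4n-x^2)y'+xy$; for continuous dual Hahn it yields the third-order LUE operator; and for Wilson (whose difference equation has quadratic-in-$s$ coefficients reinforced by the extra $1/\Gamma(s-\ell)$ shift) it yields the fourth-order JUE operator.

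The main technical obstacle is the coefficient bookkeeping in the JUE case: the Wilson three-term difference equation depends on all four parameters $\{\tfrac{3}{2},\tfrac{1}{2},\alpha_2+\tfrac{1}{2},\tfrac{1}{2}-\alpha_1-\alpha_2-2n\}$, and one must carefully track how the ratio of Gamma functions $\Gamma(s)\Gamma(\alpha_2+s)/\Gamma(s-1)\Gamma(\alpha_1+\alpha_2+2n+s+1)$ contributes additional polynomial multipliers in $s$ under the shifts $s\mapsto s\pm1$. Rewriting these multipliers as the derivative operator via $\mathcal{M}[x^m f^{(m)};s]=(-1)^m (s)_m f^*(s)$, collecting terms, and matching to the stated $A(x)$ and $B(x)$ is a purely algebraic but somewhat delicate task.

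A robust alternative, which can serve as an independent verification, is to work directly with the Christoffel-Darboux kernel: using the second-order ODEs satisfied by the individual Hermite, Laguerre, or Jacobi wavefunctions together with their three-term recurrences, one can differentiate the Wronskian representation of $\rho_n^{(2)}(x)$ the requisite number of times and verify the claimed identity coefficient by coefficient. For GUE and LUE this reproduces the G\"otze-Tikhomirov derivation~\cite{Gotze05}; for JUE it establishes the apparently new fourth-order equation stated in the proposition.
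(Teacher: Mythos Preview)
Your proposal is correct and follows essentially the same approach as the paper: both convert the three-term difference equation satisfied by the Mellin transform of $\rho_n^{(2)}$ into an ODE via the standard Mellin dictionary~\eqref{eq:Mellin_d}--\eqref{eq:Mellin_md}. The only cosmetic difference is that the paper invokes the moment recursions directly (Harer--Zagier~\eqref{eq:HZ_GUE}, Haagerup--Thorbj\o rnsen~\eqref{eq:HT}, and the JUE recursion~\eqref{eq:rec_JUE}) rather than the Wronskian theorems, but at $\ell=1$ these are the same identities.
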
 
\begin{proof} We present the proof for the GUE. 
The one-point function $\rho^{(2)}_n(x)$ of the GUE is proportional to the Wronskian of consecutive Hermite functions $W_{n-1,1}(x/\sqrt{2})=\operatorname{Wr}(\phi_{n-1}(x/\sqrt{2}),\phi_{n}(x/\sqrt{2}))$. To prove the Theorem for the GUE it is therefore sufficient to show that the function $W_{n,1}(x)$ satisfies the third-order differential equation 
\be
W_{n,1}'''(x)+4(2(n+1)-x^2)W_{n,1}'(x)+4xW_{n,1}'''(x)=0.
\label{eq:ODE_Wronsk_H}
\nonumber
\ee
The Harer-Zagier recurrence relation~\eqref{eq:HZ_GUE} for the moments of the GUE is in fact a difference equation for the Mellin transform $W_{n,1}^*(s)$
\be
s(s+1)(s+2)W_{n,1}^*(s)+8(s+2)(n+1)W_{n,1}^*(s+2)
-4(s+5)W_{n,1}^*(s+4)=0,
\label{eq:diff_Mellin_GUE}
\ee
Using the properties~\eqref{eq:Mellin_d}-\eqref{eq:Mellin_md} of the Mellin transform we get the claim.

For the LUE and JUE the proof follows the same steps starting from the recurrence relations~\eqref{eq:HT} and~\eqref{eq:rec_JUE}.
\end{proof}

\begin{rmk} It is natural to ask whether Wronskians of nonadjacent wavefunctions satisfy similar differential equations. The Mellin transforms of those Wronskians are essentially hypergeometric OP's (see Theorems~\ref{thm:Wronsk_GUE}, \ref{thm:Wronsk_LUE}, and \ref{thm:Wronsk_JUE}). Hence, they satisfy a discrete Sturm-Liouville problem. Upon inversion of the Mellin transform this discrete problem correspond to a differential equation.

We discuss, for concreteness the case of Wronskian of Hermite wavefunctions $W_{n,\ell}(x)=\operatorname{Wr}(\phi_{n}(x),\phi_{n+\ell}(x))$. Its Mellin transform is given in~\eqref{eq:Mproduct_Hermite}-\eqref{eq:Mwronskian_Hermite}. From the difference equation of Meixner-Pollaczek polynomials 
\begin{multline}
(\ell-s-2)P_n^{\left(\frac{\ell+1}{2}\right)}\left(-\frac{i(s + 1)}{2};\frac{\pi}{2}\right)-2(2n+s+1)P_n^{\left(\frac{\ell+1}{2}\right)}\left(-\frac{i(s + 3)}{2};\frac{\pi}{2}\right)\\
+(\ell+s+4)P_n^{\left(\frac{\ell+1}{2}\right)}\left(-\frac{i(s + 5)}{2};\frac{\pi}{2}\right)=0,
\nonumber
\end{multline}
the Mellin transform of the Wronskian satisfies the difference equation
\begin{multline}
s(s+1)(s+2)(s+3)(\ell-s-2)W_{n,\ell}^*(s)
-4(s+2)(s+3)(2n+\ell+1)(s-\ell+2)W_{n,\ell}^*(s+2)\\
+4((s+4)^2-\ell^2)(s-\ell+2)W_{n,\ell}^*(s+4)=0.
\label{eq:functional_Wronsk_H}
\end{multline}
This implies that, for generic $n$ and $\ell$, $W_{n,\ell}(x)$ satisfies a fifth-order differential equation. When when $\ell=1$, formula~\eqref{eq:functional_Wronsk_H} simplifies as~\eqref{eq:diff_Mellin_GUE},
which corresponds to the third-order equation of Proposition~\ref{prop:ODE}.
\end{rmk}
\subsection{Convolution of hypergeometric OP's}
If $\phi_n(x)$ are the Hermite, Laguerre, or Jacobi functions and $\omega_{n,\ell}(x)=\phi_n(x)\phi_{n+\ell}(x)$, by the convolution property~\eqref{eq:convolution_theorem} of the Mellin transform, we have
\begin{equation}
\frac{1}{2\pi i}\int_{c -i\infty}^{c + i\infty} \phi_n^*(s - u)\phi_{n+\ell}^*(u)du=\omega_{n,\ell}^*(s),
\label{eq:convolution_Mellin}
\end{equation}
with $c$ in the fundamental strip of convergence of the Mellin transform. Note that $\phi_n(x)$ is in $L^2(\R_{+})$ so that the fundamental strip always contains the line $\frac{1}{2}+i\R$. Given that $ \phi_n^*$, $\phi_{n+\ell}^*$ and $\omega_{n,\ell}^*$ have hypergeometric OP's factors, the above formula is a `convolution formula' for hypergeoemetric OP's. Note that this is different from the usual (discrete) convolutions formulas of orthogonal polynomials.
\par
When $\phi_n(x)$ is a Hermite wavefunction, $\phi_n^*(s)$ has a Meixner-Pollaczeck polynomial factor whose parameter depends on the parity of $n$. Using the explicit expressions in Proposition~\ref{prop:Bump} and Theorem~\ref{thm:Wronsk_GUE} we can write the special cases of~\eqref{eq:convolution_Mellin}:
\begin{multline}
\frac{1}{2\pi i}\int_{\frac{1}{2} -i\infty}^{\frac{1}{2} + i\infty}\Gamma\left(\frac{s-u}{2}\right)\Gamma\left(\frac{u}{2}\right)P^{\left(\frac{1}{4}\right)}_{m}\left(\frac{s-u-1/2}{2i};\;\frac{\pi}{2}\right)P^{\left(\frac{1}{4}\right)}_{n}\left(\frac{u-1/2}{2i};\;\frac{\pi}{2}\right)du\\
=2^{\ell-m-n-\frac{3}{2}s+2}i^{2r-m-n}\frac{\sqrt{\pi}(2r)!}{m!n!}
\frac{\Gamma(s)}{\Gamma\left(\frac{s+1}{2}-\ell\right)}\,P_{2r}^{\left(\ell+\frac{1}{2}\right)}\left(\frac{s}{2i};\frac{\pi}{2}\right)
\end{multline}
\begin{multline}
\frac{1}{2\pi i}\int_{\frac{1}{2}  -i\infty}^{\frac{1}{2}  + i\infty}\Gamma\left(\frac{s-u+1}{2}\right)\Gamma\left(\frac{u+1}{2}\right)P^{\left(\frac{3}{4}\right)}_{m}\left(\frac{s-u-1/2}{2i};\;\frac{\pi}{2}\right)P^{\left(\frac{3}{4}\right)}_{n}\left(\frac{u-1/2}{2i};\;\frac{\pi}{2}\right)du\\
=2^{\ell-m-n-\frac{3}{2}s}i^{2r-m-n+1}\frac{\sqrt{\pi}(2r+1)!}{m!n!}
\frac{\Gamma(s)}{\Gamma\left(\frac{s+1}{2}-\ell\right)}\,P_{2r+1}^{\left(\ell+\frac{1}{2}\right)}\left(\frac{s}{2i};\frac{\pi}{2}\right)
\end{multline}
\begin{multline}
\frac{1}{2\pi i}\int_{\frac{1}{2}  -i\infty}^{\frac{1}{2}  + i\infty}\Gamma\left(\frac{s-u+1}{2}\right)\Gamma\left(\frac{u}{2}\right)P^{\left(\frac{3}{4}\right)}_{m}\left(\frac{s-u-1/2}{2i};\;\frac{\pi}{2}\right)P^{\left(\frac{1}{4}\right)}_{n}\left(\frac{u-1/2}{2i};\;\frac{\pi}{2}\right)du\\
=2^{\ell-m-n-\frac{3}{2}s+\frac{3}{2}}i^{2r-m-n}\frac{\sqrt{\pi}(2r)!}{m!n!}
\frac{\Gamma(s)}{\Gamma\left(\frac{s}{2}-\ell\right)}\,P_{2r}^{\left(\ell+1\right)}\left(\frac{s}{2i};\frac{\pi}{2}\right)
\end{multline}
with $r=\min(m,n)$ and $\ell=|m-n|$.

In a similar way, from Proposition~\ref{prop:Coffey} and Theorem~\ref{thm:Wronsk_LUE}, the convolution property~\eqref{eq:convolution_Mellin} gives the identity
\begin{multline}
\frac{1}{2\pi}\int\limits_{\frac{1}{2}-i\infty}^{\frac{1}{2}+i\infty}\Gamma\left(s-u+\frac{\alpha}{2}\right)\Gamma\left(u+\frac{\alpha}{2}\right)P^{\left(\frac{\alpha+1}{2}\right)}_{m}\left(\frac{s-u-1/2}{i};\;\frac{\pi}{2}\right)P^{\left(\frac{\alpha+1}{2}\right)}_{n}\left(\frac{s-1/2}{i};\;\frac{\pi}{2}\right)du\\
=\frac{(-1)^{mn+\ell}}{m!n!}\frac{\Gamma(s)\Gamma(s+\alpha)}{2^{s+\alpha}\Gamma(s-\ell)} S_r\left(-\left(s-\frac{1}{2}\right)^2;\ell+\frac{1}{2},\frac{1}{2},\alpha+\frac{1}{2}\right)
\end{multline}
with $r=\min(m,n)$ and $\ell=|m-n|$.

\section{Higher order cumulants}
\label{sec:further}
It is tempting to look for reciprocity formulae for cumulants of higher order (covariances, etc.). 
Write the moments as $Q_k^{\C}(m,n)=\E\operatorname{Tr} X_n^k$, second order moments as $Q_{k,l}^{\C}(m,n)=\E\operatorname{Tr} X_n^k\operatorname{Tr}X_n^l$,
and covariances as $C_{k,l}^{\C}=Q_{k,l}^{\C}(m,n)-Q_k^{\C}(m,n)Q_l^{\C}(m,n)$.

Positive and negative moments of LUE matrices satisfy recursion relations known as \emph{loop equations}. The following lemma can be proved using standard methods in random matrix theory (see, e.g.~\cite{Eynard15} for similar loop equations for positive moments of the GUE).\begin{lem}[Loop equations for positive and negative moments of LUE] For $k_1,\dots,k_v\in\N$, the positive and negative moments satisfy the relations (loop equations):
\begin{multline}
\sum_{\ell=0}^{k_1-1}\mathbb{E}\operatorname{Tr}X_n^{\ell}\operatorname{Tr}X_n^{k_1-\ell+1}\operatorname{Tr}X_n^{k_2}\cdots\operatorname{Tr}X_n^{k_v}+\sum_{j=2}^vk_j\mathbb{E}\operatorname{Tr}X_n^{k_1+k_j-1}\prod_{\substack{i=2\\ i\neq j}}^{v}\operatorname{Tr}X_n^{k_i}\\
=\mathbb{E}\operatorname{Tr}X_n^{k_1}\operatorname{Tr}X_n^{k_2}\cdots\operatorname{Tr}X_n^{k_v}
-\alpha\mathbb{E}\operatorname{Tr}X_n^{k_1-1}\operatorname{Tr}X_n^{k_2}\cdots\operatorname{Tr}X_n^{k_v},
\label{eq:looppos}
\end{multline}
\begin{multline}
\sum_{\ell=0}^{k_1-1}\mathbb{E}\operatorname{Tr}X_n^{-\ell-1}\operatorname{Tr}X_n^{-k_1+\ell}\operatorname{Tr}X_n^{-k_2}\cdots\operatorname{Tr}X_n^{-k_v}+\sum_{j=2}^vk_j\mathbb{E}\operatorname{Tr}X_n^{-k_1-k_j-1}\! \prod_{\substack{i=2\\ i\neq j}}^{v}\operatorname{Tr}X_n^{-k_i}\\
=-\mathbb{E}\operatorname{Tr}X_n^{-k_1}\operatorname{Tr}X_n^{-k_2}\cdots\operatorname{Tr}X_n^{-k_v}
+\alpha\mathbb{E}\operatorname{Tr}X_n^{-k_1-1}\operatorname{Tr}X_n^{-k_2}\cdots\operatorname{Tr}X_n^{-k_v},
\label{eq:loopneg}
\end{multline}
provided they exist.
\end{lem}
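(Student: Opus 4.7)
The strategy is to derive both identities as Schwinger--Dyson (integration-by-parts) relations on the joint eigenvalue distribution of the LUE,
\[
\rho_n(\lambda) \propto \prod_{i=1}^n \lambda_i^{\alpha} e^{-\lambda_i}\prod_{i<j}(\lambda_i - \lambda_j)^2,\qquad \lambda\in\mathbb{R}_+^n.
\]
The key ingredient is the log-derivative
\[
\partial_{\lambda_i}\log\rho_n(\lambda) = \frac{\alpha}{\lambda_i} - 1 + 2\sum_{j\neq i}\frac{1}{\lambda_i - \lambda_j},
\]
so that for any $F_i(\lambda)$ sufficiently regular and decaying at the boundary of $\mathbb{R}_+^n$, summing over $i$ yields the master identity
\[
\mathbb{E}\!\left[\sum_i\partial_{\lambda_i}F_i\right] + \mathbb{E}\!\left[\sum_i F_i\!\left(\frac{\alpha}{\lambda_i}-1+2\!\sum_{j\neq i}\!\frac{1}{\lambda_i-\lambda_j}\right)\right]=0.
\]

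For \eqref{eq:looppos}, I would apply this identity with $F_i = \lambda_i^{k_1}\prod_{j=2}^v\operatorname{Tr}X_n^{k_j}$. Differentiating the product of traces inside $F_i$ produces the $\sum_{j\geq 2}k_j\operatorname{Tr}X_n^{k_1+k_j-1}$ contribution; the $\alpha/\lambda_i$ term contributes $\alpha\operatorname{Tr}X_n^{k_1-1}$ and the $-1$ term contributes $-\operatorname{Tr}X_n^{k_1}$. The Vandermonde interaction is reduced via the telescoping identity
\[
2\sum_{i}\lambda_i^{k_1}\sum_{j\neq i}\frac{1}{\lambda_i-\lambda_j} = \sum_{i\neq j}\frac{\lambda_i^{k_1}-\lambda_j^{k_1}}{\lambda_i-\lambda_j}=\sum_{\ell=0}^{k_1-1}\operatorname{Tr}X_n^{\ell}\operatorname{Tr}X_n^{k_1-1-\ell}-k_1\operatorname{Tr}X_n^{k_1-1},
\]
which furnishes the convolution sum in \eqref{eq:looppos}. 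The $k_1\operatorname{Tr}X_n^{k_1-1}$ pieces cancel between this term and $\sum_i\partial_{\lambda_i}\lambda_i^{k_1}$, and the claim drops out after multiplying through by $\prod_{j\geq 2}\operatorname{Tr}X_n^{k_j}$ inside the expectation.

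For \eqref{eq:loopneg} the plan is identical with $F_i = \lambda_i^{-k_1}\prod_{j=2}^v\operatorname{Tr}X_n^{-k_j}$. The only change is that one uses the negative-power analogue
\[
\frac{\lambda_i^{-k_1}-\lambda_j^{-k_1}}{\lambda_i-\lambda_j} = -\sum_{\ell=0}^{k_1-1}\lambda_i^{\ell-k_1}\lambda_j^{-1-\ell},
\]
which, after the relabelling $\ell\mapsto k_1-1-\ell$, produces the convolution $\sum_{\ell=0}^{k_1-1}\operatorname{Tr}X_n^{-\ell-1}\operatorname{Tr}X_n^{-k_1+\ell}$ appearing in \eqref{eq:loopneg}. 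The overall sign flip in this negative-power identity is precisely what produces the right-hand side $-\operatorname{Tr}X_n^{-k_1}+\alpha\operatorname{Tr}X_n^{-k_1-1}$ rather than the positive-case right-hand side $\operatorname{Tr}X_n^{k_1}-\alpha\operatorname{Tr}X_n^{k_1-1}$.

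The main (mild) obstacle is justifying the vanishing of boundary terms in the integration by parts on $\mathbb{R}_+^n$. Decay at $+\infty$ is immediate from the Laguerre exponential $e^{-\lambda_i}$, since $F_i$ is at worst rational of bounded degree in each $\lambda_i$. At the origin, the factor $\lambda_i^\alpha$ combined with the vanishing of the Vandermonde absorbs any singularity of $F_i$ precisely under the standing assumption that the moments on both sides exist (cf.~\cite{Kumari17}); this is the condition $\alpha>\max_j k_j$ relevant for \eqref{eq:loopneg}. Once this is verified, both loop equations follow mechanically from the algebraic identities above.
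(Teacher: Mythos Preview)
Your approach is exactly the ``standard method'' the paper invokes: the paper does not actually write out a proof of this lemma, but merely states that it ``can be proved using standard methods in random matrix theory'' and points to \cite{Eynard15} for the analogous GUE computation. Your Schwinger--Dyson/integration-by-parts argument on the LUE eigenvalue density, together with the symmetrisation of the Vandermonde interaction, is precisely that standard method, and your treatment of the boundary terms is appropriate.

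One small point: your telescoping identity correctly produces the convolution $\sum_{\ell=0}^{k_1-1}\operatorname{Tr}X_n^{\ell}\operatorname{Tr}X_n^{k_1-1-\ell}$, whereas the displayed equation in the paper reads $\operatorname{Tr}X_n^{k_1-\ell+1}$. Your version is the one consistent both with the derivation and with how the paper actually \emph{uses} the lemma in the proof of Proposition~7.2 (setting $k_1=1$, $k_2=k$ there recovers $Q_{k,1}=mnQ_k+kQ_k$ only with the exponent $k_1-1-\ell$). So the discrepancy is a typo in the stated lemma rather than an error in your argument; you should simply flag it rather than claim your computation ``furnishes'' the convolution as printed.
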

\begin{prop}[Reflection symmetry for LUE covariances]
\be
C_{-k,-1}^{\C} = \frac{C_{k,1}^{\C}}{\alpha^2 (\alpha^2-1)\cdots (\alpha^2-k^2)}
\label{eq:reciprocity_covLUE}
\ee
\end{prop}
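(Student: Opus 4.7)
The plan is to reduce the claim, via the reciprocity law for LUE single moments stated earlier in Section~\ref{sub:zeta2}, to a pair of simpler identities provable using the loop equations \eqref{eq:looppos}--\eqref{eq:loopneg} together with a scaling argument. First, I would establish the auxiliary identity $C^{\C}_{k,1}=k\,Q^{\C}_k(m,n)$, valid for all complex $k$ in the strip of analyticity, by a scaling argument. The tilted LUE measure $d\mathbb{P}_t(X)\propto(\det X)^{\alpha}e^{-(1+t)\operatorname{Tr} X}dX$ is the pushforward of the standard LUE under the dilation $X\mapsto(1+t)^{-1}X$; a short computation of Jacobians shows that $\mathbb{E}_t[\operatorname{Tr} X^s]=(1+t)^{-s}Q^{\C}_s(m,n)$. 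Differentiating at $t=0$ and using the elementary identity $\tfrac{d}{dt}\big|_{t=0}\mathbb{E}_t[f]=-C_{f,\operatorname{Tr} X}$ yields $C_{\operatorname{Tr} X^s,\operatorname{Tr} X}=s\,Q_s^{\C}$; specialising to $s=k$ gives the claim.

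Next I would prove the dual identity $C^{\C}_{-k,-1}=(k/\alpha)\,Q^{\C}_{-k-1}(m,n)$ by induction on $k$. The base case $k=1$ follows directly from \eqref{eq:loopneg} with $v=1$, $k_1=1$, which reads $\mathbb{E}[\operatorname{Tr}X_n^{-1}]^{2}=-Q^{\C}_{-1}+\alpha\,Q^{\C}_{-2}$; substituting the explicit values $Q^{\C}_{-1}=n/\alpha$ and $Q^{\C}_{-2}=mn/[\alpha(\alpha^2-1)]$ and simplifying gives $C^{\C}_{-1,-1}=Q^{\C}_{-2}/\alpha$. For the inductive step I would apply \eqref{eq:loopneg} with $v=2$, $k_1=k$, $k_2=1$, which expresses $\alpha\,\mathbb{E}[\operatorname{Tr} X_n^{-k-1}\operatorname{Tr} X_n^{-1}]$ in terms of $\mathbb{E}[\operatorname{Tr}X_n^{-k}\operatorname{Tr}X_n^{-1}]$, the single moment $Q^{\C}_{-k-2}$, and a sum of triple-trace expectations; expanding the latter via joint cumulants and invoking the inductive hypothesis together with the Haagerup--Thorbj\o rnsen type recurrence for negative single moments closes the recursion.

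Combining the two identities with the single-moment reciprocity $Q^{\C}_{-k-1}=Q^{\C}_{k}/[\alpha\prod_{j=1}^{k}(\alpha^2-j^2)]$ gives
\begin{equation*}
C^{\C}_{-k,-1}=\frac{k}{\alpha}Q^{\C}_{-k-1}=\frac{k\,Q^{\C}_{k}}{\alpha^{2}\prod_{j=1}^{k}(\alpha^{2}-j^{2})}=\frac{C^{\C}_{k,1}}{\alpha^{2}(\alpha^{2}-1)\cdots(\alpha^{2}-k^{2})},
\end{equation*}
which is the claimed identity.

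The main obstacle is the inductive step: the triple-trace terms appearing in the $v=2$ loop equation yield, after joint-cumulant expansion, a third cumulant $\kappa_3$ which does not simplify directly. One route is to pair the negative $v=2$ loop equation with its positive counterpart (with $k_1=k$, $k_2=1$) and to apply the single-moment reciprocity term by term, showing that the $\kappa_3$ contributions cancel once the single-moment factors are matched. Alternatively, one can bypass $\kappa_3$ entirely by observing that both sides of the claim are rational functions of $\alpha$ of controlled degree, and verifying the identity at a dense sequence of integer values of $\alpha$ via the explicit positive-moment formula~\eqref{eq:moments_+ve} combined with reciprocity.
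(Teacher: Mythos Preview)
Your scaling argument for $C_{k,1}^{\C}=kQ_k^{\C}$ is correct and in fact slicker than the paper's use of the positive loop equation; differentiating the tilted measure at $t=0$ gives the covariance identity cleanly. The final combination with the single-moment reciprocity is also fine.

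The gap is entirely in your second step. By taking $k_1=k$, $k_2=1$ in~\eqref{eq:loopneg} you create the quadratic sum $\sum_{\ell=0}^{k-1}\E[\Tr X^{-\ell-1}\Tr X^{-k+\ell}\Tr X^{-1}]$, and after cumulant expansion this produces not only third cumulants but also covariances of the form $C_{-\ell-1,\,\ell-k}^{\C}$ with both indices negative and summing to $-k-1$. These are \emph{not} of the form $C^{\C}_{-j,-1}$, so your induction hypothesis does not control them; the recursion does not close. Your two proposed workarounds (hoping for a $\kappa_3$ cancellation after pairing with the positive $v=2$ equation, or a rational-function-in-$\alpha$ interpolation) are both unsubstantiated as written.

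The fix is one line and avoids all of this. Apply~\eqref{eq:loopneg} with $v=2$, $k_1=0$, $k_2=k$: the quadratic sum is empty, the derivative term contributes $kQ_{-k-1}^{\C}$, and the right-hand side is $-Q_{-k,0}^{\C}+\alpha Q_{-k,-1}^{\C}$. Since $Q_{-k,0}^{\C}=nQ_{-k}^{\C}$ and $Q_{-1}^{\C}=n/\alpha$, this rearranges immediately to
\[
C_{-k,-1}^{\C}=Q_{-k,-1}^{\C}-Q_{-1}^{\C}Q_{-k}^{\C}=\frac{k}{\alpha}\,Q_{-k-1}^{\C},
\]
with no induction, no triple traces, and no third cumulants. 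This is exactly what the paper does. The moral: in loop equations, putting the smallest available index in the $k_1$ slot kills the nonlinear sum.
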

\begin{proof} Note that $Q_0^{\C}(m,n)=n$, $Q_1^{\C}(m,n)=mn$ and $Q_{-1}^{\C}(m,n)=n/\alpha$.
The proof of~\eqref{eq:reciprocity_covLUE} uses the loop equations~\eqref{eq:looppos}-\eqref{eq:loopneg} together with the reciprocity law~\eqref{eq:dualityLUE} for moments, as follows.
By the loop equations,
\[
Q_{k,1}^{\C}(m,n)= m Q_{k,0}^{\C}(m,n) + k Q_{k}^{\C}(m,n) = Q_{1}^{\C}(m,n)Q_{k}^{\C}(m,n) + k Q_{k}^{\C}(m,n)
\]
and
\[
\alpha Q_{-k,-1}^{\C}(m,n) = Q_{-k,0}^{\C}(m,n) + k Q_{-k-1}^{\C}(m,n) = m Q_{-k}^{\C}(m,n) + k Q_{-k-1}^{\C}(m,n).
\]
The first gives 
\[
C_{k,1}^{\C} = k Q_{k}^{\C}(m,n).
\] 
Using $Q^{\C}_{-1}(m,n)=n/\alpha$, the second gives 
\[
Q_{-k,-1}^{\C}(m,n) = Q_{-1}^{\C}(m,n) Q_{-k}^{\C}(m,n) + k Q_{-k-1}^{\C}(m,n) / \alpha,
\]
that is $C_{-k,-1} ^{\C}= k Q_{-k-1}^{\C}(m,n) / \alpha$.
Now~\eqref{eq:dualityLUE} gives
\[
C_{-k,-1}^{\C} = \frac{k Q_{k}^{\C}(m,n)}{ \alpha^2 (\alpha^2-1)\cdots (\alpha^2-k^2)}
= \frac{C_{k,1}^{\C}}{ \alpha^2 (\alpha^2-1) \cdots (\alpha^2-k^2)}
\]
as required.
\end{proof}
There exists also a precise reflection symmetry for the covariances of one-cut $\beta$-ensembles at leading order in $n$. Suppose that the eigenvalues $x_1,\dots,x_n$ of $X_n$ have a joint probability density proportional to
\be
\prod_{1 \leq j < k \leq n}|x_{k}-x_{j}|^{\beta}\prod_{i=1}^n e^{-n\beta V(x_i)}dx_{i}\quad (\beta>0)
\ee
on the real line. Special cases of one-cut $\beta$-ensembles are the Laguerre and Jacobi ensembles defined by the weights~\eqref{eq:weights} with $\alpha=(c-1)n$ and $\alpha_{1,2}=(c_{1,2}-1)n$, with $c,c_{1},c_{2}>0$ (see discussion in Section~\ref{sec:asymptotics}).
Denote the covariances by 
\be
C_{k,l}=\E\operatorname{Tr} X_n^k\operatorname{Tr}X_n^l-\E\operatorname{Tr} X_n^k\E\operatorname{Tr}X_n^l.
\ee
The  two-point connected correlator 
\be
G_2(z,w)=\E\operatorname{Tr}\frac{1}{z-X_n}\operatorname{Tr}\frac{1}{w-X_n}-\E\operatorname{Tr}\frac{1}{z-X_n}\E\operatorname{Tr}\frac{1}{w-X_n}.
\ee
is the generating function of covariances of positive and negative moments 
\begin{align}
G_2(z,w)&=\sum_{k,l=0}^{\infty}C_{k,l}z^{-(k+1)}w^{-(l+1)} &&\text{as $z,w\to \infty$}\label{eq:Ggen1}\\
&=\sum_{k,l=0}^{\infty}C_{-(k+1),-(l+1)}z^{k}w^{l} &&\text{as $z,w\to 0$}.
\label{eq:Ggen2}
\end{align}

For one-cut $\beta$-ensembles the large $n$ limit of $G_2(z,w)$ exists and depends only on the edges of the cut~\cite{Ambjorn90,BrezinZee93,Beenakker94,Cunden15,Cunden16}, see Eq.~\eqref{eq:G20} below. (On the other hand, for multicut ensembles the asymptotics of $G_2(z,w)$ is more delicate due to the presence of oscillating terms~\cite{Albeverio01,BrezinDeo98}). In the Laguerre ensemble set $\alpha=m-n=(c-1)n$, with $c>1$. The edges $x_{\pm}$ of the cut are strictly positive, see~\eqref{eq:equi_Laguerre}. In the Jacobi ensemble set $\alpha_{1,2}=(c_{1,2}-1)$, with $c_{2}>1$, so that the cut $[x_-,x_+]$ is contained in the interval $(0,1]$, see~\eqref{eq:equi_Jacobi}.
\begin{theorem}[Covariances of Laguerre and Jacobi ensembles at leading order in $n$] Let $X_n$ be in the Laguerre (resp. Jacobi) ensemble with $c>1$ (resp. $c_2>1$). Denote the edges of the cut by $0<x_-<x_+$.  Then, for all $\beta>0$,
\be
\lim_{n\to\infty}C_{-k,-l}=\left(\frac{1}{x_-x_+}\right)^{k+l}\lim_{n\to\infty}C_{k,l}.
\ee
More explicitly:
\begin{align}
\lim_{n\to\infty}C_{-k,-l}&=\left(\frac{1}{c-1}\right)^{2(k+l)}\lim_{n\to\infty}C_{k,l}&\text{(Laguerre)}\\
\lim_{n\to\infty}C_{-k,-l}&=\left(\frac{c_1+c_2}{c_2-1}\right)^{2(k+l)}\lim_{n\to\infty}C_{k,l}&\text{(Jacobi)}.
\end{align}
\begin{proof} By the one-cut property, the limit
\be
G_{2,0}(z,w)=\lim_{n\to\infty}G(z,w),
\ee
 is given by the explicit formula
\be
G_{2,0}(z,w)=\frac{1}{\beta}\frac{1}{(z-w)^2}\left(\frac{zw-(x_-+x_+)(z+w)/2+x_-x_+}{\sqrt{(z-x_-)(z-x_+)(w-x_-)(w-x_+)}}-1\right).
\label{eq:G20}
\ee
Moreover, since the cut does not contain zero, the negative covariances $\lim_{n\to\infty} C_{-k,-l}$ exist.
From~\eqref{eq:G20} it is easy to verify the following functional equation
\be
G_{2,0}\left(\frac{x_-x_+}{z},\frac{x_-x_+}{w}\right)=\left(\frac{x_-x_+}{zw}\right)^2G_{2,0}(z,w).
\ee
Using~\eqref{eq:Ggen1}-\eqref{eq:Ggen2}, the claim follows. 
\end{proof}
\end{theorem}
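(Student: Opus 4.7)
The plan is to work with the two-point connected correlator $G_2(z,w)$ and exploit its explicit leading-order formula, together with the fact that $G_2$ generates both positive and negative covariances through two distinct expansions (the Laurent expansion at $\infty$ and the Taylor expansion near $0$).

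First I would establish existence of the limit of $C_{-k,-l}$. Since the assumption $c>1$ (resp.\ $c_2>1$) forces the equilibrium support $[x_-,x_+]$ to be bounded away from the origin, the one-point function $\rho_\infty(x)$ vanishes in a neighborhood of $0$, so negative moments are well-defined and one-cut universality ensures a limiting $G_{2,0}(z,w)$ holomorphic in each variable both near $\infty$ and near $0$. This legitimizes using the generating function identities \eqref{eq:Ggen1}--\eqref{eq:Ggen2} in the $n\to\infty$ limit.

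Next, the crux is the functional equation
\begin{equation*}
G_{2,0}\!\left(\frac{x_-x_+}{z},\frac{x_-x_+}{w}\right) = \left(\frac{x_-x_+}{zw}\right)^{\!2} G_{2,0}(z,w).
\end{equation*}
Using the explicit formula \eqref{eq:G20}, I would verify this by the change of variable $z \mapsto x_-x_+/z$ (similarly for $w$). The square-root factor transforms as
$\sqrt{(z-x_-)(z-x_+)} \mapsto \sqrt{x_-x_+/z^2}\,\sqrt{(z-x_-)(z-x_+)}$ up to signs determined by the branch, because $(x_-x_+/z - x_\pm) = -(x_\mp/z)(z-x_\mp)$; the difference $(z-w)^{-2}$ acquires a factor $(zw/(x_-x_+))^2$; and the polynomial bracket $zw - \tfrac{x_-+x_+}{2}(z+w) + x_-x_+$ is transformed into $x_-x_+/(zw)$ times itself (this is the miracle that makes it work, and it boils down to the fact that $(x_-x_+)^2$ is the constant term in $\prod(x - x_\pm)$ under the reciprocal substitution).

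Finally I would compare coefficients. Substituting $z \to x_-x_+/z$, $w \to x_-x_+/w$ in \eqref{eq:Ggen1} and equating with $(x_-x_+/(zw))^2$ times \eqref{eq:Ggen2} after a reindexing $k\to k-1$, $l\to l-1$ gives
$$\lim_{n\to\infty} C_{-k,-l} \;=\; (x_-x_+)^{-(k+l)} \lim_{n\to\infty} C_{k,l}.$$
The two explicit specializations then follow from $x_-x_+ = (c-1)^2$ in the Laguerre case and $x_-x_+ = ((c_2-1)/(c_1+c_2))^2$ in the Jacobi case, both read off directly from the formulas for $x_\pm$ in \eqref{eq:equi_Laguerre} and \eqref{eq:equi_Jacobi}.

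The main obstacle is purely the verification of the functional equation for $G_{2,0}$, which although elementary requires careful tracking of the branch of the square root on the relevant domain (so that both sides represent the same holomorphic function on, say, $\mathbb{C}\setminus[x_-,x_+]$ in each variable). Once the functional equation is in hand, the rest is bookkeeping with generating series.
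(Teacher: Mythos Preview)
Your proposal is correct and follows essentially the same approach as the paper: use the explicit one-cut formula \eqref{eq:G20} for $G_{2,0}$, verify the functional equation under $z\mapsto x_-x_+/z$, $w\mapsto x_-x_+/w$, and read off the relation between positive and negative covariances from the two generating series \eqref{eq:Ggen1}--\eqref{eq:Ggen2}. You supply more detail than the paper (the transformation of each factor in \eqref{eq:G20} and the explicit evaluation of $x_-x_+$), but the logical structure is identical.
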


\section{Orthogonal and Symplectic ensembles}
\label{sec:Orth_Sympl}
We will now discuss analogous results for the orthogonal and symplectic ensembles of random matrices, corresponding to averages over the density \eqref{eq:integrals} with $\beta=1$ or $\beta=4$ respectively. Our aim will be to isolate the polynomial factors of the moments (Mellin transforms), again for the Gaussian, Laguerre and Jacobi ensembles. These ensembles are characterized by their joint eigenvalue distribution as in \eqref{eq:integrals} with corresponding weight functions \eqref{eq:weights}. 
We will consider various expectation values of power of traces 
with respect to \eqref{eq:integrals} with $\beta=1$ and $4$. 
We will use the shorthand GSE and GOE to mean Gaussian Symplectic Ensemble, GSE ($\beta=4$), Gaussian Orthogonal Ensemble, GOE ($\beta=1$) and similarly for the Laguerre (LSE / LOE) and Jacobi (JSE / JOE) cases. As in the complex case, we will denote by $\alpha=m-n$ in the Laguerre case, and $\alpha_1=m_1-n$ and $\alpha_2=m_{2}-n$ in the Jacobi case, which we treat as fixed $n$-independent parameters.

Moments of real and quaternionic Gaussian ensembles have already received some attention in the literature, however much less is known compared with the complex case $\beta=2$. One of the first explicit formulas was derived by Goulden and Jackson \cite{Goulden-Jackson} who were motivated by the fact that, for $\beta=1$,  moments of Gaussian matrices describe the genus expansion of \textit{non-orientable} surfaces. An important development was achieved in the work of Ledoux \cite{Ledoux09} who discovered recursion relations for moments of the GOE and GSE (which can be viewed as real and quaternionic analogues of the Harer-Zagier recurrence relations). This was extended to the Laguerre ensemble in \cite{Cunden16}. Results holding for complex moments  were obtained in \cite{Simm11} .

\subsection{Recurrence relations and hypergeometric representations}
\label{sub:GFbeta1}
We define
\begin{align*}
Q_k^{\R}(n)&=\E\Tr X_n^{2k} &\text{if $X_n\sim\GOE$}\\
Q_k^{\Hh}(n)&=\E\Tr X_n^{2k} &\text{if $X_n\sim\GSE$}\\
Q_k^{\R}(m,n)&=\E\Tr X_n^{k}&\text{if $X_n\sim\LOE$}\\
Q_k^{\Hh}(m,n)&=\E\Tr X_n^{k}&\text{if $X_n\sim\LSE$}\\
Q_k^{\R}(\alpha_1,\alpha_2,n)&=\E\Tr X_n^{k}&\text{if $X_n\sim\JOE$}\\
Q_k^{\Hh}(\alpha_1,\alpha_2,n)&=\E\Tr X_n^{k} &\text{if $X_n\sim\JSE$}\\
\Delta Q_k^{\R}(\alpha_1,\alpha_2,n)&=Q_k^{\R}(\alpha_1,\alpha_2,n)-Q_{k+1}^{\R}(\alpha_1,\alpha_2,n)\\
\Delta Q_k^{\Hh}(\alpha_1,\alpha_2,n)&=Q_k^{\Hh}(\alpha_1,\alpha_2,n)-Q_{k+1}^{\Hh}(\alpha_1,\alpha_2,n)
\end{align*}
for all $k\in\C$ for which the expectations exist.
Moments of the classical orthogonal ensembles satisfy recursions similar to those of the unitary ensembles. To our knowledge this was first noticed by Ledoux for the GOE~\cite{Ledoux09} and extended to the LOE in~\cite{Cunden16b}. 
The first question is whether moments of orthogonal / symplectic ensembles enjoy reflection symmetries and have orthogonal polynomial factors as in the unitary case. This is not the case as can be ascertained from the following observation. The Harer-Zagier recursion for moments $\Q^{\C}_k(n)$ of the GUE is a three terms recursion in $k$ which can be interpreted as the discrete S-L problem of some families of hypergeometric (Meixner / Meixner-Pollaczek) polynomials. 
Moments of the classical orthogonal ensembles satisfy recursion formulae too. 
For the GOE, Ledoux~\cite{Ledoux09} discovered that $Q^{\R}_k(n)$ satisfy a five term recurrence relation which cannot be interpreted as a
S-L equation (a second order difference equation). In fact, Ledoux also found an alternative inhomogeneous recursion formula for $Q^{\R}_k(n)$ coupled with the moments of the GUE that is somewhat more convenient for some application. An analogue of this coupled recursion was obtained later for the LOE~\cite{Cunden16b}. 

These results suggest that \emph{suitable combinations of moments, rather than the moments themselves, have nice hypergeometric polynomial factors} similar to the unitary cases. 

Moments of the symplectic ensembles can be analysed similarly given the duality relations between moments of GSE, LSE and JSE of size $n$ and the (formal) moments of the GOE, LOE and JSE of size $-2n$
\begin{align}
Q_k^{\Hh}(n)&=(-1)^{k+1}2^{-1}Q_k^{\R}(-2n)
\label{eq:dualityG}\\
Q_k^{\Hh}(m,n)&=(-1)^{k+1}2^{-1}Q_k^{\R}(-2m,-2n)
\label{eq:dualityL}\\
Q_k^{\Hh}(\alpha_1,\alpha_2,n)&=-2^{-1} Q_k^{\R}(-2\alpha_1,-2\alpha_2,-2n).
\label{eq:dualityJ}
\end{align}
For the GOE/GSE the duality was put forward by Mulase and Waldron in terms diagrammatic expansion of Gaussian integrals~\cite{Mulase03}. See also~\cite{Ledoux09,Bryc09}.
This duality between orthogonal and symplectic  Laguerre ensembles appeared in the paper of Hanlon, Stanley and Stembridge~\cite[Corollary 4.2]{Hanlon92}.  The duality in the Jacobi ensembles has been observed by Forrester, Rahman and Witte~\cite[Eq. (4.15)]{Forrester17}. See also~\cite{Dumitriu15} and ~\cite[Appendix B]{Fyodorov16}.

\begin{theorem}
\label{thm:GOE1} The combinations of GOE and GSE moments
\begin{align}
S_k^{\R}(n)&=Q^{\R}_{k+1}(n)-(4n-2)Q^{\R}_{k}(n)-8k(2k-1)Q^{\R}_{k-1}(n)
\label{eq:S_k_R}\\
S_k^{\Hh}(n)&=2Q^{\Hh}_{k+1}(n)-(16n+4)Q^{\Hh}_{k}(n)-16k(2k-1)Q^{\Hh}_{k-1}(n)
\end{align}
have Meixner polynomial factors:
\begin{align}
S_k^{\R}(n)&=-3n(n-1)\;(2k-1)!!\;M_{n-2}(k;3,-1)
\label{eq:GOE_Meixner1}\\
&=-3n(n-1)\;(2k-1)!!\;M_{k}(n-2;3,-1)
\label{eq:GOE_Meixner2}\\
S_k^{\Hh}(n)&=-6n(2n+1)\;(2k-1)!!\;M_{2n-1}(k;3,-1)
\label{eq:GSE_Meixner1}\\
&=-6n(2n+1)\;(2k-1)!!\;M_{k}(2n-1;3,-1).
\label{eq:GSE_Meixner2}
\end{align}
In particular, for any integer $n$, $S_{k}^{\R}(n)/(2k-1)!!$ and $S_{k}^{\Hh}(n)/(2k-1)!!$ are Meixner-Pollaczek polynomials in $x=-i(k+3/2)$
\begin{align}
\frac{S_k^{\R}(n)}{(2k-1)!!}&=-6i^{n-2}\;P^{(3/2)}_{n-2}(x;\pi/2)
\label{eq:GOE_MP1}\\
\frac{S_k^{\Hh}(n)}{(2k-1)!!}&=-6i^{2n-1}\;P^{(3/2)}_{2n-1}(x;\pi/2)
\label{eq:GSE_MP1}
\end{align}
 invariant up to a change of sign under the reflection $k\to-3-k$, with complex zeros on the vertical line $\operatorname{Re}(k)=-3/2$.
\begin{proof} We first prove~\eqref{eq:GOE_Meixner1}. We read from Ledoux paper~\cite[Theorem 3]{Ledoux09}
\begin{align*}
S_k^{\R}(n)=Q^{\C}_{k+1}(n)-(4n-3)Q^{\C}_{k}(n)-2k(2k-1)Q^{\C}_{k-1}(n).
\end{align*}
Using the polynomial property of $Q^{\C}_{k}(n)$, this remainder can be expressed as a sum of two Meixner polynomials
\begin{align*}
S_k^{\R}(n)=3n\;(2k-1)!!\;\left(M_{n-1}(k;2,-1)-M_{n-1}(k+1;2,-1)\right)
\end{align*}
The representation~\eqref{eq:GOE_Meixner1} in terms of a single Meixner polynomial follows using the Forward Shift Operator~\cite[Eq. (9.10.6)]{Koekoek10}.
\par
To prove~\eqref{eq:GSE_Meixner1} the starting point is again a result in Ledoux paper~\cite[Theorem 5]{Ledoux09}
\begin{align*}
S_k^{\Hh}(n)=Q_{k+1}^{\R}(2n+1)-(8n+2)Q_{k}^{\R}(2n+1)-8k(2k-1)Q_{k-1}^{\R}(2n+1),
\end{align*}
 which can be written in the insightful form
\begin{align*}
S_k^{\Hh}(n)=S_k^{\R}(2n+1).
\end{align*}
The representation~\eqref{eq:GSE_Meixner1} is now a consequence of~\eqref{eq:GOE_Meixner1}. Alternatively, by the duality relation~\eqref{eq:dualityG}
\begin{align*}
S_k^{\Hh}(n)=(-1)^{k}\left(Q_{k+1}^{\R}(-2n)+(8n+2)Q_{k}^{\R}(-2n)-8k(2k-1)Q_{k-1}^{\R}(-2n)\right),
\end{align*}
that is
\begin{align*}
S_k^{\Hh}(n)=(-1)^{k}S_k^{\R}(-2n).
\end{align*}
Using the self-duality of Meixner polynomials, we write 
\begin{align*}
S_k^{\Hh}(n)=(-1)^{k+1}6n(2n+1)\;(2k-1)!!\;M_{k}(-2n-2;3,-1).
\end{align*}
Now we use the symmetry $(-1)^{k+1}M_{k}(-2n-2;3,-1)=M_{k}(2n+1;3,-1)$, and the self-duality again to conclude the proof.
\end{proof}
\end{theorem}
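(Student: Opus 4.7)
\medskip

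\noindent\textbf{Plan of proof.} The strategy is to reduce everything to the GUE case already handled by Theorem~\ref{thm:GUE}, exploiting the two coupled recursions that Ledoux~\cite{Ledoux09} established between orthogonal/symplectic moments and unitary moments. First I would treat $S_k^{\R}(n)$. By Ledoux's inhomogeneous recursion for GOE moments, the specific combination $S_k^{\R}(n)$ in~\eqref{eq:S_k_R} can be rewritten identically as
\begin{equation*}
S_k^{\R}(n)=Q^{\C}_{k+1}(n)-(4n-3)Q^{\C}_{k}(n)-2k(2k-1)Q^{\C}_{k-1}(n),
\end{equation*}
i.e.\ a three-term combination of \emph{unitary} moments of the same matrix size. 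This is the key identity to import and nothing new needs to be proved here.

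Next, I would apply the Meixner representation for the GUE moments proved in Theorem~\ref{thm:GUE}$'$, namely $Q^{\C}_{k}(n)=n\,(2k-1)!!\,M_{k}(n-1;2,-1)$. Substituting this into the expression for $S_k^{\R}(n)$, a short calculation using $(2(k+1)-1)!!=(2k+1)(2k-1)!!$ and $(2k-1)(2k-1)!!=(2k-1)!!-\text{shift terms}$ should collapse the three terms into $3n(2k-1)!!\,[M_{n-1}(k;2,-1)-M_{n-1}(k+1;2,-1)]$. At this point the Forward Shift Operator for Meixner polynomials (see~\cite[Eq.~(9.10.6)]{Koekoek10}) converts the finite difference in the $k$-variable into a single Meixner polynomial of shifted parameters, yielding~\eqref{eq:GOE_Meixner1}. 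The equivalent form~\eqref{eq:GOE_Meixner2} is then an immediate consequence of the self-duality $M_{r}(s;\beta,c)=M_{s}(r;\beta,c)$ of Meixner polynomials.

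For the symplectic side I would use the simplest possible route: Ledoux's analogous identity gives $S_k^{\Hh}(n)=S_k^{\R}(2n+1)$, after which~\eqref{eq:GSE_Meixner1}-\eqref{eq:GSE_Meixner2} follow by substituting $n\mapsto 2n+1$ in the formula just established for $S_k^{\R}$. As a sanity check, I would verify this is consistent with the duality~\eqref{eq:dualityG}: combined with $(-1)^{k+1}M_{k}(-2n-2;3,-1)=M_{k}(2n+1;3,-1)$ and the self-duality, the duality route gives the same polynomial, which is reassuring.

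Finally, to pass from~\eqref{eq:GOE_Meixner1}-\eqref{eq:GSE_Meixner1} to the Meixner--Pollaczek forms~\eqref{eq:GOE_MP1}-\eqref{eq:GSE_MP1}, I would invoke the limit relation recalled in the excerpt, $P_n^{(\lambda)}(x;\phi)=e^{-in\phi}\frac{(2\lambda)_n}{n!}M_n(-\lambda+ix;2\lambda,e^{-2i\phi})$, specialised to $\lambda=3/2$ and $\phi=\pi/2$, with $x=-i(k+3/2)$; this precisely converts $M_{k}(n-2;3,-1)$ into $P_{n-2}^{(3/2)}(-i(k+3/2);\pi/2)$ up to explicit prefactors. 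The reflection $k\mapsto-3-k$ is then the symmetry $x\mapsto-x$ of Meixner--Pollaczek polynomials (which induces an $(-1)^{\deg}$ sign), and the location of zeros on $\operatorname{Re}(k)=-3/2$ follows from the classical fact that the Meixner--Pollaczek polynomials are orthogonal on the real axis with respect to a positive weight and therefore have real zeros in the variable $x$, which translates into purely imaginary zeros in $k+3/2$. Extension from integer $k$ to complex $k$ proceeds by a Carlson-type argument identical to the one used in the proof of Theorem~\ref{thm:GUE}, so it need not be repeated in detail. The only mild obstacle I anticipate is bookkeeping of prefactors (the $-3n(n-1)$, $-6n(2n+1)$, $i^{n-2}$, $i^{2n-1}$ constants), which is purely mechanical: fix them by evaluating both sides at a single value such as $k=0$, using $Q^{\R}_0(n)=Q^{\Hh}_0(n)=n$ and the known values $M_r(0;\beta,c)=1$.
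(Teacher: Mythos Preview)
Your proposal is correct and follows essentially the same route as the paper: Ledoux's identity rewrites $S_k^{\R}(n)$ in terms of GUE moments, Theorem~\ref{thm:GUE}$'$ turns these into Meixner polynomials, the Forward Shift Operator collapses the difference into a single $M_{n-2}(k;3,-1)$, and the symplectic case follows from $S_k^{\Hh}(n)=S_k^{\R}(2n+1)$. Your additional remarks on the Meixner--Pollaczek conversion and the zero location are accurate elaborations of points the paper leaves implicit; the Carlson argument is not actually needed here (once the Meixner identity is established for integer $k$, both sides are polynomials in $k$ and hence agree everywhere), but mentioning it does no harm.
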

The S-L problem satisfied by the Meixner polynomials is a three term recursion formula for $S^{\R}_k(n)$ and $S^{\Hh}_k(n)$,
\begin{align}
(k+3)S^{\R}_{k+1}(n)=(2k+1)(2n-1)S^{\R}_{k}(n)+k(2k+1)(2k-1)S^{\R}_{k-1}(n)\\
(k+3)S^{\Hh}_{k+1}(n)=(2k+1)(4n+1)S^{\Hh}_{k}(n)+k(2k+1)(2k-1)S^{\Hh}_{k-1}(n)
\end{align}
These recursions, which are very similar to the Harer-Zagier formula, become five term recurrences for the moments $Q_{k}^{\R}(n)$ (this is Ledoux recursion~\cite[Theorem 2]{Ledoux09}) and $Q_{k}^{\Hh}(n)$. 
\begin{cor} 
\begin{align}
(k+1)Q_k^{\R}(n)&=(4k-1)(2n-1)Q_{k-1}^{\R}(n)\nonumber\\
&+(2k-3)(10k^2-9k-8n^2+8n)Q_{k-2}^{\R}(n)\nonumber\\
&-5(2k-3)(2k-4)(2k-5)(2n-1)Q_{k-3}^{\R}(n)\nonumber\\
&-2(2k-3)(2k-4)(2k-5)(2k-6)(2k-7)Q_{k-4}^{\R}(n).
\end{align}
\begin{align}
(k+1)Q_k^{\Hh}(n)&=(4k-1)(4n+1)Q_{k-1}^{\Hh}(n)\nonumber\\
&+(2k-3)(10k^2-9k-32n^2-16n)Q_{k-2}^{\Hh}(n)\nonumber\\
&-5(2k-3)(2k-4)(2k-5)(4n+1)Q_{k-3}^{\Hh}(n)\nonumber\\
&-2(2k-3)(2k-4)(2k-5)(2k-6)(2k-7)Q_{k-4}^{\Hh}(n).
\end{align}
\end{cor}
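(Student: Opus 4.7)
The corollary is a direct algebraic consequence of the three-term recurrences for $S_k^{\R}(n)$ and $S_k^{\Hh}(n)$ combined with the definitions \eqref{eq:S_k_R} (and its quaternionic analogue), so my plan is purely computational: substitute and simplify.

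Concretely, I start from
\[
(k+3)S^{\R}_{k+1}(n)=(2k+1)(2n-1)S^{\R}_{k}(n)+k(2k+1)(2k-1)S^{\R}_{k-1}(n)
\]
and replace each of $S^{\R}_{k+1}(n)$, $S^{\R}_{k}(n)$, $S^{\R}_{k-1}(n)$ by its expression in terms of $Q^{\R}_{\bullet}(n)$ coming from \eqref{eq:S_k_R}, namely $S^{\R}_{j}(n)=Q^{\R}_{j+1}(n)-(4n-2)Q^{\R}_{j}(n)-8j(2j-1)Q^{\R}_{j-1}(n)$ with $j=k+1,k,k-1$. The resulting identity is a linear relation whose terms range from $Q^{\R}_{k+2}(n)$ down to $Q^{\R}_{k-2}(n)$, i.e.\ a five-term recurrence. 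Renaming $k\mapsto k-2$ brings the indices into the range $Q^{\R}_{k}(n),\dots,Q^{\R}_{k-4}(n)$ appearing in the claim, and the coefficients $(4k-1)(2n-1)$, $(2k-3)(10k^2-9k-8n^2+8n)$, etc.\ arise by straightforward collection. This reproduces Ledoux's five-term recurrence \cite[Theorem 2]{Ledoux09}.

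The quaternionic recurrence is obtained in exactly the same way from
\[
(k+3)S^{\Hh}_{k+1}(n)=(2k+1)(4n+1)S^{\Hh}_{k}(n)+k(2k+1)(2k-1)S^{\Hh}_{k-1}(n),
\]
by substituting $S^{\Hh}_{j}(n)=2Q^{\Hh}_{j+1}(n)-(16n+4)Q^{\Hh}_{j}(n)-16j(2j-1)Q^{\Hh}_{j-1}(n)$ for $j=k+1,k,k-1$ and collecting. Alternatively, one may invoke the duality \eqref{eq:dualityG}, which exchanges $Q^{\Hh}_{\bullet}(n) \leftrightarrow Q^{\R}_{\bullet}(-2n)$ up to a sign $(-1)^{k+1}/2$, and deduce the GSE recurrence by the substitution $n\mapsto -2n$ in the GOE one; the explicit appearance of $(4n+1)$ and $(-32n^2-16n)$ in the symplectic statement matches exactly what $n\mapsto -2n$ produces from $(2n-1)$ and $(-8n^2+8n)$.

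The whole argument is mechanical and the only real source of error is bookkeeping of the coefficients after the substitution. The main obstacle, if any, is to arrange the manipulation so as not to drown in indices: grouping the substitution coefficient-by-coefficient (first all $Q^{\R}_{k+2}$-terms, then $Q^{\R}_{k+1}$-terms, and so on) keeps the computation to a half-page and makes it transparent that the quadratic-in-$n$ coefficient in front of $Q^{\R}_{k-2}(n)$ originates from the product $(4n-2)^2$ appearing when two $(4n-2)$ factors meet in different substitutions.
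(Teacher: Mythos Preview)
Your proposal is correct and is exactly the approach the paper takes: the sentence immediately preceding the Corollary says that the three-term recursions for $S^{\R}_k(n)$ and $S^{\Hh}_k(n)$ ``become five term recurrences for the moments $Q_{k}^{\R}(n)$ \ldots and $Q_{k}^{\Hh}(n)$'' upon substituting the definitions, which is precisely your substitution-and-shift argument. Your remark that the GSE recurrence can alternatively be obtained from the GOE one via the duality $n\mapsto -2n$ is also in line with the paper's use of \eqref{eq:dualityG} in the proof of Theorem~\ref{thm:GOE1}.
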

\begin{theorem}
\label{thm:LOE1} 
Set $\alpha=m-n$, and consider the following combinations of moments for the LOE and the LSE. 
\begin{multline}
S_k^{\R}(m,n)=Q^{\R}_{k+1}(m,n)-2(m+n-1)Q^{\R}_k(m,n)\\-(1-\alpha^2+4k(k-1))Q^{\R}_{k-1}(m,n)
\end{multline}
\begin{multline}
S_k^{\Hh}(m,n)=2Q^{\Hh}_{k+1}(m,n)-(8m+8n+4)Q^{\Hh}_k(m,n)\\-(2-8\alpha^2+8k(k-1))Q^{\Hh}_{k-1}(m,n)
\end{multline}
Then, $S_k^{\R}(m,n$ and $S_k^{\Hh}(m,n)$ can be written in terms of dual Hahn polynomials (as functions of $k$) or Hahn polynomials (as functions of $n$):
\begin{align}
\frac{S_k^{\R}(m,n)}{(k+\alpha)!}&=-\frac{6}{(n-2)!(m-2)!}\;S_{n-2}\left(x^2;\;\frac{5}{2},\frac{1}{2},\alpha+\frac{1}{2}\right)
\label{eq:LOE_CDH} \\
&=-\frac{3nm(n-1)(m-1)}{(\alpha+2)!}\;Q_{k-2}(n-2;\;2,2,-3-\alpha)
\label{eq:LOE_CDH2}\\
\frac{S_k^{\Hh}(m,n)}{(k+2\alpha)!}&=-\frac{24nm}{(2n)!(2m)!}
\; S_{2n-1}\left(x^2, \frac{5}{2}, \frac{1}{2}, 
  2 \alpha + \frac{1}{2}\right)
  \label{eq:LSE_CDH} \\
&=-\frac{12nm(2n+1)(2m+1)}{(2\alpha+2)!}\;Q_{k-2}(2n-1;\,2,2,-3-2\alpha)
\label{eq:LSE_CDH2} 
\end{align}
where $k=ix-1/2$. In particular this shows that the polynomials $S_k^{\R}(m,n) /(k+\alpha)!$ and $S_k^{\Hh}(m,n) /(k+2\alpha)!$  are invariant under the reflection $k \to -1-k$ (reciprocity law) and, moreover, their  zeros lie on the critical line $\operatorname{Re} (k)=-1/2$.
\begin{proof} By the inhomogeneous recursion for moments of the LOE~\cite[Theorem 3.5]{Cunden16b}, $S_k^{\R}(m,n)$ is a combination of moments of the LUE 
\begin{align*}
S_k^{\R}(m,n)=\frac{3}{k-1}((m+n-k-1)Q^{\C}_k(m-1,n-1)-Q^{\C}_{k+1}(m-1,n-1)),
 \label{eq:mom_LOE_LUE}
\end{align*}
that can be expressed in terms of continuous dual Hahn polynomials 
\begin{multline*}
S_k^{\R}(m,n)=\frac{3(k+\alpha)!}{(k-1)(m-2)!(n-2)!}\times\\
\biggl((m+n-k-1)S_{n-2}\left(x^2;\;\frac{3}{2},\frac{1}{2},\alpha+\frac{1}{2}\right)
-(m-n+k+1)S_{n-2}\left((x-i)^2;\;\frac{3}{2},\frac{1}{2},\alpha+\frac{1}{2}\right)\biggr)
\end{multline*}
The final result~\eqref{eq:LOE_CDH} can be obtained by using the Forward Shift Operator of the continuous dual Hahn polynomials~\cite[Eq. (9.3.7)]{Koekoek10}.
From the hypergeometric representation of Hahn polynomials we get~\eqref{eq:LOE_CDH2}. 

For the symplectic moments, note that, by duality~\eqref{eq:dualityL} between LOE and LSE moments,
\begin{equation*}
S_k^{\Hh}(m,n)=(-1)^kS_k^{\R}(-2m,-2n).
\end{equation*}
This proves~\eqref{eq:LSE_CDH2}. We use the identity $(3-2\alpha)_{k-2}Q_{k-2}(-2n-2;\;2,2,-3+2\alpha)=(-1)^k(3+2\alpha)_{k-2}Q_{k-2}(2n-1;\;2,2,-3-2\alpha)$ and find~\eqref{eq:LSE_CDH2}.
Now, this can be written as a polynomial in $k$ which can be cast as~\eqref{eq:LSE_CDH}.
\end{proof}
\end{theorem}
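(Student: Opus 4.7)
The plan is to mirror the strategy already used for Theorem~\ref{thm:LUE}, reducing the orthogonal and symplectic combinations to known LUE expressions and then applying orthogonal polynomial identities. First I would invoke the inhomogeneous recursion from \cite[Theorem 3.5]{Cunden16b}, which expresses $S_k^{\R}(m,n)$ as a linear combination of two LUE moments of shifted size:
\[
S_k^{\R}(m,n)=\frac{3}{k-1}\bigl((m+n-k-1)Q^{\C}_k(m-1,n-1)-Q^{\C}_{k+1}(m-1,n-1)\bigr).
\]
Substituting the continuous dual Hahn representation~\eqref{eq:LUE_continuousDualHahn} for $Q^{\C}_k(m-1,n-1)$ and $Q^{\C}_{k+1}(m-1,n-1)$ gives $S_k^{\R}(m,n)$ as a two-term combination of continuous dual Hahn polynomials of degree $n-2$ evaluated at arguments $x^2$ and $(x-i)^2$ (with parameters $3/2,1/2,\alpha+1/2$).

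Next, I would apply the Forward Shift Operator for continuous dual Hahn polynomials~\cite[Eq. (9.3.7)]{Koekoek10}, which collapses exactly that two-term combination into a single continuous dual Hahn polynomial with the $a$-parameter shifted from $1/2$ to $3/2$, i.e.\ with parameters $(5/2,1/2,\alpha+1/2)$. A careful bookkeeping of the Gamma-function normalisation then produces the overall factor $-6/((n-2)!(m-2)!)$ in~\eqref{eq:LOE_CDH}. The Hahn version~\eqref{eq:LOE_CDH2} follows by rewriting the underlying $\,{}_3F_2$ and matching parameters with definition~\eqref{eq:Hahn}. As in Theorem~\ref{thm:LUE}, the identity is first established for positive integers $k$, and then extended to $k\in\C$ by Carlson's theorem, using the same growth bounds as in the LUE proof.

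For the symplectic combination I would exploit the LOE/LSE duality~\eqref{eq:dualityL}. Substituting $Q_k^{\Hh}(m,n)=(-1)^{k+1}2^{-1}Q_k^{\R}(-2m,-2n)$ into the definition of $S_k^{\Hh}(m,n)$ and collecting terms yields
\[
S_k^{\Hh}(m,n)=(-1)^{k}S_k^{\R}(-2m,-2n),
\]
so that~\eqref{eq:LSE_CDH2} is obtained from~\eqref{eq:LOE_CDH2} via the self-duality symmetry $(3-2\alpha)_{k-2}Q_{k-2}(-2n-2;2,2,-3+2\alpha)=(-1)^{k}(3+2\alpha)_{k-2}Q_{k-2}(2n-1;2,2,-3-2\alpha)$ of Hahn polynomials. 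Converting back to the continuous dual Hahn form via the hypergeometric identity gives~\eqref{eq:LSE_CDH}. The reciprocity law $k\to -1-k$ and the location of the zeros on the critical line $\operatorname{Re}(k)=-1/2$ are then immediate consequences of the Neretin-type orthogonality of the continuous dual Hahn polynomials with positive parameters (cf.\ Appendix~\ref{app:neoOP}), exactly as for the LUE in Theorem~\ref{thm:LUE}.

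The main obstacle I expect is the bookkeeping step: verifying that the specific two-term combination produced by the Cunden-Mezzadri-Simm-Vivo recursion is precisely the image of the Forward Shift Operator acting on a single continuous dual Hahn polynomial, with the prefactors $(m+n-k-1)$ and $-1$ conspiring to cancel all spurious terms. A secondary delicacy is the symplectic case, where the substitutions $m\mapsto -2m$, $n\mapsto -2n$, $\alpha\mapsto -2\alpha$ send the Hahn polynomial to negative integer parameters; one must then rely on the polynomial (not orthogonality) identities, since the measure-theoretic orthogonality is lost. Once these algebraic identifications are in place, the reciprocity and critical-line statements are formal consequences of the hypergeometric structure.
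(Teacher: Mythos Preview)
Your proposal is correct and follows essentially the same route as the paper's own proof: the inhomogeneous LOE--LUE recursion from \cite[Theorem~3.5]{Cunden16b}, substitution of the continuous dual Hahn form~\eqref{eq:LUE_continuousDualHahn}, collapse via the Forward Shift Operator~\cite[Eq.~(9.3.7)]{Koekoek10}, and then the LOE/LSE duality~\eqref{eq:dualityL} together with the Hahn-polynomial symmetry for the symplectic case. The only differences are cosmetic---you spell out the Carlson-theorem extension and the orthogonality argument for the zeros more explicitly than the paper does, and your remark about the $a$-parameter shifting ``from $1/2$ to $3/2$'' should read ``from $3/2$ to $5/2$'' (the first parameter in $S_{n-2}$), but this does not affect the argument.
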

\begin{rmk} As in the Gaussian case, there is a reflection formula under the transformation $2m+1\to -2m$ and $2n+1\to -2n$
\be
S^{\R}(2m+1,2n+1)=(-1)^{k}S^{\R}(-2m,-2n).
\ee
\end{rmk}
The S-L problem of continuous dual Hahn polynomials corresponds to the three term recursions (in $k$)
\begin{align}
(k+3)S^{\R}_{k+1}(m,n)&=(2k+1)(2n+\alpha-1)S^{\R}_{k}(m,n)+(k-2)(k^2-\alpha^2)S^{\R}_{k-1}(m,n)\\
(k+3)S^{\Hh}_{k+1}(m,n)&=(2k+1)(4n+2\alpha+1)S^{\Hh}_{k}(m,n)+(k-2)(k^2-4\alpha^2)S^{\Hh}_{k-1}(m,n)
\end{align}
which are similar to the Haagerup-Thorbj\o rnsen formula for the moments of the LUE. Writing $S^{\R}_{k}(m,n)$ and $S^{\Hh}_{k}(m,n)$ in terms of the moments we obtain the following five term recursions. To our knowledge these recursion formulae are new.
\begin{cor}
\begin{align}
(k+1)Q_k^{\R}(m,n)&=AQ_{k-1}^{\R}(m,n)+BQ_{k-2}^{\R}(m,n)+CQ_{k-3}^{\R}(m,n)+DQ_{k-4}^{\R}(m,n).
\end{align}
with
\begin{align}
A&=(4k-1)(m+n-1)\nonumber\\
B&=-(4k-6)(m+n-1)^2+(k-4) \left((k-2)^2-\alpha ^2\right)+(k+1) \left((2k-3)^2-\alpha ^2\right)\nonumber\\
C&=-(m+n-1)\left((2k-3)\left((2k-5)^2-\alpha^2\right)-(2k-8)\left((k-2)^2-\alpha^2\right)\right)\nonumber\\
D&=-(k-4)\left((k-2)^2-\alpha^2\right)\left((2k-7)^2-\alpha^2\right).\nonumber
\end{align}
and
\begin{align}
(k+1)Q_k^{\Hh}(m,n)&=AQ_{k-1}^{\Hh}(m,n)+BQ_{k-2}^{\Hh}(m,n)+CQ_{k-3}^{\Hh}(m,n)+DQ_{k-4}^{\Hh}(m,n).
\end{align}
with
\begin{align}
A&=(4k-1)(2m+2n+1)\nonumber\\
B&=-(4k-6)(2m+2n+1)^2+(k-4) \left((k-2)^2-4\alpha ^2\right)+(k+1) \left((2k-3)^2-4\alpha ^2\right)\nonumber\\
C&=-(2m+2n+1)\left((2k-3)\left((2k-5)^2-4\alpha^2\right)-(2k-8)\left((k-2)^2-4\alpha^2\right)\right)\nonumber\\
D&=-(k-4)\left((k-2)^2-4\alpha^2\right)\left((2k-7)^2-4\alpha^2\right).\nonumber
\end{align}
\end{cor}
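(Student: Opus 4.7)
The plan is to derive both five-term recursions by substituting the defining linear combinations of $S_{k}^{\R}$ and $S_{k}^{\Hh}$ into the three-term recursions displayed immediately above the statement (which themselves follow from Theorem~\ref{thm:LOE1} via the Sturm-Liouville problem for continuous dual Hahn polynomials). Since each $S_{k}$ is a linear combination of three consecutive moments $Q_{k-1},Q_{k},Q_{k+1}$, a three-term recursion in $S$ translates to a five-term recursion in $Q$.

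For the orthogonal case, I would first re-index the recursion
\[
(k+3)\,S^{\R}_{k+1}(m,n)=(2k+1)(m+n-1)\,S^{\R}_{k}(m,n)+(k-2)(k^{2}-\alpha^{2})\,S^{\R}_{k-1}(m,n)
\]
(using $2n+\alpha-1=m+n-1$) by shifting $k\mapsto k-2$, so that the highest $Q^{\R}$-index appearing on the left is exactly $k$:
\[
(k+1)\,S^{\R}_{k-1}=(2k-3)(m+n-1)\,S^{\R}_{k-2}+(k-4)\bigl((k-2)^{2}-\alpha^{2}\bigr)\,S^{\R}_{k-3}.
\]
Next, substitute the definition $S_{j}^{\R}=Q^{\R}_{j+1}-2(m+n-1)Q^{\R}_{j}-\bigl((2j-1)^{2}-\alpha^{2}\bigr)Q^{\R}_{j-1}$ (using the identity $1-\alpha^{2}+4j(j-1)=(2j-1)^{2}-\alpha^{2}$) for $j=k-1,k-2,k-3$, and collect the coefficients of $Q^{\R}_{k-i}$ for $i=0,\dots,4$. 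The leading coefficient is $(k+1)$; the coefficient of $Q^{\R}_{k-1}$ simplifies to $[2(k+1)+(2k-3)](m+n-1)=(4k-1)(m+n-1)=A$; and the coefficient of $Q^{\R}_{k-4}$ comes only from the last term of $S^{\R}_{k-3}$, giving $-(k-4)\bigl((k-2)^{2}-\alpha^{2}\bigr)\bigl((2k-7)^{2}-\alpha^{2}\bigr)=D$. The middle coefficients $B$ and $C$ are obtained by summing two or three contributions, and the pre-factored forms advertised in the statement (involving $(2k-3)^{2}-\alpha^{2}$ and $(2k-5)^{2}-\alpha^{2}$) are precisely what emerges.

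For the symplectic case, the cleanest route is to invoke the duality~\eqref{eq:dualityL}, $Q^{\Hh}_{k}(m,n)=(-1)^{k+1}2^{-1}Q^{\R}_{k}(-2m,-2n)$. Under $(m,n)\mapsto(-2m,-2n)$, one has $\alpha\mapsto-2\alpha$ and $m+n-1\mapsto-(2m+2n+1)$, so $\alpha^{2}\mapsto 4\alpha^{2}$ and $(m+n-1)^{2}\mapsto(2m+2n+1)^{2}$; the dependence of $A,B,C,D$ on $\alpha$ is only through $\alpha^{2}$, so the LOE recursion transforms into exactly the claimed LSE recursion. The alternating factors $(-1)^{k-i}$ produced by $Q^{\R}_{k-i}(-2m,-2n)=-2(-1)^{k-i+1}Q^{\Hh}_{k-i}(m,n)$ contribute a common factor $-2(-1)^{k}$ on each side, which cancels. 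Alternatively, one may repeat the direct substitution starting from the three-term recursion for $S^{\Hh}_{k}$, absorbing the factor of $2$ in its definition so that the effective replacements are $\alpha\to 2\alpha$ and $m+n-1\to 2m+2n+1$.

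The only obstacle is algebraic bookkeeping: five coefficients across three substitutions, each polynomial of degree up to three in $k$. The repeated use of the identity $1-\alpha^{2}+4j(j-1)=(2j-1)^{2}-\alpha^{2}$ makes the emergence of the factors $(2k-3)^{2}-\alpha^{2}$, $(2k-5)^{2}-\alpha^{2}$, $(2k-7)^{2}-\alpha^{2}$ in the LOE coefficients (and their doubled-$\alpha$ analogues for the LSE) transparent, so no real combinatorial or analytic difficulty arises.
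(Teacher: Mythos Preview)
Your approach is exactly the paper's: it derives the five-term recursions by substituting the definition of $S_k$ (a linear combination of three consecutive $Q$'s) into the three-term recursion for $S_k$ and collecting coefficients, and your bookkeeping for the LOE case is correct.

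One small correction in your duality shortcut for the LSE: the factors $(-1)^{k-i}$ coming from $Q^{\R}_{k-i}(-2m,-2n)=2(-1)^{k-i+1}Q^{\Hh}_{k-i}(m,n)$ are \emph{not} a common factor---they depend on $i$---so they do not simply cancel. What actually happens is that under $(m,n)\mapsto(-2m,-2n)$ the linear factor $m+n-1$ becomes $-(2m+2n+1)$, so the odd-in-$(m+n-1)$ coefficients transform as $A\mapsto -A_{\mathrm{LSE}}$ and $C\mapsto -C_{\mathrm{LSE}}$ while $B\mapsto B_{\mathrm{LSE}}$ and $D\mapsto D_{\mathrm{LSE}}$; these extra signs then pair with the alternating $(-1)^{-i}$ from the duality to produce the claimed LSE recursion. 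Your direct-substitution alternative avoids this subtlety and matches the paper's route.
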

Using methods similar to those in~\cite{Cunden16b} it is possible to write a recursion formula for the moments of the JOE. Denote
\begin{multline*}
S_k^{\R}(\alpha_1,\alpha_2,n)=(2k+4 - \alpha_1 - \alpha_2 - 2 n) (\alpha_1 + \alpha_2 + 2 ( n + k + 1))\Delta Q_{k+1}^{\R}(\alpha_1,\alpha_2,n)\\
+2 (\alpha_1 \alpha_2-\alpha_1 - \alpha_2  + \alpha_2^2 - 4 k (1 + k) - 2 n + 
   2 (\alpha_1 + \alpha_2) n + 2 n^2)\Delta Q_{k}^{\R}(\alpha_1,\alpha_2,n)\\
-  (\alpha_2^2 - (1 - 2 k)^2) \Delta Q_{k-1}^{\R}(\alpha_1,\alpha_2,n).
\end{multline*}
\begin{prop}[Recursion for moments of the JOE] Set $\alpha_1=m_1-n$ and $\alpha_2=m_2-n$. 
Then the differences of adjacent moments $\Delta Q_{k}^{\R}(\alpha_1,\alpha_2,n)$ of the JOE satisfy the following inhomogeneous three term recursion
\begin{multline}
S_k^{\R}(\alpha_1,\alpha_2,n)=\\
\frac{3}{k-1}\Bigl(\left( (\alpha_1 + \alpha_2) (\alpha_2 - k-1) + 2 ( \alpha_1 + \alpha_2 - k-1) n + 2 n^2\right)\Delta Q_{k}^{\C}(\alpha_1,\alpha_2,n-1)\\
- (\alpha_1 + \alpha_2 + 2 n) (\alpha_1 + \alpha_2  + 2 n- k -3)\Delta Q_{k+1}^{\C}(\alpha_1,\alpha_2,n-1)\Bigr),
\end{multline}
\end{prop}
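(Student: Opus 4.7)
The plan is to adapt Ledoux's integration-by-parts technique~\cite{Ledoux09} (developed for the GOE and extended to the LOE in~\cite{Cunden16b}) to the JOE. The strategy is to express $S_k^{\R}(\alpha_1,\alpha_2,n)$ linearly in terms of JUE difference moments at size $n-1$; by analogy with the combinations in Theorems~\ref{thm:GOE1} and~\ref{thm:LOE1}, the three-term combination on the left is chosen precisely so that all JUE contributions beyond $\Delta Q_k^{\C}(\alpha_1,\alpha_2,n-1)$ and $\Delta Q_{k+1}^{\C}(\alpha_1,\alpha_2,n-1)$ cancel.

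First, starting from
\[
\Delta Q_k^{\R}(\alpha_1,\alpha_2,n) = \int_0^1 x^k(1-x)\,\rho_n^{(1)}(x)\,dx,
\]
I would apply integration by parts on the joint eigenvalue density (or, equivalently, invoke an Adler-Forrester-Nagao-van Moerbeke type intertwiner) to convert the $\beta=1$ average into an average against the JUE one-point function $\rho_{n-1}^{(2)}$ with shifted parameters, plus boundary and derivative terms involving the Jacobi weight $w_1(x)=(1-x)^{\alpha_1}x^{\alpha_2}$.

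Next, I would simplify these corrections using the Pearson equation
\[
\frac{w_1'(x)}{w_1(x)} = \frac{\alpha_2}{x} - \frac{\alpha_1}{1-x},
\]
combined with the three-term recurrence and the Rodrigues-type differential identities for Jacobi polynomials. The $S_k^{\R}$ combination on the left is calibrated to annihilate all higher-order JUE moments produced in this expansion, leaving on the right-hand side only the two JUE differences at indices $k$ and $k+1$ (at size $n-1$). Finally, an application of the JUE three-term recurrence for $\Delta Q^{\C}$ established earlier in Section~\ref{sec:results} can be used to absorb any residual $\Delta Q_{k-1}^{\C}$ contribution if it arises.

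The main obstacle will be the coefficient bookkeeping: producing both the precise polynomial coefficients in the definition of $S_k^{\R}$ (quadratic in $\alpha_1,\alpha_2,n$) and the prefactor $3/(k-1)$ together with the specific polynomial multipliers of $\Delta Q_k^{\C}(\alpha_1,\alpha_2,n-1)$ and $\Delta Q_{k+1}^{\C}(\alpha_1,\alpha_2,n-1)$ demands several cancellations among terms arising from differentiating at $x$ and at $1-x$ separately. Good sanity checks are (i) direct verification for $k=0,1$ against the low-order moments $\Delta Q_0^{\R}$, $\Delta Q_1^{\R}$, and (ii) the Laguerre degeneration $x\mapsto x/\alpha_1$, $\alpha_1\to\infty$, under which the identity should reduce to the LOE recursion of Theorem~\ref{thm:LOE1}.
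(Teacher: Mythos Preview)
Your proposal is correct and follows essentially the same route as the paper: the paper itself does not give a detailed proof of this proposition but simply remarks that ``using methods similar to those in~\cite{Cunden16b} it is possible to write a recursion formula for the moments of the JOE,'' which is exactly the Ledoux-type integration-by-parts strategy (adapted from GOE to LOE in~\cite{Cunden16b}) that you outline. Your sketch is in fact more explicit than what the paper provides, and your proposed sanity checks (low-$k$ verification and the Laguerre degeneration) are sensible ways to control the coefficient bookkeeping you correctly identify as the main difficulty.
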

As in the Gaussian and Laguerre cases, the above recursion formula suggests that $S_k^{\R}(\alpha_1,\alpha_2,n)$, and not the moments themselves, have a nice polynomial property. This is the content of the next theorem whose proof goes along the same lines as the Gaussian and Laguerre cases. 
\begin{theorem} Set $k=ix-1/2$. The combination $S_k^{\R}(\alpha_1,\alpha_2,n)$ has a Wilson polynomial factor
\begin{multline}
S_k^{\R}(\alpha_1,\alpha_2,n)=(-1)^{n-1}\frac{6}{(n-2)!}\frac{(\alpha_1+n)(\alpha_1+n-1)(\alpha_1+\alpha_2+n)!}{(\alpha_2+n-2)!}\\\times\frac{(\alpha_2+k)!}{(\alpha_1+\alpha_2+k+2n-1)!} W_{n-2}\left(x^2;\;\frac{5}{2},\frac{1}{2},\alpha_2+\frac{1}{2},\frac{3}{2}-\alpha_1-\alpha_2-2n\right).
\end{multline}
In particular, $S_k^{\R}(\alpha_1,\alpha_2,n) ((\alpha_1+\alpha_2+k+2n-1)!/(\alpha_2+k)!)$ is a polynomial of degree $2(n-2)$ in $k$, invariant under the reflection $k\to-1-k$, with zeros on the vertical line $\operatorname{Re}(k)=-1/2$.
\end{theorem}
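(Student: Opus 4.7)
The plan is to follow the blueprint of the LOE proof (Theorem \ref{thm:LOE1}), using the inhomogeneous recursion in the preceding Proposition to express $S_k^{\R}(\alpha_1,\alpha_2,n)$ as a combination of two JUE moment differences at adjacent values of $k$, and then to combine these using the Forward Shift Operator for Wilson polynomials. Note that on the right-hand side of the recursion the JUE moments carry the size parameter $n-1$, so by Theorem \ref{thm:JUE} both $\Delta Q_k^{\C}(\alpha_1,\alpha_2,n-1)$ and $\Delta Q_{k+1}^{\C}(\alpha_1,\alpha_2,n-1)$ are, up to explicit Gamma-ratio prefactors, Wilson polynomials of degree $n-2$ in $x^2$ with parameters $(3/2,\,1/2,\,\alpha_2+1/2,\,5/2-\alpha_1-\alpha_2-2n)$, evaluated at $x$ and at the shifted argument $x-i$ respectively.

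First I would substitute these two Wilson representations into the recursion and collect the common Gamma factors. The key algebraic step is then to recognise the resulting combination as a divided difference of Wilson polynomials. The Forward Shift Operator for Wilson polynomials (see \cite[Sec.~9, Eq.~(9.1.6)]{Koekoek10}) has the form
\begin{equation*}
W_n(x^2;a,b,c,d)-W_n((x-i)^2;a,b,c,d)\,\propto\,x\,\bigl(\text{stuff}\bigr)\,W_{n-1}\!\left(x^2;\,a+\tfrac12,b+\tfrac12,c+\tfrac12,d+\tfrac12\right),
\end{equation*}
and applying it shifts each parameter by $1/2$ and drops the degree by one. This precisely turns $(3/2,1/2,\alpha_2+1/2,5/2-\alpha_1-\alpha_2-2n)$ into $(5/2,1/2,\alpha_2+1/2,3/2-\alpha_1-\alpha_2-2n)$ (up to one of the $+1/2$ shifts being absorbed by the ``$1/2$'' parameter via the symmetry of Wilson polynomials in their upper parameters) and reduces the degree from $n-1$ to $n-2$, matching the claimed form. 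A careful tracking of the prefactors $3/(k-1)$, the coefficients in the recursion, and the Gamma-factor ratios produces the overall constant $(-1)^{n-1}6(\alpha_1+n)(\alpha_1+n-1)(\alpha_1+\alpha_2+n)!/((n-2)!(\alpha_2+n-2)!)$.

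Having established the identity for nonnegative integer $k$, I would extend it to complex $k$ by Carlson's theorem, exactly as in the proof of Theorem \ref{thm:GUE}: both sides are analytic in a right half-plane and of order less than exponential, so agreement on the integers forces agreement everywhere. The symmetry properties are then immediate. Since Wilson polynomials $W_{n-2}(x^2;\cdot)$ are polynomials in $x^2$, and $x=i(-1/2-k)$ gives $x^2=-(k+1/2)^2$, the polynomial is invariant under $k\mapsto -1-k$. For the location of zeros, Wilson polynomials with parameters having positive real parts (or otherwise covered by Neretin's extension \cite{Neretin02}, as already invoked in the JUE case) are orthogonal on the critical line, hence their zeros in $x^2$ lie on the positive real axis, i.e.~$x$ is purely imaginary and $\operatorname{Re}(k)=-1/2$.

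The main obstacle will be bookkeeping: matching the explicit constants between the two Wilson terms after applying the Forward Shift Operator (in particular handling the shift $n\mapsto n-1$ in the fourth Wilson parameter, and the $3/(k-1)$ factor in the recursion cancelling against a factor of $(k+\alpha_2)/(k-1)$ coming from the Gamma ratios). A secondary subtle point is that the fourth Wilson parameter $3/2-\alpha_1-\alpha_2-2n$ is generically negative, so the ``orthogonality on the critical line'' step must be justified through Neretin's extension rather than the standard Wilson weight, exactly as already noted in the remark following Theorem \ref{thm:zetafinite}.
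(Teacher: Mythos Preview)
Your overall approach matches the paper's: the proof is stated to go ``along the same lines as the Gaussian and Laguerre cases,'' and you correctly identify that the inhomogeneous recursion expresses $S_k^{\R}(\alpha_1,\alpha_2,n)$ in terms of $\Delta Q_k^{\C}$ and $\Delta Q_{k+1}^{\C}$ at size $n-1$, which by Theorem~\ref{thm:JUE} are Wilson polynomials of degree $n-2$, and that a shift-type identity then collapses these to the single Wilson polynomial claimed. The extension via Carlson's theorem and the appeal to Neretin's orthogonality for the zeros are also exactly what the paper intends.

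One point you should straighten out before writing the details: the standard Forward Shift Operator for Wilson polynomials \cite[Eq.~(9.1.7)]{Koekoek10} takes a \emph{bare} difference $W_n(x^2)-W_n((x-i)^2)$ to $W_{n-1}$ with \emph{all four} parameters shifted by $+1/2$. But here the two Wilson terms enter with \emph{unequal} coefficients (coming from the recursion and from the extra factor $(k+1+\alpha_2)$ in the Gamma prefactor of $\Delta Q_{k+1}^{\C}$), the input degree is already $n-2$ (not $n-1$), and the target parameters $(5/2,1/2,\alpha_2+1/2,3/2-\alpha_1-\alpha_2-2n)$ differ from the input $(3/2,1/2,\alpha_2+1/2,5/2-\alpha_1-\alpha_2-2n)$ by $+1$ in the first and $-1$ in the fourth, with no degree drop. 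So the identity actually needed is a contiguous relation (raising one parameter while lowering another, same degree), not the FSO in the form you describe; this is also what the LOE proof really uses, despite the paper's loose citation of ``Forward Shift Operator.'' Once you track the coefficients correctly the combination does reduce to a single Wilson polynomial of the right shape, but your current parenthetical explanation (``absorbed by the $1/2$ parameter via symmetry'') is not the mechanism.
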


\subsection{Symplectic Ensembles} 
The goal of this section is to establish the following polynomial property for the moments of the symplectic ensembles. 
\begin{figure}
\centering
\begin{subfigure}{.5\textwidth}
  \centering
  \includegraphics[width=.8\linewidth]{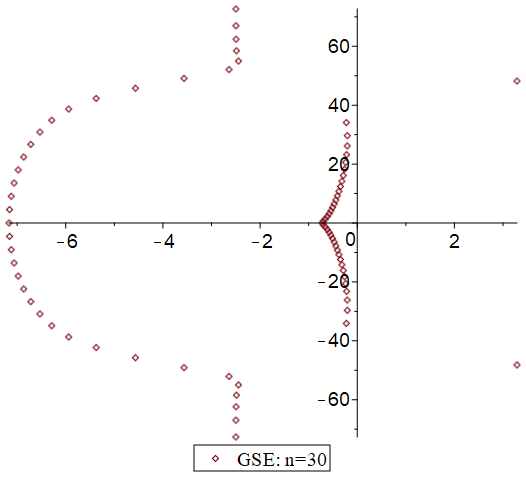}
\end{subfigure}
\begin{subfigure}{.5\textwidth}
  \centering
  \includegraphics[width=.8\linewidth]{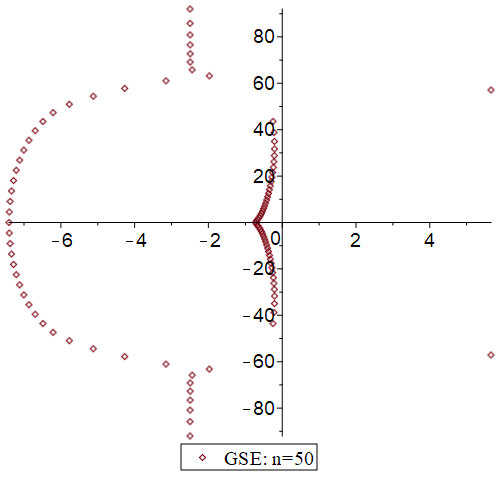}
\end{subfigure}
\caption{The $2(n-1)$ complex zeros of the symplectic polynomial $p^{\mathbb{H}}_{k}(n)$ for the GSE with $n=30$ (left) and $n=50$ (right).}
\label{fig:GSE}
\end{figure}
\begin{theorem}
The rescaled moments
\begin{equation}
p^{\mathbb{H}}_{n}(k) = 
\begin{cases} c_{n}\dfrac{1}{\Gamma(k+1/2)}Q^{\mathbb{H}}_{k}(n)& \mathrm{GSE}\\\\
c_{n,m}\dfrac{2^k}{\Gamma(2\alpha+k+2)}Q^{\mathbb{H}}_{k}(m,n)&  \mathrm{LSE}\\\\
c_{n,\alpha_1,\alpha_2}\dfrac{\Gamma(1+2\alpha_1+2\alpha_2+4n+k)}{\Gamma(2\alpha_1+k+2)}\Delta Q^{\mathbb{H}}_{k}(\alpha_1,\alpha_2,n)&  \mathrm{JSE}
\end{cases} \label{symppolys}
\end{equation}
are monic polynomials in $k$ of degree $2(n-1)$ in the GSE, and degree $4(n-1)$ in the LSE and JSE cases. The normalizing constants are
\begin{equation*}
\begin{cases}c_{n} = 2^{2(1-n)}\Gamma(2n)\sqrt{\pi}\\ 
c_{n,m} = \Gamma(2m)\Gamma(2n)\\ 
c_{n,\alpha_1,\alpha_2} = \dfrac{\Gamma(2n+2\alpha_1)\Gamma(2\alpha_2+4)\Gamma(2n)}{\Gamma(2\alpha_1+2\alpha_2+2n+2)\Gamma(2\alpha_2+2n+2)(2\alpha_2+2)}. 
\end{cases}
\end{equation*}
\label{thm:symp}
\end{theorem}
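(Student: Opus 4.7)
The proof relies on converting the three-term recurrences of Theorems~\ref{thm:GOE1} and \ref{thm:LOE1} (and the analogous JOE recurrence established just above Theorem~\ref{thm:symp}) into inhomogeneous recurrences for the normalized quantities $p_{n}^{\Hh}(k)$ defined in~\eqref{symppolys}. In each case the recurrence has the schematic form
\[
\alpha_{k}Q_{k+1}^{\Hh}+\beta_{k}Q_{k}^{\Hh}+\gamma_{k}Q_{k-1}^{\Hh}=S_{k}^{\Hh},
\]
with $\alpha_{k},\beta_{k},\gamma_{k}$ polynomial in $k$ of small degree, and with $S_{k}^{\Hh}$ known (by Theorems~\ref{thm:GOE1} and \ref{thm:LOE1}) to be a Meixner, continuous dual Hahn, or Wilson polynomial in $k$ times exactly the $\Gamma$-and-$2^{k}$ prefactor built into the definition of $p_{n}^{\Hh}(k)$.

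The plan is to substitute $Q_{k}^{\Hh}=(\text{prefactor})\cdot p_{n}^{\Hh}(k)$ into the recurrence and cancel the common Gamma and power-of-two factors, obtaining an inhomogeneous three-term recurrence of the shape $A(k)p_{n}^{\Hh}(k+1)+B(k)p_{n}^{\Hh}(k)+C(k)p_{n}^{\Hh}(k-1)=P_{n}(k)$ with all coefficients polynomial in $k$. I would then exhibit a polynomial particular solution of the expected degree $d$ (namely $2(n-1)$ for the GSE, $4(n-1)$ for the LSE and JSE) by the method of undetermined coefficients: the top-degree equation fixes $d$ and the leading coefficient, and triangular back-substitution determines all lower coefficients. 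The constants $c_{n}$, $c_{n,m}$ and $c_{n,\alpha_{1},\alpha_{2}}$ in~\eqref{symppolys} are chosen precisely so that this leading coefficient equals one, making $p_{n}^{\Hh}(k)$ monic.

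To conclude one must identify this polynomial solution with the actual random matrix moment. Since the recurrence determines $Q_{k}^{\Hh}$ from $Q_{0}^{\Hh}$ and $Q_{1}^{\Hh}$, it suffices to check these two initial values directly from the eigenvalue integrals, which is a short calculation. Alternatively, one can appeal to the orthogonal/symplectic dualities \eqref{eq:dualityG}--\eqref{eq:dualityJ} together with the known polynomial dependence of the orthogonal moments on the dimensional parameter at fixed integer~$k$. The extension to complex $k$ is then standard: a polynomial bound $|Q_{k}^{\Hh}|=O(e^{c|k|})$ in a right half-plane, derived exactly as in the proof of Theorem~\ref{thm:GUE}, lets Carlson's theorem upgrade the polynomial identity from non-negative integers to all $k\in\C$.

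The main technical obstacle is the anomalous degree reduction in the LSE and JSE cases: a naive count based on $\deg S_{k}^{\Hh}$ and the degrees of $A,B,C$ predicts a polynomial of degree $4(n-1)+2$ rather than $4(n-1)$. The top two equations in the undetermined-coefficient system must therefore be automatically compatible with the top two coefficients of $p_{n}^{\Hh}(k)$ vanishing, which amounts to a pair of nontrivial identities among the leading coefficients of the continuous dual Hahn and Wilson polynomials appearing in $S_{k}^{\Hh}$. These cancellations, absent in the Gaussian case, account for the somewhat elaborate form of the normalizing constants $c_{n,m}$ and $c_{n,\alpha_{1},\alpha_{2}}$, and isolate the only nonroutine step in the argument; everything else reduces to finite-dimensional linear algebra and a routine Carlson-type extension.
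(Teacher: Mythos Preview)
Your route via the inhomogeneous three-term recurrences of Theorems~\ref{thm:GOE1} and~\ref{thm:LOE1} is genuinely different from the paper's argument. The paper does not use these recurrences at all; instead it starts from an explicit closed formula for the symplectic moments taken from~\cite{Simm11}, namely
\[
Q_k^{\mathbb{H}}(n)=2^{-k-1}Q_k^{\mathbb{C}}(2n)-a_n\sum_{j=1}^{n}\sum_{i=0}^{n-j}\binom{k}{i}\binom{k}{i+j}(n-i-j+1)_{k-1/2},
\]
and observes directly that each summand, divided by $\Gamma(k+1/2)$, is a polynomial in $k$. The degree and leading coefficient are then read off from a termwise Stirling-type analysis, which is exactly where the cancellation from degree $2n-1$ down to $2(n-1)$ appears; the Laguerre and Jacobi cases are handled by structurally identical formulas from the same reference.

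Your plan has a genuine gap at the step ``$S_k^{\mathbb{H}}$ is a hypergeometric polynomial in $k$ times \emph{exactly} the $\Gamma$-and-$2^{k}$ prefactor built into $p_n^{\mathbb{H}}(k)$.'' This is not what Theorems~\ref{thm:GOE1} and~\ref{thm:LOE1} give. In the GSE case the prefactor in~\eqref{symppolys} is $\Gamma(k+1/2)$ alone, whereas Theorem~\ref{thm:GOE1} expresses $S_k^{\mathbb{H}}(n)$ as $(2k-1)!!=2^{k}\Gamma(k+1/2)/\sqrt{\pi}$ times a Meixner polynomial. Substituting $Q_k^{\mathbb{H}}=c_n^{-1}\Gamma(k+1/2)\,p_n^{\mathbb{H}}(k)$ into the definition of $S_k^{\mathbb{H}}$ and cancelling $\Gamma(k+1/2)$ produces on one side the polynomial $(2k+1)p_n^{\mathbb{H}}(k+1)-(16n+4)p_n^{\mathbb{H}}(k)-32k\,p_n^{\mathbb{H}}(k-1)$ and on the other side $2^{k}$ times a polynomial; these cannot coincide for all $k$, so no polynomial particular solution can be extracted by undetermined coefficients. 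The same mismatch occurs for the LSE, where $(k+2\alpha)!$ in Theorem~\ref{thm:LOE1} differs from $\Gamma(2\alpha+k+2)/2^{k}$ in~\eqref{symppolys} by a factor $2^{k}/(2\alpha+k+1)$, and similarly for the JSE. The recurrence therefore does not close into the polynomial-coefficient equation you describe, and an independent input of roughly the strength of the explicit Mezzadri--Simm formula is needed to establish the polynomial property.
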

\begin{proof}
We will discuss the Gaussian case in detail, as the Laguerre and Jacobi cases follow a similar pattern. By \cite[Eq. (33)]{Simm11}, we have the explicit formula
\begin{equation}
Q^{\mathbb{H}}_{k}(n) = 2^{-k-1}Q^{\mathbb{C}}_{k}(2n)-a_{n}\sum_{j=1}^{n}\sum_{i=0}^{n-j}{k \choose i}{k \choose i+j}(n-i-j+1)_{(k-1/2)} \label{GSEmoms}
\end{equation}
where $Q^{\mathbb{C}}_{k}(2n)$ denotes the moments of the GUE (see Section~\ref{sec:gaussian}) and 
\begin{equation*}
a_{n} = \frac{\Gamma(n+1)\Gamma(n)}{\sqrt{\pi}\Gamma(2n)4^{1-n}}.
\end{equation*}
Although formula \eqref{GSEmoms} was only stated in \cite{Simm11} for $k \in \mathbb{N}$, it naturally defines a meromorphic continuation to $k \in \mathbb{C}$, as follows. As a function of $k \in \mathbb{N}$, we have that $Q^{\mathbb{C}}_{2k}(2n)/(\Gamma(k+1/2))$ is a polynomial of degree $2n-1$ in $k$ (see equation \eqref{eq:GUE_Meixner-Pollaczek}) and hence is defined for any $k \in \mathbb{C}$. It remains to study the second term in \eqref{GSEmoms}. Note that
\begin{equation*}
\frac{(n-i-j+1)_{(k-1/2)}}{\Gamma(k+1/2)}
\end{equation*} 
is a polynomial of degree $n-i-j$, while ${k \choose i}$ and ${k \choose i+j}$ are polynomials of degree $i$ and $i+j$ respectively. Hence $Q^{\mathbb{H}}_{k}(n)/\Gamma(k+1/2)$ is a finite sum of polynomials in $k$ and is therefore a polynomial. To compute degrees, notice that the highest degree term in the summand of \eqref{GSEmoms} occurs when $2i+j+n-i-j = n+i$ is maximal, namely when $i=n-1$ implying a degree of $2n-1$. That the degree of the combined polynomials (\textit{i.e.} unitary plus symplectic contribution) is really $2n-2$ is a consequence of the following cancellation. Setting $j=1$ and $i=n-1$ in the summand of \eqref{GSEmoms} and dividing by $\Gamma(k+1/2)$ gives the polynomial
\begin{equation*}
a_{n}{k \choose n-1}{k \choose n}
\end{equation*}
Then Stirling's formula gives the estimate 
\begin{equation*}
a_{n}{k \choose n-1}{k \choose n} = \frac{k^{2n-1}}{(2n-1)!4^{1-n}}+O(k^{2n-2}), \qquad k \to \infty
\end{equation*}
Similarly, consider the complex moments
\begin{equation*}
\frac{1}{2\;\Gamma(k+1/2)}Q^{\mathbb{C}}_{k}(2n) = \frac{1}{\sqrt{\pi}}\sum_{j=0}^{2n-1}2^{j-1}{k \choose j}{2n \choose j+1} \label{complexmoms}
\end{equation*}
which has the same leading coefficient (setting $j=2n-1$) as 
\begin{equation*}
\frac{1}{\sqrt{\pi}}{k \choose 2n-1}2^{2n-2} = \frac{k^{2n-1}4^{n-1}}{(2n-1)!}+O(k^{2n-2}), \qquad k \to \infty
\end{equation*}
Hence the terms of order $k^{2n-1}$ in \eqref{GSEmoms} cancel, yielding a polynomial of degree $2n-2$. To compute the normalizing factor $c_{n}$ requires studying terms of order $k^{2n-2}$. This is a straightforward but tedious task and we omit the details. The only contributions to the monomial $k^{2n-2}$ come from \eqref{complexmoms} when $j=2n-1$ and $j=2n-2$, and from the double sum in \eqref{GSEmoms} with indices $(i,j) = (n-1,1)$ and $(i,j) = (n-2,1), (n-2,2)$. Then studying the asymptotics of these five terms as $k \to \infty$ with Stirling's formula gives the result. For the Laguerre and Jacobi cases, this computation can be repeated with the formulae \cite[Eq. (89) and Eq. (98)]{Simm11} which have an identical structure to~\eqref{GSEmoms} and is therefore omitted. 
\end{proof}
Below are the first few polynomials $p^{\mathbb{H}}_{n}(k)$ for the GSE, whose zeros appear to settle onto an explicit contour in the complex plane as $n$ becomes larger (see Fig.~\ref{fig:GSE}).
\begin{equation*}
\begin{split}
p^{\mathbb{H}}_{1}(k) &= 1\\
p^{\mathbb{H}}_{2}(k) &= k^{2}+5k+3\\
p^{\mathbb{H}}_{3}(k) &= k^{4}+10k^{3}+38k^{2}+41k+\frac{45}{2}\\
p^{\mathbb{H}}_{4}(k) &= k^{6}+15k^{5}+109k^{4}+393k^{3}+637k^{2}+735k+315.
\end{split}
\end{equation*}
\subsection{Orthogonal Ensembles}
In this section we will study the Mellin transform of the one-point correlation function $\rho_n^{(\beta)}(x)$ with $\beta=1$. One can expect this case to be more complicated in general, since now \eqref{eq:integrals} contains a non-analytic term (the absolute value of the Vandermonde determinant, which happened to be a polynomial in the cases $\beta=2$ and $\beta=4$). In the case of $n$ odd we are saved by a remarkable duality principle for the Mellin transform, relating the orthogonal and symplectic ensembles. This duality involves a simple correction term which is a single hypergeometric OP. 
\par
The case of $n$ even has a different analytic structure, evident already at $n=2$. Indeed, it was known since the beginnings of random matrix theory that the parity $n$ plays an important role for ensembles with orthogonal symmetry (see \cite[Chapter 6]{Mehta} for example or more recently \cite{ForMays09}), with most authors assuming $n$ to be even for simplicity. Here it is the converse, we describe the analytic structure for $n$ odd and give an explicit analytic continuation. First we need a proposition relating the orthogonal and symplectic ensembles.

\begin{prop}
\label{prop:dualdens}
Given the notation of Section \ref{sec:def}, let $p_{n}(x)$ denote the degree $n$ monic polynomial orthogonal with respect to the weight $w_{2}(x)$ on the interval $I$. Then the one-point eigenvalue density \eqref{dens} satisfies the following duality
\begin{equation}
\rho^{(1)}_{2n+1}(x) = 2\tilde{\rho}^{(4)}_{n}(x)+\frac{w_{1}(x)p_{2n}(x)}{\int_{I}w_{1}(t)p_{2n}(t)\,dt} \label{dualitydens}
\end{equation}
where $\tilde{\rho}^{(4)}(x)$ is the $\beta=4$ eigenvalue density with respect to the modified weights
\begin{equation}
\tilde{w}_{4}(x) = \begin{cases} e^{-x^{2}/2} & \text{Hermite}\\
x^{\alpha+1}e^{-x} & \text{Laguerre}\\
x^{\alpha_1+1}(1-x)^{\alpha_2+1} & \text{Jacobi}.
\end{cases} \label{modweight}
\end{equation}
\end{prop}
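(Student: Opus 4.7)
The plan is to derive the identity from the Pfaffian/quaternion--determinantal representation of the orthogonal ensemble correlation functions, following the framework of Adler, Forrester, Nagao and van Moerbeke~\cite{AFNvM}. I would begin by writing $\rho^{(1)}_{2n+1}(x)$ as the diagonal of the scalar kernel built from skew-orthogonal polynomials $\{R_j\}$ associated with the $\beta=1$ skew-inner product
\begin{equation*}
\langle f,g\rangle_1 \;=\; \iint_{I\times I}\operatorname{sgn}(y-x)\,f(x)g(y)\,w_{1}(x)w_{1}(y)\,dx\,dy.
\end{equation*}
Because the matrix dimension $2n+1$ is odd, this skew-orthogonal basis must be supplemented by one extra basis element which, by the standard construction, can be chosen as the monic $w_{2}$-orthogonal polynomial $p_{2n}(x)$ up to normalization.

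Next I would identify the bulk (paired) part of the kernel with the symplectic kernel at dimension $n$. The key claim is that the pairs $(p_{2j},p_{2j+1})$, $j=0,\ldots,n-1$, which are automatically skew-orthogonal for $\langle\cdot,\cdot\rangle_{1}$ on the span of polynomials up to degree $2n-1$, also form the skew-orthogonal pairs for the $\beta=4$ skew-inner product weighted by $\tilde w_{4}$. I would verify this by an integration by parts relating $\langle\cdot,\cdot\rangle_{1}$ (with weight $w_{1}$) to the symplectic skew-inner product (with weight $\tilde w_{4}$). For Hermite this is immediate because $w_{1}^{2}=w_{2}=\tilde w_{4}$; for Laguerre and Jacobi the boundary terms at the endpoints of $I$ either vanish or get absorbed into a shift of the weight parameters, which is precisely the shift $\alpha\mapsto\alpha+1$ (resp.\ $\alpha_{i}\mapsto\alpha_{i}+1$) built into the definition \eqref{modweight} of $\tilde w_{4}$.

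Granting the pairwise identification, evaluating the bulk kernel on the diagonal produces $2\tilde{\rho}^{(4)}_{n}(x)$: the factor $2$ arises because the $\beta=1$ scalar kernel receives contributions from both the $R_{2j}$ and $R_{2j+1}$ rows of the underlying $2\times 2$ matrix kernel, whereas the $\beta=4$ density only contains the analogous contribution once per skew pair. The unpaired extra basis element then contributes a rank-one term proportional to $w_{1}(x)p_{2n}(x)$, and the denominator $\int_{I} w_{1}(t)p_{2n}(t)\,dt$ is fixed by the total-mass constraint: since $\int_{I}\tilde{\rho}^{(4)}_{n}=n$ and $\int_{I}\rho^{(1)}_{2n+1}=2n+1$, the correction must integrate to $1$, which matches the stated normalization.

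The main obstacle is the skew-inner product identification in the Laguerre and Jacobi cases. The products $\langle\cdot,\cdot\rangle_{1}$ and $\langle\cdot,\cdot\rangle_{4,\tilde w_{4}}$ are not literally equal; they agree on the pairs $(p_{2j},p_{2j+1})$ only after one integration by parts, and the corresponding boundary terms at the endpoints of $I=\R_{+}$ or $I=[0,1]$ must be shown to vanish (or cancel appropriately between adjacent terms). This is exactly the computation where the Pearson identities satisfied by the classical weights enter, forcing the precise parameter shifts in \eqref{modweight}, and it is the only place where the classical structure of the weights is essential to the proof.
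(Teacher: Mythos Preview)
Your overall strategy---using the skew-orthogonal polynomial framework of \cite{AFNvM} and attempting to identify the $\beta=1$ kernel with the $\beta=4$ kernel plus a rank-one correction from the unpaired element---is sound in spirit, and indeed the paper's proof also proceeds via the results of \cite{AFNvM}. However, your central technical claim contains an error that blocks the argument as written.

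You assert that the pairs $(p_{2j},p_{2j+1})$ of monic $w_2$-orthogonal polynomials ``are automatically skew-orthogonal for $\langle\cdot,\cdot\rangle_1$.'' This is false for the classical weights. The skew-orthogonal polynomials for $\beta=1$ (and likewise for $\beta=4$) are \emph{not} simply the $p_j$: while one may take the even-indexed ones as $R_{2j}=p_{2j}$, the odd-indexed polynomials are nontrivial linear combinations, e.g.\ $R_{2j+1}=p_{2j+1}-c_j\,p_{2j-1}$ for explicit weight-dependent constants $c_j$ (this is the content of Theorem~4 in \cite{AFNvM}). Consequently the ``one integration by parts'' you propose cannot be carried out pairwise on $(p_{2j},p_{2j+1})$ to transmute $\langle\cdot,\cdot\rangle_1$ into $\langle\cdot,\cdot\rangle_{4,\tilde w_4}$; the matching of the $\beta=1$ and $\beta=4$ kernels is more delicate than a single boundary-term cancellation, and your explanation of the factor $2$ does not survive once the correct skew-orthogonal basis is used.

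The paper's proof avoids this pitfall by working not with the skew-orthogonal polynomials directly but with the already-simplified density formulae from \cite{AFNvM}: Eq.~(4.18) there for $\rho^{(1)}_{2n+1}$, Eq.~(4.12) for $\rho^{(1)}_{2n}$, and Eq.~(4.27) for $\tilde\rho^{(4)}_n$, all expressed in terms of the $p_j$ and the half-integrals $\tilde\phi_j(x)=\tfrac12\int_I w_1(y)\operatorname{sgn}(x-y)p_j(y)\,dy$. The reduction to \eqref{dualitydens} then rests on two concrete identities: first, the ratio $\gamma_{2n-2}\,\tilde s_{2n-2}/\tilde s_{2n}=\gamma_{2n-1}$, verified from explicit evaluations of $\tilde s_{2n}=\tfrac12\int_I w_1\,p_{2n}$; and second, an identity for $\gamma_{2n-2}\tilde\phi_{2n-2}-\gamma_{2n-1}\tilde\phi_{2n}$ which, after one differentiation in $x$, reduces to the three-term recurrence for the $p_j$ combined with the Pearson equation $(f w_1)'/(w_1)$ for the classical weight. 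Your instinct that ``the Pearson identities\ldots forcing the precise parameter shifts'' are the crux is correct, but in the actual proof they enter through this second identity on the $\tilde\phi_j$, not through a direct skew-orthogonality matching of $(p_{2j},p_{2j+1})$.
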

\begin{rmk}
We give a complete proof of Propostition \ref{prop:dualdens} in Appendix \ref{app:orthsymp}, which is based on the skew-orthogonal polynomial formalism developed in \cite{AFNvM}. In the specific case of the GOE, formula \eqref{dualitydens} was mentioned in \cite{Ledoux09}. Actually, the statement of Proposition~\ref{prop:dualdens} is implicit in Forrester's book~\cite[(6.120)-(6.122)]{Forrester_book}. It is worth emphasizing that this duality goes beyond the one-point function and can be formulated as a duality between the \textit{correlation kernels} of $n$-odd orthogonal and symplectic ensembles. This suggests a possibly simpler route to studying correlation functions of $n$-odd orthogonal ensembles, but this lies beyond the scope of the current investigation. 
\end{rmk}
We now study the consequences of the duality \eqref{dualitydens} for the Mellin transforms of the orthogonal ensembles. 
\begin{theorem}[Duality in the $n$-odd orthogonal ensemble]
\label{thm:duality}
In the three orthogonal ensembles the following identity holds for all $k \in \mathbb{C}$ and $n \in \mathbb{N}$:
\begin{align}
Q_k^{\R}(2n+1)&=2^{k+1}Q^{\Hh}_{k}(n)+4^{k}\Gamma\left(k+1/2\right)f_{k}(n) \label{eq:result_GOE}\\\nonumber\\
Q_k^{\R}(2m+1,2n+1)&=2^{k+1}Q^{\Hh}_{k}\left(m,n\right)+2^{k}\Gamma\left(k+m-n+1/2\right)f_{k}(m, n) \label{eq:result_LOE}\\\nonumber\\
Q_k^{\R}(2\alpha_1,2\alpha_2,2n+1)&=2Q^{\Hh}_{k}\left(\alpha_1,\alpha_2,n\right)+\dfrac{\Gamma\left(k+\alpha_1+1/2\right)}{\Gamma\left(k+\alpha_1+\alpha_2+2n+1\right)}f_{k}(\alpha_1,\alpha_2,n) \label{eq:result_JOE}\
\end{align}
where
\begin{equation*}
\begin{cases}f_{k}(n) =  c_{n}P^{(1/4)}_{n}(-i(k+1/4);\pi/2)\\\\
f_{k}(m, n)= c_{n,m}P^{(m-n+1/2)}_{2n}(-ik;\pi/2)\\\\ 
f_{k}(\alpha_1,\alpha_2,n) = c_{n,\alpha_1,\alpha_2}p_{2n}(-ik; \alpha_1+\frac{1}{2},-\alpha_1-\alpha_2-2n, \alpha_1+\frac{1}{2},-\alpha_1-\alpha_2-2n). 
\end{cases}
\end{equation*}
In each case, the $f_{k}$ is a hypergeometric orthogonal polynomial from the Askey scheme: The $P^{(\lambda)}_{n}(x,\phi)$ are the Meixner-Pollaczek polynomials, while $p_{n}(x;a,b,c,d)$ are the continuous Hahn polynomials, see \eqref{eq:ContinuousHahn}. The 
normalization constants are
\begin{equation*}
\begin{cases} c_{n} = \dfrac{i^{n}n!}{\Gamma\left(n+1/2\right)} \\
c_{n,m} = \dfrac{(-1)^{n}n!}{\Gamma\left(m+1/2\right)} \\ 
c_{n,\alpha_1,\alpha_2} = \dfrac{(-1)^{n}n!\Gamma\left(\alpha_{1}+1/2\right)\Gamma\left(n+\alpha_1+\alpha_2+1\right)}{\Gamma\left(n+\alpha_1+1/2\right)\Gamma\left(n+\alpha_2+1/2\right)}. 
\end{cases}
\end{equation*}
\label{thm:ortho}
\end{theorem}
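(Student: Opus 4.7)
The plan is to derive \eqref{eq:result_GOE}, \eqref{eq:result_LOE} and \eqref{eq:result_JOE} by integrating the density identity \eqref{dualitydens} of Proposition~\ref{prop:dualdens} against the appropriate monomial in $x$. Using \eqref{tracedens}, multiplying \eqref{dualitydens} by $x^{2k}$ in the Gaussian case (since $Q^{\R}_{k}(n)=\E\Tr X_{n}^{2k}$) or by $x^{k}$ in the Laguerre and Jacobi cases, and integrating over $I$, turns the left-hand side into $Q^{\R}_{k}(2n+1)$ or its analogue. The right-hand side splits into a symplectic piece $2\int_{I}x^{\bullet}\tilde\rho^{(4)}_{n}(x)\,dx$ and a remainder
\[
\frac{\int_{I}x^{\bullet}w_{1}(x)p_{2n}(x)\,dx}{\int_{I}w_{1}(t)p_{2n}(t)\,dt},
\]
which I would analyse separately.

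For the first piece a change of eigenvalue variable converts the modified symplectic density $\tilde\rho^{(4)}_{n}$ into the standard symplectic density $\rho^{(4)}_{n}$: in the Gaussian case $x=\sqrt{2}\,y$ maps $\tilde w_{4}(x)=e^{-x^{2}/2}$ to $w_{4}(y)=e^{-y^{2}}$ and gives $\tilde\rho^{(4)}_{n}(x)\,dx=\rho^{(4)}_{n}(y)\,dy$; together with $x^{2k}=2^{k}y^{2k}$ and the overall factor $2$ of \eqref{dualitydens} this produces the term $2^{k+1}Q^{\Hh}_{k}(n)$ in \eqref{eq:result_GOE}. The Laguerre case is analogous under the affine change $x=2y$, yielding the factor $2^{k+1}$ in \eqref{eq:result_LOE}. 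In the Jacobi case no rescaling is available (both densities are supported on $[0,1]$), and the identification $2\int_{0}^{1}x^{k}\tilde\rho^{(4)}_{n}(x)\,dx=2Q^{\Hh}_{k}(\alpha_{1},\alpha_{2},n)$ has to be obtained by matching the shifted parameter structure of $\tilde w_{4}$ in \eqref{modweight} to the standard JSE weight.

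For the remainder, $p_{2n}$ is by construction a rescaled classical Hermite, Laguerre or Jacobi polynomial, so the integrand $w_{1}(x)p_{2n}(x)$ coincides, up to an explicit constant, with the orthonormal wavefunction $\phi_{2n}$ of \eqref{eq:wavefunctions} in the Hermite and Laguerre cases, and with the ``modified'' Jacobi wavefunction $\tilde\phi_{2n}$ of \eqref{Jacobiwavemodweight} in the Jacobi case. Proposition~\ref{prop:Bump} (Hermite), Proposition~\ref{prop:Coffey} (Laguerre) and the continuous Hahn identity \eqref{conthahnjacobi} (Jacobi) then express the numerator of the remainder as an explicit Gamma-factor times the Meixner-Pollaczek or continuous Hahn polynomial $f_{k}$ appearing in the theorem. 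The denominator is the same Mellin transform evaluated at $k=0$ and is computed once in closed form using the generating functions of the classical orthogonal polynomials (e.g.\ $\sum_{n\geq 0}H_{n}(x)t^{n}/n!=e^{2xt-t^{2}}$ for the Gaussian case); after an application of the Gamma duplication formula the ratio collapses to the constants $c_{n}$, $c_{n,m}$, $c_{n,\alpha_{1},\alpha_{2}}$ in the statement.

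The identities obtained in this way hold a priori for $k$ in the half-plane of absolute convergence of the defining integrals. Both sides are then meromorphic of order one in $k$, so they extend to all $k\in\C$ by Carlson's theorem \cite[Theorem~2.8.1]{Andrews99}, exactly as in the proofs of Theorems~\ref{thm:GUE}, \ref{thm:LUE} and \ref{thm:JUE}. The main obstacle is not conceptual but lies in the careful bookkeeping of normalizations: tracking the rescaling between $\tilde\rho^{(4)}_{n}$ and $\rho^{(4)}_{n}$, the leading coefficient of the monic $p_{2n}$ relative to the orthonormal $\phi_{2n}$ (or $\tilde\phi_{2n}$), and the translation between the Mellin variable $s$ of Section~\ref{sec:extension} and the moment variable $k$ here. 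The Jacobi case is the most delicate, since $\tilde w_{4}$ is not an affine rescaling of $w_{4}$; using the shifted wavefunction $\tilde\phi_{2n}$ (rather than the orthonormal $\phi_{2n}$) is essential so that the associated Mellin transform lands on the continuous Hahn family rather than on the Wilson polynomials of Theorem~\ref{thm:JUE}.
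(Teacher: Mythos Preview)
Your proposal is correct and follows essentially the same route as the paper: integrate the density duality \eqref{dualitydens} against the appropriate monomial, rescale (or reparametrise, in the Jacobi case) to identify the symplectic term, and evaluate the remaining Mellin transform of $w_{1}p_{2n}$ as a Meixner--Pollaczek or continuous Hahn polynomial via the results of Bump, Coffey, and \eqref{conthahnjacobi}. The only superfluous step is the appeal to Carlson's theorem at the end: once the identity holds for $k$ in an open half-plane, ordinary analytic continuation of meromorphic functions suffices, and Carlson is not needed.
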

\begin{rmk}
The results~\eqref{eq:result_GOE}-\eqref{eq:result_JOE} combined with Theorem \ref{thm:symp} imply a polynomial property for the moments of the orthogonal ensemble, though not as cleanly as in the symplectic case. It is not possible to normalize $Q^{\mathbb{R}}_{k}(2n+1)$ and obtain a polynomial in $k$, unlike in the symplectic and unitary cases (\textit{e.g.} the second term in \eqref{eq:result_GOE} is always exponentially larger in $k$ than the first). 
\end{rmk}
\par
The content of Theorems~\ref{thm:GOE1} and~\ref{thm:LOE1} is that the combinations $S_{k}^{\R}(n)$ and $S_{k}^{\R}(m,n)$ of moments of the orthogonal ensembles (with fixed $n$) have hypergeometric orthogonal polynomial factors.
Putting together the dualities in the $n$-odd orthogonal ensembles of Theorem~\ref{thm:duality} and the classical duality between symplectic moments and formal orthogonal moments~\eqref{eq:dualityG}-\eqref{eq:dualityL}-\eqref{eq:dualityJ}, we find that the combinations of moments (with fixed $k$)
\begin{align*}
T_{k}^{\R}(n)&=Q_k^{\R}(2n+1)+(-2)^kQ_k^{\R}(-2n)\\
T_{k}^{\R}(m,n)&=Q_k^{\R}(2m+1,2n+1)+(-2)^kQ_k^{\R}(-2m,-2n)\\
T_{k}^{\R}(\alpha_1,\alpha_2,n)&=Q_k^{\R}(2\alpha_1,2\alpha_2,2n+1)+Q_k^{\R}(-2\alpha_1,-2\alpha_2,-2n)
\end{align*}
do have hypergeometric polynomial factors.
\begin{cor}
\begin{align}
T_{k}^{\R}(n)&=(2k-1)!!\;M_n\left(k;\;1/2,-1\right)\\
T_{k}^{\R}(m,n)
&=\frac{2^{m+n+k}}{\sqrt{\pi}}\frac{\Gamma(k+\alpha+1/2)}{(2n-1)!!(2m-1)!!}\; S_n((ik)^2;\;1/2,0,\alpha+1/2)
\end{align}
\begin{align}
T_{k}^{\R}(\alpha_1,\alpha_2,n)&=\dfrac{\Gamma\left(k+\alpha_1+1/2\right)\Gamma\left(\alpha_{1}+1/2\right)\Gamma\left(n+\alpha_1+\alpha_2+1\right)}{\Gamma\left(k+\alpha_1+\alpha_2+2n+1\right)\Gamma\left(n+\alpha_1+1/2\right)\Gamma\left(n+\alpha_2+1/2\right)}\nonumber\\
&\frac{(2(n+\alpha_2)-1)!!}{(2\alpha_2-1)!!(2n-1)!!}
(-1)^{n}W_n\left((ik)^2;\;\frac{1}{2},0, \alpha_1+\frac{1}{2},-\alpha_1-\alpha_2-2n\right).
\end{align}
\end{cor}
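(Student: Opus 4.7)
The plan is to combine the two dualities at our disposal, noting that the three $T_k^{\R}$ combinations have been deliberately engineered so that the symplectic contributions cancel, isolating the correction polynomial from Theorem~\ref{thm:duality}.

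First I would apply Theorem~\ref{thm:duality} to each of $Q_k^{\R}(2n+1)$, $Q_k^{\R}(2m+1,2n+1)$ and $Q_k^{\R}(2\alpha_1,2\alpha_2,2n+1)$, writing each as a sum of the ``naive'' symplectic moment $2Q_k^{\Hh}$ (scaled by powers of $2$) plus a Meixner--Pollaczek or continuous Hahn correction $f_k$. Then I would use the classical dualities~\eqref{eq:dualityG}--\eqref{eq:dualityJ} to express the prefactors $(-2)^k Q_k^{\R}(-2n)$, $(-2)^k Q_k^{\R}(-2m,-2n)$ and $Q_k^{\R}(-2\alpha_1,-2\alpha_2,-2n)$ in terms of $Q_k^{\Hh}$. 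A direct calculation of the resulting signs and powers of $2$ shows that the symplectic terms cancel exactly. What remains is precisely (a power of $2$ times) the $f_k$ correction supplied by Theorem~\ref{thm:duality}; this is the conceptual heart of the argument and is essentially a bookkeeping exercise once the two dualities are lined up.

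The second step is to identify these $f_k$'s with the Meixner, continuous dual Hahn and Wilson polynomials appearing in the statement. For the Gaussian case, I would use the standard identity $P_n^{(\lambda)}(x;\phi)=(2\lambda)_n e^{in\phi}/n!\;M_n(-\lambda+ix;\,2\lambda,e^{-2i\phi})$ already quoted in~\eqref{eq:Meixner-Pollaczek} (and used in Theorem~\ref{thm:GUE}bis). Specializing $\lambda=1/4$, $\phi=\pi/2$ and $x=-i(k+1/4)$ yields $-\lambda+ix=k$, $2\lambda=1/2$, $e^{-2i\phi}=-1$, so $P_n^{(1/4)}(-i(k+1/4);\pi/2)$ becomes (up to explicit constants) $M_n(k;\,1/2,-1)$; combining this with the collected prefactors (and, if needed, a Pfaff transformation of the terminating $_2F_1$ to absorb a $(-1)^n$) gives the Gaussian formula. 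For the Laguerre and Jacobi cases, the Meixner--Pollaczek and continuous Hahn polynomials appearing in Theorem~\ref{thm:duality} have \emph{even} degree $2n$ in a variable $x$ entering only through $x^2$, and they correspond to the symmetric-parameter instances $(a,b,a,b)$ of the continuous Hahn family; the quadratic reduction to Wilson polynomials is precisely the identity $p_{2n}(x,a,b,a,b)\propto W_n(x^2;\,0,1/2,a,b)$ established in the proof of Theorem~\ref{thm:duality}. An identical quadratic reduction produces the continuous dual Hahn polynomial $S_n((ik)^2;\tfrac{1}{2},0,\alpha+\tfrac{1}{2})$ from $P_{2n}^{(\alpha+1/2)}(-ik;\pi/2)$ in the Laguerre case.

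The remaining work is to match normalizations: the ratios of Gamma functions supplied by Theorem~\ref{thm:duality} (notably $\Gamma(k+1/2)$, $\Gamma(k+\alpha+1/2)$ and $\Gamma(k+\alpha_1+1/2)/\Gamma(k+\alpha_1+\alpha_2+2n+1)$) combine with the $c_n$, $c_{n,m}$, $c_{n,\alpha_1,\alpha_2}$ and with the leading constants of $M_n$, $S_n$, $W_n$ (expressed via Pochhammer symbols and the identity $(2k-1)!!=2^k\Gamma(k+1/2)/\sqrt{\pi}$) to reproduce the prefactors displayed in the corollary. I expect this last bookkeeping to be the most error-prone step of the proof, since the combinatorial factors from $c_{n,\cdot}$, from the duality, and from the quadratic reduction must all be tracked simultaneously; checking one small case (e.g.\ $n=1$ in each ensemble) is a natural sanity verification. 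Finally, the structural conclusions of the corollary (polynomial dependence on $k$, reflection symmetries and zeros on a critical vertical line) are inherited from the corresponding properties of the Askey-scheme polynomials, exactly as in Theorems~\ref{thm:GUE}, \ref{thm:LUE} and \ref{thm:JUE}.
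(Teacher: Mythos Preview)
Your approach is correct and is exactly the one the paper uses: the corollary is stated as an immediate consequence of combining the $n$-odd duality of Theorem~\ref{thm:duality} with the classical orthogonal/symplectic dualities~\eqref{eq:dualityG}--\eqref{eq:dualityJ}, so that the symplectic contributions cancel and only the hypergeometric correction term survives, which is then rewritten via the Meixner--Pollaczek/Meixner relation and the quadratic reductions to continuous dual Hahn and Wilson polynomials. One small correction: the identity $p_{2n}(x;a,b,a,b)\propto W_n(x^2;0,1/2,a,b)$ is established not in the proof of Theorem~\ref{thm:duality} but in the proof of the theorem on the Mellin transform of the modified Jacobi wavefunctions~\eqref{Jacobiwavemodweight} (and restated in the Remark following Theorem~\ref{thm:ortho}); otherwise your outline matches the paper.
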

\begin{proof}[Proof of Theorem \ref{thm:ortho}]
We multiply both sides of identity \eqref{dualitydens} by $x^{k}$ (or $|x|^{2k}$ for the GOE) and integrate over $I$. By the correspondence \eqref{tracedens} this gives
\begin{equation}
Q^{\mathbb{R}}_{k}(2n+1) = 2\tilde{Q}^{\mathbb{H}}_{k}(n)+\psi_{n}(k) \label{OEresult}
\end{equation}
where $\tilde{Q}^{\mathbb{H}}_{k}(n)$ are moments defined with respect to the modified weights \eqref{modweight}. Such moments are easily expressed in terms of the usual $Q^{\mathbb{H}}_{k}(n)$ by multiplying by $2^{k}$ (Hermite and Laguerre case) or by dividing the parameters $\alpha_{1}$ and $\alpha_2$ by $2$ (Laguerre and Jacobi cases). This gives the first terms in \eqref{eq:result_GOE}-\eqref{eq:result_JOE}. 

The correction $\psi_{n}(k)$ is a weighted Mellin transform of the corresponding orthogonal polynomial. In the JOE and LOE this takes the form
\begin{equation} 
\psi_{n}(k) = \frac{\int_{I}x^{k}w_{1}(x)p_{2n}(x)\,dx}{\int_{I}w_{1}(x)p_{2n}(x)\,dx}. \label{mnk}
\end{equation}
while for the GOE $x^{k}$ is replaced with $|x|^{2k}$. The integral~\eqref{mnk} can be computed explicitly by expanding $p_{2n}(x)$ as a sum and integrating term by term. This expansion turns out to be a terminating hypergeometric series which can be identified as one of the hypergeometric polynomials appearing in the claimed result. In fact, for the Gaussian and Laguerre cases, precisely this calculation is carried out in a different context in \cite{Coffey07}, so let us just explain the Jacobi case. Then the monic polynomials $p_{2n}(x)$ are proportional to the usual Jacobi polynomials which can be written down explicitly (see \textit{e.g.} \cite[Eq. 4.32]{Sze}). Integrating term by term in \eqref{mnk} gives
\begin{equation*}
\begin{split}
&\psi_{n}(k) = \frac{n!\Gamma(\frac{\alpha_{1}+\alpha_{2}}{2}+1+n)}{\Gamma(\frac{\alpha_{1}+1}{2}+n)\Gamma(\frac{\alpha_{2}+1}{2}+n)\Gamma(\frac{\alpha_{1}+\alpha_{2}}{2}+2n+k+1)}\\
&\times \sum_{i=0}^{2n}(-1)^{i}\binom{2n+\alpha_{2}}{2n-i}\binom{2n+\alpha_{1}}{i}\Gamma\left(\frac{\alpha_{2}+1}{2}+i+k\right)\Gamma\left(\frac{\alpha_{1}+1}{2}+2n-i\right).
\end{split}
\end{equation*}
The sum can be matched with a hypergeometric function and we obtain
\begin{equation*}
\psi_{n}(k)  = \frac{\eta_{n,\alpha_{1},\alpha_{2}}\Gamma\left(\frac{\alpha_{2}+1}{2}+k\right)}{\Gamma\left(1+\frac{\alpha_{1}+\alpha_{2}}{2}+k+2n\right)}\,{}_3 F_2\left(\begin{matrix}k+(\alpha_{2}+1)/2,-2n,-\alpha_{1}-2n \\\alpha_{2}+1,-(\alpha_{1}-1)/2-2n \end{matrix};1\right) \label{3F2Jacobi}
\end{equation*}
where
\begin{equation*}
\eta_{n,\alpha_{1},\alpha_{2}} = \frac{n!\Gamma(1+\frac{\alpha_{1}+\alpha_{2}}{2}+n)\Gamma(1+\alpha_{2}+2n)\Gamma(\frac{1+\alpha_{1}}{2}+2n)}{(2n)!\Gamma(1+\alpha_{2})\Gamma(\frac{1+\alpha_{1}}{2}+n)\Gamma(\frac{1+\alpha_{2}}{2}+n)}.
\end{equation*}
Finally, comparing \eqref{3F2Jacobi} with the definition of the continuous Hahn polynomial in \eqref{eq:ContinuousHahn} gives the result.
\end{proof}
\begin{rmk}
In the Gaussian and Laguerre cases, the evaluation of the integral \eqref{mnk} already appeared in the literature on special functions, see the work of Bump et al. \cite{Bump00}, Coffey et al.\cite{Coffey07, Coffey15}, though no connection to random matrix theory is made. These works show that the quantity \eqref{mnk} satisfies a functional equation and Riemann hypothesis with critical line $\mathrm{Re}(k)=-1/2$ (Hermite polynomials) and $\mathrm{Re}(k)=0$ (Laguerre polynomials) in our notation. We believe it is new that precisely these Mellin transforms should appear in the context of random matrices. The last and most complicated case of Jacobi appears to be absent from the literature. This turns out to be a continuous Hahn polynomial $p_{2n}(-ik;a,b,c,d)$ with $a=c>0$ and $b=d<0$. The analogous properties in this case are most easily proved by noticing that the continuous Hahn polynomial can be represented in terms of the Wilson polynomial with a negative fourth parameter. Explicitly, one has
\begin{multline*}
p_{2n}(-ik; \alpha_1+1/2,-\alpha_1-\alpha_2-2n, \alpha_1+1/2,-\alpha_1-\alpha_2-2n)=\\
\frac{(2\alpha_2+1)(2\alpha_2+3)\cdots(2\alpha_2+2n-1)}{(2n-1)!!\; n!}
W_n\left((ik)^2;\;\frac{1}{2},0, \alpha_1+\frac{1}{2},-\alpha_1-\alpha_2-2n\right)
\end{multline*}
This identity demonstrates that the Mellin transforms satisfy a symmetry on the line $\mathrm{Re}(k)=0$ (the polynomials are invariant under $k \to -k$). Furthermore, by the orthogonality property \eqref{neretin}, we can deduce that the zeros all lie on the imaginary axis (this does not seem to be obvious from the Hahn polynomial representation).
\end{rmk}
We now study the orthogonal ensemble with $n$ even. In this case the analytic structure of the Mellin transform seems to be more complicated and remains somewhat mysterious to us. For this reason we restrict ourselves to the Gaussian case, though analogous results for Laguerre and Jacobi could easily be derived. We are able to prove an analytic continuation of $Q^{\mathbb{R}}_{k}(2n)$ to an entire function of $k$ as in the previous sections, but with a more complicated structure. We first consider the simplest case $n=2$ where this structure already appears. Directly integrating $|x|^{2k}$ against the density \eqref{eq:integrals} with $\beta=1$ and a Gaussian weight gives
\begin{equation}
Q_{k}^{\mathbb{R}}(2) = \frac{2^{k}\Gamma(k+1/2)}{\sqrt{\pi}}+\frac{1}{\sqrt{2}}\int_{0}^{\infty}x^{2k}x\,\mathrm{erf}(x/2)e^{-x^{2}/4}\,dx. \label{q2}
\end{equation}
Clearly, the first term above has a similar structure to that already observed in the GSE and GUE. But the second term, which is a weighted Mellin transform of the error function, is different. It is analytic in the half-plane $\mathrm{Re}(2k)>-1$ and standard properties of Mellin transforms show that it extends as to an analytic function in the entire complex plane except for simple poles when $2k+1 \in \{-2,-4,-6,\ldots\}$. Since these simple poles are eliminated on dividing by $\Gamma(k+1/2)$, this gives an entire function of $k$. In fact this analytic continuation can be given in terms of a hypergeometric function:
\begin{align*}
Q_{k}^{\mathbb{R}}(2) &= 2^{2k+3/2}  {}_2 F_1\left(\begin{matrix}1/2,k+3/2 \\3/2 \end{matrix}; -1\right)\Gamma(k+3/2)/\sqrt{\pi}+\frac{2^{k}\Gamma(k+1/2)}{\sqrt{\pi}}\\
&=(2k-1)!!\;\left((2k+1)M_{k}(-1;\;3/2,1/2)+1\right).
\end{align*}
This hypergeometric function reduces to a polynomial whenever $k$ is a positive integer. But its analytic continuation to $k \in \mathbb{C}$ appears more complicated than in the previously considered cases. Indeed one has the asymptotics (see \textit{e.g.} \cite{Tem02}):
\begin{equation*}
\begin{split}
&{}_2 F_1\left(\begin{matrix}1/2,k+3/2 \\3/2 \end{matrix}; -1\right) \sim \frac{1}{2}\sqrt{\frac{\pi}{k}}, \qquad k \to +\infty,\\
&{}_2 F_1\left(\begin{matrix}1/2,k+3/2 \\3/2 \end{matrix}; -1\right) \sim -k2^{-3/2-k}, \qquad k \to -\infty.
\end{split}
\end{equation*}

For say $n=4,6,8,\ldots$ and so on, this structure persists and follows a similar pattern. As for the GSE, the results of \cite{Simm11} are again useful here, providing a general formula for the GOE moments:
\begin{equation}
Q^{\mathbb{R}}_{k}(2n) = Q^{\mathbb{C}}_{k}(2n-1)-\sum_{j=0}^{n-1}\sum_{i=0}^{n-j-1}{k \choose i}{k \choose i+j}\frac{(n-i-j)_{(k+1/2)}}{(n-j)_{(1/2)}}+A^{\mathbb{R}}_{k}(2n). \label{goeformula}
\end{equation}
where
\begin{equation*}
A^{\mathbb{R}}_{k}(2n) := c_{n}\int_{0}^{\infty}x^{2k}e^{-x^{2}/2}H_{2n-1}(x)\mathrm{erf}(x/\sqrt{2})\,dx.\label{akn}
\end{equation*}
The first two terms in \eqref{goeformula} have a simple analytic structure, similar to that found in the symplectic case. The term $A^{\mathbb{R}}_{k}(2n)$ is the generalization to larger $n$ of the second term in \eqref{q2}.
\begin{prop}
For any positive integer $n$, the ratio $Q^{\mathbb{R}}_{k}(2n)/\Gamma(k+1/2)$ has an analytic continuation to an entire function of $k$.
\end{prop}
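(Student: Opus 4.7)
The plan is to work directly from the explicit decomposition \eqref{goeformula}, namely
\[
Q^{\mathbb{R}}_{k}(2n) = Q^{\mathbb{C}}_{k}(2n-1)-\sum_{j=0}^{n-1}\sum_{i=0}^{n-j-1}\binom{k}{i}\binom{k}{i+j}\frac{(n-i-j)_{(k+1/2)}}{(n-j)_{(1/2)}}+A^{\mathbb{R}}_{k}(2n),
\]
and check the three pieces separately after division by $\Gamma(k+1/2)$. The first two pieces are elementary. By Theorem~\ref{thm:GUE} (in Meixner form), $Q^{\mathbb{C}}_{k}(2n-1)=(2n-1)(2k-1)!!\,M_{k}(2n-2;2,-1)$, and the identity $(2k-1)!!=2^{k}\Gamma(k+1/2)/\sqrt{\pi}$ shows that $Q^{\mathbb{C}}_{k}(2n-1)/\Gamma(k+1/2)$ equals $2^{k}/\sqrt{\pi}$ times a polynomial in $k$, which is entire. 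For the double sum, the binomial coefficients $\binom{k}{i}$ and $\binom{k}{i+j}$ are polynomials in $k$, and $(n-i-j)_{(k+1/2)}/\Gamma(k+1/2)=\Gamma(n-i-j+k+1/2)/(\Gamma(n-i-j)\Gamma(k+1/2))=(k+1/2)_{n-i-j}/\Gamma(n-i-j)$ is likewise a polynomial in $k$ (the indices satisfy $n-i-j\geq 1$); so this term also produces a polynomial in $k$.

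The core of the argument is the third piece $A^{\mathbb{R}}_{k}(2n)/\Gamma(k+1/2)$, which is a weighted Mellin transform. I plan to use standard Mellin analysis: writing $s=2k+1$, the transform $\int_{0}^{\infty}x^{s-1}\phi(x)\,dx$ of a smooth function $\phi$ on $[0,\infty)$ with rapid decay at infinity admits a meromorphic continuation whose only singularities are simple poles at $s=-p$ with residues equal to the Taylor coefficients of $\phi$ at the origin. Applied to $\phi(x)=e^{-x^{2}/2}H_{2n-1}(x)\operatorname{erf}(x/\sqrt{2})$, one observes that $\phi$ is a product of two odd functions times an even one, so its Taylor series at $0$ consists of even powers starting at $x^{2}$. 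Consequently the Mellin transform has simple poles precisely at $2k+1\in\{-3,-5,-7,\ldots\}$, i.e.\ at $k\in\{-3/2,-5/2,-7/2,\ldots\}$.

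The final observation is the matching of pole sets: $\Gamma(k+1/2)$ has simple poles at $k\in\{-1/2,-3/2,-5/2,\ldots\}$, which contains every pole of $A^{\mathbb{R}}_{k}(2n)$ and the additional point $k=-1/2$. Hence dividing by $\Gamma(k+1/2)$ (equivalently, multiplying by $1/\Gamma(k+1/2)$, an entire function with simple zeros at those points) cancels every singularity of $A^{\mathbb{R}}_{k}(2n)$ and introduces no new ones, so $A^{\mathbb{R}}_{k}(2n)/\Gamma(k+1/2)$ is entire. Combined with the two earlier pieces, this proves that $Q^{\mathbb{R}}_{k}(2n)/\Gamma(k+1/2)$ is entire.

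The routine parts are the first two terms, essentially polynomial in $k$ after the Gamma division. The main obstacle is the justification of the pole structure of $A^{\mathbb{R}}_{k}(2n)$: one must verify rigorously that all the poles come from the near-$0$ expansion (which is true because the integrand decays faster than any polynomial at infinity, thanks to the Gaussian factor $e^{-x^{2}/2}$) and that the Taylor expansion of $\phi$ indeed starts at $x^{2}$ with only even powers. This second point is the only subtlety, and it follows from the explicit parities of $H_{2n-1}$, $\operatorname{erf}$, and $e^{-x^{2}/2}$.
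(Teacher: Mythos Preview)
Your proof is correct and follows essentially the same approach as the paper's: treat the first two terms of \eqref{goeformula} as elementary (polynomial times $\Gamma(k+1/2)$, hence entire after division), and handle $A^{\mathbb{R}}_{k}(2n)$ via the standard Mellin continuation, using the parity of $\phi(x)=e^{-x^2/2}H_{2n-1}(x)\operatorname{erf}(x/\sqrt{2})$ to locate the simple poles, which are then cancelled by the zeros of $1/\Gamma(k+1/2)$. One small slip: you write ``$2k+1\in\{-3,-5,-7,\ldots\}$'' but then correctly conclude $k\in\{-3/2,-5/2,-7/2,\ldots\}$; the former should read $2k+1\in\{-2,-4,-6,\ldots\}$ (equivalently $2k\in\{-3,-5,-7,\ldots\}$), matching the even Taylor expansion of $\phi$ starting at $x^2$.
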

\begin{proof}
It is clear that the first two terms in \eqref{goeformula} yield a polynomial in $k$ after dividing by $\Gamma(k+1/2)$. The third term \eqref{akn} is the Mellin transform of the function $\phi_{n}(x) = e^{-x^{2}/2}H_{2n-1}(x)\mathrm{erf}(x/\sqrt{2})$ which is analytic in the right-half plane $\mathrm{Re}(2k)>-3$. To extend to the left-half plane $\mathrm{Re}(2k) \leq -3$ it suffices to notice that $\phi_n(x)$ has an asymptotic expansion near $x=0$ with respect to the sequence $\{x^{2k}\}_{k\geq1}$.
Therefore $A^{\mathbb{R}}_{2k}(2n)$ has an meromorphic continuation into the left-half plane except for simple poles when $2k+1 =-2,-4,-6,\ldots$. Precisely these poles are eliminated after dividing through by the factor $\Gamma(k+1/2)$.
\end{proof}

\section*{Acknowledgements}
The research of FDC and NO'C is supported by ERC Advanced Grant 669306. The research of FDC is partially supported by the Italian National Group of Mathematical Physics (GNFM-INdAM). FM acknowledges support from EPSRC Grant No. EP/L010305/1. NS acknowledges support from a Leverhulme Trust Early Career Fellowship ECF-2014-309. We would  like to thank Philippe Biane for helpful conversations at an earlier stage of this work, in particular for drawing our attention to the papers of Bump \textit{et al.}~\cite{Bump86,Bump00}.
We are grateful to Peter Forrester and Brian Winn for valuable remarks on the first version of the paper. We also thank the referees who indicated the papers of Mehta and Normand~\cite{Mehta01}, and Forrester and Witte~\cite{Forrester01} on duality relations for moments of characteristic polynomials of random matrices, and made us aware that the duality in Proposition~\ref{prop:dualdens} already appears in Forrester's book~\cite[Eq.~(6.120)-(6.122)]{Forrester_book}.

\appendix

\section{Orthogonal and symplectic ensembles: duality}
\label{app:orthsymp}
The purpose of this section is to prove Proposition~\ref{prop:dualdens} in the three classical ensembles. The proof is based on explicit results for the eigenvalue density of orthogonal and symplectic ensembles obtained by Adler et al.~\cite{AFNvM}. To begin with, we introduce the notation and relevant results obtained in \cite{AFNvM}. There, the notation $e^{-2V(x)}$ is equivalent to our $w_{2}(x)$ as in \eqref{eq:weights}. 

We begin by denoting by $p_{n}(x)$ the unique degree $n$ monic polynomial orthogonal with respect to $w_{2}(x)$ and we set
\begin{equation}
h_{n} = \int_{I}w_{2}(x)p_{n}(x)^{2}\,dx.
\end{equation}
An important quantity in the theory is the ratio
\begin{equation}
2V'(x) = \frac{g(x)}{f(x)}
\end{equation}
where $f$ and $g$ are polynomials of minimal degree, with $f \geq 0$. For the classical weights, this implies
\begin{equation}
f(x) = \begin{cases} 1 & \text{GOE}\\
x & \text{LOE}\\
x(1-x) & \text{JOE}
\end{cases}
\label{fpoly}
\end{equation}
Then we define \textit{modified potentials}
\begin{equation}
V_{1}(x) = V(x)+\frac{1}{2}\log f(x), \qquad V_{4}(x) = V(x)-\frac{1}{2}\log f(x)
\end{equation}
and eigenvalue densities $\tilde{\rho}_{n}^{(1)}(x)$ with respect to the weight $e^{-V_{1}(x)}$ for $\beta=1$ and $\tilde{\rho}_{n}^{(4)}(x)$ with respect to the weight $e^{-2V_{4}(x)}$ for $\beta=4$. We have that $e^{-2V_{4}(x)} = \tilde{w}_{4}(x)$ are precisely the modified weights \eqref{modweight}. On the other hand, $e^{-V_{1}(x)} = w_{1}(x)$ and so $\tilde{\rho}^{(1)}_{n}(x) = \rho^{(1)}_{n}(x)$.

The first result we need is \cite[Eq.~(4.18)]{AFNvM} which writes the density in the $n$-odd orthogonal ensemble as
\begin{equation}
\begin{split}
\rho^{(1)}_{2n+1}(x) &= \tilde{\rho}^{(1)}_{2n}(x)-\gamma_{2n-2}\tilde{s}_{2n-2}\frac{e^{-V_{1}(x)}}{\tilde{s}_{2n}}\left(\tilde{\phi}_{2n}(x)p_{2n-1}(x)-p_{2n}(x)\tilde{\phi}_{2n-1}(x)\right)\\
&+\frac{e^{-V_{1}(x)}p_{2n}(x)}{2\tilde{s}_{2n}}, \label{afndens}
\end{split}
\end{equation}
where we define
\begin{align}
\tilde{s}_{n} &= \frac{1}{2}\int_{I}e^{-V_{1}(x)}p_{n}(x)\,dx\\
\tilde{\phi}_{j}(x) &= \frac{1}{2}\int_{I}e^{-V_{1}(y)}\operatorname{sgn}(x-y)p_{j}(y)\,dy \label{phitilde}\\
\gamma_{n}h_{n} &= \begin{cases} 1 & \GOE,\\
\frac{1}{2} & \LOE,\\
\frac{1}{2}(2n+\alpha_1+\alpha_2+2) & \JOE.\end{cases} \label{gamn}\\
h_{n} &= \begin{cases} 
n!\sqrt{\pi}2^{-n} & \GOE\\
n!\Gamma(\alpha+n+1) & \LOE\\
\frac{\Gamma(\alpha_{1}+n+1)\Gamma(\alpha_{2}+n+1)\Gamma(n+1)\Gamma(\alpha_{1}+\alpha_{2}+n+1)}{\Gamma(\alpha_{1}+\alpha_{2}+2n+1)\Gamma(\alpha_{1}+\alpha_{2}+2n+2)} & \JOE
\end{cases} \label{hn}
\end{align}
We also recall the classical identity (see \textit{e.g.} \cite[Chap. 5]{Mehta})
\begin{equation}
\rho^{(2)}_{n}(x) = e^{-2V(x)}\sum_{j=0}^{n-1}\frac{p_{j}(x)^{2}}{h_{j}}. \label{unitarydensapp}
\end{equation}

The integrals $\tilde{s}_{n}$ happen to be known for all positive integers $n$.
\begin{lem}
For all three classical weights and any positive integer $n$, we have $\tilde{s}_{2n-1}=0$ and
\begin{equation}
\tilde{s}_{2n} = \begin{cases} 
\sqrt{2\pi}\frac{(2n)!}{4^{n}n!} & \GOE\\
2^{\frac{\alpha-1}{2}}\Gamma\left(n+\frac{\alpha_1+1}{2}\right) & \LOE\\
2^{\alpha_{1}+\alpha_{2}}16^{n}\frac{\Gamma\left(n+\frac{1}{2}\right)\Gamma\left(n+\frac{\alpha_{1}+\alpha_{2}+1}{2}\right)\Gamma\left(\frac{\alpha_{1}+1}{2}+n\right)\Gamma\left(\frac{\alpha_{2}+1}{2}+n\right)}{\pi \Gamma\left(\alpha_{1}+\alpha_{2}+4n+1\right)} & \JOE.
\end{cases}
\end{equation}
\label{lem:sn}
\end{lem}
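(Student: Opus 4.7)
The plan is to evaluate the three integrals $\tilde s_n=\tfrac12\int_I e^{-V_1(x)}p_n(x)\,dx$ separately, one for each classical weight, using the Rodrigues and generating-function apparatus attached to the corresponding family of classical orthogonal polynomials. In each case the vanishing of $\tilde s_{2n-1}$ and the closed form for $\tilde s_{2n}$ will come out of the same calculation.

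\textbf{Gaussian.} Here $e^{-V_1(x)}=e^{-x^2/4}$ is even and the monic polynomials are the probabilists' Hermite polynomials $He_n$, which satisfy $He_n(-x)=(-1)^nHe_n(x)$; hence $\tilde s_{2n-1}=0$ is automatic. For the even case the fastest route is to multiply the bilateral generating identity $\sum_{n\ge0}He_n(x)t^n/n!=e^{xt-t^2/2}$ by $e^{-x^2/4}$, complete the square in the Gaussian, and read off the coefficient of $t^{2n}$ in the resulting $2\sqrt\pi\,e^{t^2/2}$.

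\textbf{Laguerre.} Here parity is no longer available, but the generating function $\sum_{n\ge0}L_n^{(\alpha)}(x)t^n=(1-t)^{-\alpha-1}e^{-xt/(1-t)}$ handles both claims at once. Multiplying by $x^{(\alpha-1)/2}e^{-x/2}$ and integrating in $x$ reduces to a single Gamma integral, and the factors of $(1-t)$ collapse to leave
\[
\sum_{n\ge0}t^n\!\int_0^\infty x^{(\alpha-1)/2}e^{-x/2}L_n^{(\alpha)}(x)\,dx \;=\; \Gamma\!\left(\tfrac{\alpha+1}{2}\right)2^{(\alpha+1)/2}(1-t^2)^{-(\alpha+1)/2},
\]
a function of $t^2$ alone. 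This yields $\tilde s_{2n-1}=0$, and a generalized binomial expansion together with the conversion from $L_{2n}^{(\alpha)}$ to monic normalization gives the stated $\tilde s_{2n}$.

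\textbf{Jacobi.} This is the case I expect to be the main obstacle, since no generating function of Jacobi polynomials integrates against the weight $(1-x)^{(\alpha_1-1)/2}x^{(\alpha_2-1)/2}$ to something as clean as $(1-t^2)^{-\gamma}$. My approach is to start from the Rodrigues representation
\[
p_n(x)\;\propto\;(1-x)^{-\alpha_1}x^{-\alpha_2}\frac{d^n}{dx^n}\!\bigl[(1-x)^{\alpha_1+n}x^{\alpha_2+n}\bigr],
\]
integrate $\tilde s_n$ by parts $n$ times, expand by Leibniz and evaluate the resulting pieces as Beta integrals; this produces a well-poised terminating ${}_3F_2$ at argument $1$. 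The vanishing at odd $n$ and the explicit evaluation at even $n$ should then both follow from Watson's summation theorem. Equivalently, the substitution $x=\sin^2\theta$ turns the integral into $2\int_0^{\pi/2}(\cos\theta)^{\alpha_1}(\sin\theta)^{\alpha_2}P_n^{(\alpha_1,\alpha_2)}(\sin^2\theta)\,d\theta$, in which form the reduction to Watson is transparent. The hard part will be the bookkeeping: matching the output of Watson with the displayed Gamma ratio requires a final pass through the Legendre duplication formula, and one must confirm that Watson's identity does force the odd-$n$ sum to vanish for generic $\alpha_1\neq\alpha_2$, where a pure parity argument fails.
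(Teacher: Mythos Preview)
Your proposal is correct, and it is considerably more detailed than the paper's own proof of this lemma, which consists only of citations: the paper quotes \cite[Sec.~4]{AFNvM} for the GOE case (symmetry plus a known evaluation) and \cite[A.2 and A.7]{NF95} for the LOE and JOE cases, noting only that Nagao and Forrester derived those ``based on evaluations in terms of hypergeometric functions.''

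Your Hermite computation via the exponential generating function is the standard one and matches what is in \cite{AFNvM}. Your Laguerre argument is a clean self-contained alternative: the collapse of $(1-t)^{-\alpha-1}$ against the Gamma integral to produce $(1-t^2)^{-(\alpha+1)/2}$ simultaneously gives the odd-index vanishing and the even-index value, with no need for a separate hypergeometric identification. For Jacobi, your plan is essentially what is carried out in \cite{NF95}, made precise: writing the integral as a terminating ${}_3F_2$ and summing it. It is worth recording that the ${}_3F_2$ you obtain is \emph{exactly} of Watson form. With the standard $[-1,1]$ Jacobi conventions one gets
\[
{}_3F_2\!\left(\begin{matrix}-n,\;\alpha_1+\alpha_2+n+1,\;\tfrac{\alpha_1+1}{2}\\[1mm]\alpha_1+1,\;\tfrac{\alpha_1+\alpha_2}{2}+1\end{matrix};1\right),
\]
and here $(a+b+1)/2=\tfrac{\alpha_1+\alpha_2}{2}+1$ and $2c=\alpha_1+1$ on the nose. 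Watson's evaluation then carries the factor $\Gamma\!\bigl((1-n)/2\bigr)$ in the denominator, whose pole at nonpositive integers forces the odd-$n$ vanishing you were worried about (no parity needed), and for even $n$ the duplication formula indeed rearranges the output into the displayed Gamma ratio. So the ``hard part'' you flag is genuinely just bookkeeping; there is no conceptual gap.
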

\begin{proof}
For the GOE case the fact that $\tilde{s}_{2n-1}=0$ follows from symmetry and the formula for $\tilde{s}_{2n}$ is in \cite[Sec. 4]{AFNvM}. In the LOE and JOE cases these facts are less obvious, but were derived by Nagao and Forrester in \cite[A.2 and A.7]{NF95} based on evaluations in terms of hypergeometric functions.
\end{proof} 

Note that the expression in the second line of \eqref{afndens} corrects a typo in \cite[Eq. 4.18]{AFNvM}. The formula for $\rho^{(1)}_{2n}(x)$ in \eqref{afndens} is given in \cite[Eq. 4.12]{AFNvM} as 
\begin{equation}
\rho^{(1)}_{2n}(x) = \rho^{(2)}_{2n-1}(x)+\gamma_{2n-2}e^{-V_{1}(x)}p_{2n-1}(x)\phi_{2n-2}(x)
\end{equation}
On the other hand, formula \cite[Eq. 4.27]{AFNvM} writes the density in the symplectic ensemble as
\begin{equation}
\rho^{(4)}_{n}(x) = \frac{1}{2}\rho^{(2)}_{2n}(x)-\frac{1}{2}\gamma_{2n-1}e^{-2V(x)+V_{4}(x)}p_{2n}(x)\int_{I}e^{-2V(y)+V_{4}(y)}p_{2n-1}(y)\mathbbm{1}_{y>x}\,dy \label{sympdens}
\end{equation}
But by definition
\begin{equation}
e^{-2V(x)+V_{4}(x)} = e^{-V(x)-\frac{1}{2}\log f(x)} = e^{-V_{1}(x)}
\end{equation}
So
\begin{equation}
\tilde{\rho}^{(4)}_{n}(x) = \frac{1}{2}\rho^{(2)}_{2n}(x)-\frac{1}{2}\gamma_{2n-1}e^{-V_{1}(y)}p_{2n}(x)\int_{I}e^{-V_{1}(y)}p_{2n-1}(y)\mathbbm{1}_{y>x}\,dy
\end{equation}
Now we must demonstrate the relation \eqref{dualitydens}. 

\begin{proof}[Proof of Proposition \ref{prop:dualdens}]
The point of the proof is that \eqref{afndens} can be simplified considerably. There are two calculations required in the proof, which in the GOE case were carried out in \cite[Eqs 4.19 and 4.13]{AFNvM} respectively. The first claim is that the following identity holds in all three cases:
\begin{equation}
\gamma_{2n-2}\frac{\tilde{s}_{2n-2}}{\tilde{s}_{2n}} = \gamma_{2n-1}
\end{equation}
This can be verified by direct computation using the above explicit formulae for $\tilde{s}_{2n}$ and $\gamma_{n}$. Then formula \eqref{afndens} becomes
\begin{equation}
\begin{split}
\rho^{(1)}_{2n+1}(x) &=\rho^{(2)}_{2n-1}(x)+\gamma_{2n-2}e^{-V_{1}(x)}p_{2n-1}(x)\tilde{\phi}_{2n-2}(x)\\
&-\gamma_{2n-1}e^{-V_{1}(x)}\left(\tilde{\phi}_{2n}(x)p_{2n-1}(x)-p_{2n}(x)\tilde{\phi}_{2n-1}(x)\right)+\frac{e^{-V_{1}(x)}p_{2n}(x)}{2\tilde{s}_{2n}}, \label{afndens2}
\end{split}
\end{equation}
The second claim is the following identity
\begin{equation}
e^{-V_{1}(y)}p_{2n-1}(y)\left(\gamma_{2n-2}\tilde{\phi}_{2n-2}(x)-\gamma_{2n-1}\tilde{\phi}_{2n}(x)\right) = e^{-V_{1}(y)-V_{1}(x)}f(x)\frac{p_{2n-1}(x)p_{2n-1}(y)}{h_{2n-1}} \label{phident}
\end{equation}
where $f(x)$ is given by \eqref{fpoly}. Setting $x=y$ and inserting it into \eqref{afndens2} gives the simplification
\begin{equation}
\rho^{(1)}_{2n+1}(x) =\rho^{(2)}_{2n}(x)+\gamma_{2n-1}e^{-V_{1}(x)}p_{2n}(x)\tilde{\phi}_{2n-1}(x)+\frac{e^{-V_{1}(x)}p_{2n}(x)}{2\tilde{s}_{2n}}, \label{afndens3}
\end{equation}
where we used the explicit form of the $\beta=2$ density \eqref{unitarydensapp} and that
\begin{equation}
e^{-2V_{1}(x)+\log f(x)} = e^{-2V(x)} = w_{2}(x).
\end{equation}
This now looks very similar to \eqref{sympdens}. Indeed, the proof is complete if we can check that $\tilde{\phi}_{2n-1}(x) = -\int_{I}w_{1}(y)p_{2n-1}(y)\mathbbm{1}_{y>x}\,dy$. Comparing with the definition \eqref{phitilde} we see that 
\begin{equation}
\tilde{\phi}_{2n-1}(x) = -\int_{I}w_{1}(y)p_{2n-1}(y)\mathbbm{1}_{y>x}\,dy+\frac{1}{2}\int_{I}w_{1}(y)p_{2n-1}(y)\,dy, \label{phind}
\end{equation}
but the second integral is $\tilde{s}_{2n-1}$ which is zero by Lemma \ref{lem:sn}. This immediately implies
\begin{equation}
\rho^{(1)}_{2n+1}(x) =2\tilde{\rho}^{(4)}_{n}(x)+\frac{e^{-V_{1}(x)}p_{2n}(x)}{2\tilde{s}_{2n}}
\end{equation}
as given in the statement of Proposition \ref{prop:dualdens}. It remains to check identity \eqref{phident}. Cancelling $e^{-V_{1}(y)}p_{2n-1}(y)$, it is equivalent to checking
\begin{equation}
\gamma_{2n-2}\tilde{\phi}_{2n-2}(x)-\gamma_{2n-1}\tilde{\phi}_{2n}(x) =f(x)e^{-V_{1}(x)}\frac{p_{2n-1}(x)}{h_{2n-1}}. \label{phident2}
\end{equation}
Differentiating both sides of \eqref{phident2} with respect to $x$ reduces the claim to
\begin{equation}
h_{2n-1}\gamma_{2n-2}p_{2n-2}(x)-h_{2n-1}\gamma_{2n-1}p_{2n}(x) = \frac{1}{w_{1}(x)}\frac{d}{dx}\left(f(x)w_{1}(x)p_{2n-1}(x)\right) \label{polyident}
\end{equation}
Now using standard differential identities for the classical orthogonal polynomials and some routine calculation shows that \eqref{polyident} is a consequence of the three term recurrence relation. 
\end{proof}
\section{Mellin transform}
\label{app:Mellin}
We summarise here some properties of the Mellin transform (and its extension).
The Mellin transform of  $f(x)$ is defined by the integral
\begin{equation}
\MM\left[f(x);s\right] = \int_0^\infty f(x)x^{s-1}dx,
\label{eq:app_Mellin}
\end{equation}
when it exists. We set $f^*(s) = \MM\left[f(x);s\right]$. 
\par
In general, the integral~\eqref{eq:app_Mellin} converges and defines a holomorphic function $f^*(s)$ only in a vertical strip $D$ of the complex plane. It turns out that, in the frequently occurring case where $f(x)$ is of rapid decay at infinity and has an asymptotic expansion $f(x)\sim \sum_{j=0}^{\infty}a_jx^{b_j}$ as $x\to0^+$ (as in all instances in this paper), the Mellin transform $f^*(s)$ has a meromorphic continuation to the whole complex plane with simple poles of residue $a_j$ at $s=-b_j$. For more details on meromorphic extensions of Mellin transforms see~\cite{Zagier06}.
\par
If the integral~\eqref{eq:app_Mellin} converges in the strip $D$, then the following relations hold:
\begin{align}
\MM\left[f^{(m)}(x);s\right] &= (-1)^m(s - m)_m f^*(s - m)&s-m&\in D\label{eq:Mellin_d}\\
\MM\left[x^mf(x);s\right] &= f^*(s +m)&s+m&\in D\label{eq:Mellin_m}\\
\MM\left[x^mf^{(m)}(x);s\right] &=(-1)^m(s)_mf^*(s)&s&\in D.\label{eq:Mellin_md}
\end{align} 
\par
Suppose that $f(s)$ and $g(s)$ have Mellin transforms $f^*(s)$ and $g^*(s)$, respectively, analytic in a vertical strip $D$ in the complex plane.  Take any $c\in D$.  Then
\begin{equation}
\label{eq:convolution_theorem}
\mathcal{M}\left[f(x)g(x);s\right] = \frac{1}{2\pi i}\int_{c -i\infty}^{c + i\infty} f^*(s - u)g^*(u)du. 
\end{equation} 
whenever the Mellin transfom of $(fg)(x)$ exists.
\section{Hypergeometric orthogonal polynomials}
\label{app:neoOP}
We report a few basic properties of some families of hypergeometric OP's. 
\subsection{Wilson}
The Wilson polynomials
are solutions of the discrete Sturm-Liouville problem~\cite[Section 9.1]{Koekoek10}
\be
B(x)y(x+i)-\left[B(x)+D(x)\right]y(x)+D(x)y(x-i)=n(n+a+b+c+d-1)y(x),
\label{eq:Wilson_SL}
\ee
where
\[
y(x)=W_n(x^2;a,b,c,d)
\]
and
\[
\left\{  \begin{array}{l@{\quad}cr} 
\displaystyle 
B(x)=\frac{(a-ix)(b-ix)(c-ix)(d-ix)}{2ix(2ix-1)}\\\\
\displaystyle
D(x)=\frac{(a+ix)(b+ix)(c+ix)(d+ix)}{2ix(2ix+1)}.
\end{array}\right.
\]
In this paper we have considered the less conventional situation when $a,b,c,1-d>0$. For this range of the parameters, Neretin~\cite[Section 3.3]{Neretin02} found the orthogonality relation
\begin{align}
&\frac{1}{2\pi}\int\limits_{\R_+}\left|\frac{\Gamma(a+ix)\Gamma(b+ix)\Gamma(c+ix)}{\Gamma(1-d+ix)\Gamma(2ix)}\right|^2 W_m(x^2;\;a,b,c,d)W_n(x^2;\;a,b,c,d)\de x\nonumber\\
&=\frac{a+b+c+d-1}{a+b+c+d+2n-1}
\frac{(a+b)_n(a+c)_n(a+d)_n(b+c)_n(b+d)_n(c+d)_n}{(a+b+c+d-1)_n}\nonumber\\
&\small\times\frac{\Gamma(a+b)\Gamma(a+c)\Gamma(b+c)\Gamma(1-a-b-c-d)}{\Gamma(1-a-d)\Gamma(1-b-d)\Gamma(1-c-d)}n!\;
\delta_{mn}, \label{neretin}
\end{align}
for $n,m<1-a-b-c-d$. 
\subsection{Continuous dual Hahn}
The continuous dual Hahn polynomials $S_n(x^2;a,b,c)$ can be found from the Wilson polynomials by dividing by $(a + d)_n$ and letting $d\to\infty$.
If $a$, $b$ and $c$ are positive, then~\cite[Section 9.3]{Koekoek10}
\begin{align}
\frac{1}{2\pi}\int\limits_{\R_+}\left|\frac{\Gamma(a+ix)\Gamma(b+ix)\Gamma(c+ix)}{\Gamma(2ix)}\right|^2 S_m(x^2;\;a,b,c)S_n(x^2;\;a,b,c)\de x\nonumber\\
=\Gamma(n+a+b)\Gamma(n+a+c)\Gamma(n+b+c)n!\;
\delta_{mn}.
\end{align}
The continuous dual Hahn polynomials are solution of the discrete Sturm-Liouville problem
\be
B(x)y(x+i)-\left[B(x)+D(x)\right]y(x)+D(x)y(x-i)=ny(x),
\ee
where
\[
y(x)=S_n(x^2;a,b,c)
\]
and
\[
\left\{  \begin{array}{l@{\quad}cr} 
\displaystyle 
B(x)=\frac{(a-ix)(b-ix)(c-ix)}{2ix(2ix-1)}\\\\
\displaystyle
D(x)=\frac{(a+ix)(b+ix)(c+ix)}{2ix(2ix+1)}.
\end{array}\right.
\]

\subsection{Meixner-Pollaczek}

The Meixner-Pollaczek polynomials
satisfy the orthogonality relation~\cite[Section 9.7]{Koekoek10} 
\begin{align}
\frac{1}{2\pi}\int_{\R}e^{(2\phi-\pi)x}|\Gamma(\lambda+ix)|^2P_m^{(\lambda)}(x;\phi)P_n^{(\lambda)}(x;\phi)\de x\nonumber\\
=\frac{\Gamma(n+2\lambda)}{(2\sin\phi)^{2\lambda}n!}\;
\delta_{mn},\quad\lambda>0\quad\text{and}\quad0<\phi<\pi,
\end{align}
and the Sturm-Liouville equation (set $y(x)=P_n^{(\lambda)}(x;\phi)$):
\be
e^{i\phi}(\lambda -i x)y(x+i)+2ix\cos\phi\; y(x)-e^{-i\phi}(\lambda +i x)y(x-i)=2i(n+\lambda)\sin\phi\;y(x).
\ee
\subsection{Meixner polynomials}
The Meixner polynomials
satisfy the orthogonality relation~\cite[Section 9.7]{Koekoek10} 
\begin{align}
\sum_{x=0}^{\infty}\frac{(\beta)_x}{x!}c^xM_m(x;\;\beta,c)M_n(x;\;\beta,c)=
\frac{n!}{(\beta)_nc^n(1-c)^{\beta}}\delta_{mn},\quad \text{$\beta>0$ and $0<c<1$}.
\end{align}

\end{document}